\definecolor{darkgreen}{rgb}{0,0.5,0}
\definecolor{darkblue}{rgb}{0,0,0.8}
\theoremstyle{theorem}
\newtheorem{theorem}{Theorem}[section]
\theoremstyle{lemma}
\newtheorem{lemma}[theorem]{Lemma}
\theoremstyle{corollary}
\newtheorem{corollary}[theorem]{Corollary}
\theoremstyle{claim}
\newtheorem{claim}[theorem]{Claim}
\theoremstyle{definition}
\theoremstyle{remark}
\newcommand{\ignore}[1]{}
\algnewcommand\algorithmicswitch{\textbf{switch}}
\algnewcommand\algorithmiccase{\textbf{case}}
\newcommand{\local}{$\mathsf{LOCAL}$\xspace}
\newcommand{\poly}{\operatorname{\text{{\rm poly}}}}
\newcommand{\polylog}{\operatorname{\text{{\rm polylog}}}}
\newcommand{\floor}[1]{\lfloor #1 \rfloor}
\newcommand{\dist}{\operatorname{dist}}
\newcommand{\outdeg}{\operatorname{outdeg}}
\newcommand{\shift}{\operatorname{\text{\tt shift}}}
\newcommand{\augment}{\operatorname{\text{\tt augment}}}
\newcommand{\merge}{\operatorname{\text{\tt merge}}}
\DeclareMathOperator{\E}{\mathbb{E}}
\renewcommand{\paragraph}[1]{\vspace{0.15cm}\noindent {\bf #1}:}
\newcommand{\FullOrShort}{short}
  \newcommand{\fullOnly}[1]{#1}
  \newcommand{\shortOnly}[1]{}
    \newcommand{\shortOnly}[1]{#1}
		\newcommand{\fullOnly}[1]{}
\begin{document}

\date{}

\title{Towards the Locality of Vizing's Theorem} 

\author{
Hsin-Hao Su\\
	\small Boston College \\
	\small suhx@bc.edu 
\and
Hoa T.~Vu\\
	\small Boston College \\
	\small vuhd@bc.edu
}

\maketitle


\begin{abstract} 
Vizing \cite{Vizing64} showed that it suffices to  color the edges of a simple graph using $\Delta + 1$ colors, where $\Delta$ is the maximum degree of the graph. However, up to this date, no efficient distributed edge-coloring algorithm is known for obtaining such coloring, even for constant degree graphs. The current algorithms that get closest to this number of colors are the randomized $(\Delta + \tilde{\Theta}(\sqrt{\Delta}))$-edge-coloring algorithm that runs in $\polylog(n)$ rounds by \cite{ChangHLPU18} and the deterministic $(\Delta + \polylog(n))$-edge-coloring algorithm that runs in $\poly(\Delta, \log n)$ rounds by \cite{GKMU18}.

We present two distributed edge-coloring algorithms that run in $\poly(\Delta,\log n)$ rounds. The first algorithm, with randomization, uses only $\Delta+2$ colors. The second algorithm is a deterministic algorithm that uses $\Delta+ O(\log n/ \log \log n)$ colors. Our approach is to reduce the distributed edge-coloring problem into an online and restricted version of balls-into-bins problem. If $\ell$ is the maximum load of the bins, our algorithm uses $\Delta + 2\ell - 1$ colors.  We show how to achieve $\ell = 1$ with randomization and $\ell = O(\log n / \log \log n)$ without randomization.    
\end{abstract}
  \thispagestyle{empty}

\newpage
\setcounter{page}{1}
\section{Introduction}

\paragraph{Problem description} Given a simple graph $G = (V,E)$, a $k$-edge-coloring is a mapping $\phi: E \to \{ 1, \ldots, k\}$ that maps each edge to a color where no two adjacent edges are mapped to the same color.

We study the edge-coloring problem in the distributed \textsf{LOCAL} model. In this model, vertices host processors and operate in synchronized rounds. In each round, each vertex sends a message of arbitrary size to each of its neighbors, receives messages from its neighbors, and performs local computations. The time complexity of an algorithm is defined to be the number of rounds used. In the end, each vertex produces its own answer. In the edge-coloring problem, the coloring of an edge $uv$ can either be computed by vertex $u$ or vertex $v$. 

The \textsf{LOCAL} model aims to investigate the {\it locality} of a problem. An $r$-round local algorithm in the \textsf{LOCAL} model implies that each vertex only uses information in its $r$-neighborhood to compute the answer, and vice versa. Therefore, a faster algorithm in the \textsf{LOCAL} model would mean that each vertex uses less local information to compute the answer.
 
Computing edge-coloring in the distributed setting has applications in scheduling problems of wireless networks \cite{ Ramanathan99, GDP08}. It is usually the case that the quality of the solution in these applications depends on the number of colors being used. Therefore, we hope to minimize the number of colors while keeping the coloring to be locally computable.

Let $\Delta$ denote the maximum degree of $G$. It takes at least $\Delta$ colors to color the edges of $G$, since the incident edges to a vertex must be colored with distinct colors. Vizing showed that the edges in every simple graph $G$ can be colored with $\Delta + 1$ colors \cite{Vizing64}. The best sequential algorithm for obtaining such a coloring runs in $O(\min(\Delta m \log n, m \sqrt{n \log n}))$ time \cite{Arjomandi82, GNKLT85}. However, in order to obtain a $(\Delta+1)$-edge-coloring in the \textsf{LOCAL} model, the only known method up to this date is a $O(\mathsf{diameter(G)})$ solution -- a leader collects the topology of the whole graph and then computes the answer.

\begin{displayquote}
``How close can we get to this [Vizing's edge-coloring], while remaining in polylogarithmic-time?''
\end{displayquote}

This is an open problem raised in \cite{FGK17}.  The followings are  the thresholds on the number of colors that have been encountered or tackled by existing algorithms.

\paragraph{$2\Delta - 1$ Threshold}
$(2\Delta - 1)$ is a natural threshold to be investigated because $(2\Delta - 1)$ is the minimum number of colors that can be obtained by the greedy algorithm. Panconesi and Rizzi \cite{panconesi-rizzi} gave a deterministic $(2\Delta - 1)$-edge-coloring algorithm that runs in $O(\Delta + \log^{*} n)$ rounds.

The $(2\Delta - 1)$-edge coloring problem translates to the $(\hat{\Delta} + 1)$-vertex-coloring problem on its line graph, $L(G)$, where $\hat{\Delta} = 2\Delta - 2$ is an upper bound on the maximum degree of $L(G)$. There have been extensive studies on the $(\Delta+1)$-vertex-coloring problem \cite{lub86,alon86, Joh99,GP87,GPS88,linial92,panc92, Kuhn2006On,BEK09,bar15a, BEPS16,FraigniaudHK16,HSS18,ChangLP18,BEG18}. All the results can be applied to obtain a $(2\Delta - 1)$-edge-coloring.

Getting a $\polylog(n)$-rounds {\it deterministic} algorithm for $(2\Delta - 1)$-edge-coloring had been a major open problem. After the progress made in \cite{GS17,GHKMSU17} for $(2+o(1))\Delta$-edge-coloring, the problem was settled recently by \cite{FGK17} and later improved in \cite{GHK18}.

\paragraph{$\Delta + \tilde{\Theta}(\sqrt{\Delta})$ Threshold}
Based on a randomized approach, Panconesi and Srinivasan \cite{panconesi1997randomized} first gave an algorithm that uses $1.6\Delta + O(\log^{1+\delta} n)$ colors and runs in $O(\log n)$ rounds w.h.p. \footnote{W.h.p.~denotes with high probability, which means with probability at least $1 - 1/n^c$ for some constant $c$.}, where $\delta > 0$ is a constant. Later, Dubhashi, Grable, and  Panconesi \cite{dubhashi1998near} gave an algorithm that uses $(1+\epsilon)\Delta$ colors and runs in $O(\log n)$ rounds w.h.p.~ for constant $\epsilon$ and $\Delta > \log^{1+\delta} n$. Later, Elkin, Pettie, and Su \cite{EPS15} removed the requirement that $\Delta \geq \log^{1+\delta} n$ and showed that for constant $\epsilon$, there exists $\Delta_\epsilon$ such that for all $\Delta \geq \Delta_\epsilon$, a $(1+\epsilon)\Delta$-edge-coloring can be obtained w.h.p.~by solving $O(\log^{*} \Delta)$  Lovasz Local Lemma (LLL) instances in the distributed setting. An LLL instance can be computed in $\polylog(n)$ rounds in the distributed setting  \cite{chung2014LLL}.  This  line of research culminated in the work of Chang et al.~\cite{ChangHLPU18}, who showed that for {\it any} $\epsilon = \tilde{\Omega}(1/\sqrt{\Delta})$ (not necessarily a constant), it is possible to obtain a $(1+\epsilon)\Delta$-edge-coloring w.h.p.~by $O(\log(1/\epsilon))$ invocations of distributed LLL with an additive $\poly(\log \log n)$ rounds.

The algorithm of \cite{ChangHLPU18} allows one to find a $\Delta + \tilde{O}(\sqrt{\Delta})$-edge-coloring in $\polylog(n)$ rounds. However, this number of colors seems to be the limit of such randomized approach. An intuitive reason is because in these coloring algorithms, each edge is colored with a (almost) random color out of the $(1+\epsilon)\Delta$ colors. Once an edge is assigned a color, it is permanently colored. Consider a stage of the algorithm where an edge $uv$ is still uncolored but all the incident edges are colored. Let $M(x)$  denote the colors that are not assigned to any incident edges of $x$. The size of $M(u)$ and $M(v)$ are about $\epsilon \Delta$. Note that $M(u) \cap M(v)$ must be non-empty in order to color $uv$. The  $\epsilon\Delta$ colors in $M(u)$ and $M(v)$  are likely to be ``randomly'' sampled out of the $(1+\epsilon) \Delta$ colors. Therefore, the expected number of colors in $M(u) \cap M(v)$ is $O(\epsilon^2 \Delta)$. This implies $\epsilon$ has to be at least $\Omega(1/\sqrt{\Delta})$. 

Very recently, in the dynamic setting, Duan et al. \cite{DuanHZ19} gave a randomized $(1 +\epsilon) \Delta$-edge-coloring algorithm with $O(\poly(1/\epsilon,\log n))$ amortized update time when $\epsilon = \Omega(\sqrt{\Delta} / \log n)$. Incidentally, the number of colors they are able to obtain is also capped at the $\Delta + \Theta(\sqrt{\Delta})$ threshold.

\paragraph{$\Delta + \poly\log (n)$ Threshold} Recently, Ghaffari et al.~\cite{GKMU18} gave a $\poly((1/\epsilon), \log n)$-round deterministic algorithm that uses $(1+\epsilon)\Delta$ rounds provided that $\frac{\epsilon}{\log (1/\epsilon )} =\Omega(\log  n /\Delta)$. A corollary of the result is that it is possible to obtain $\Delta + O(\log n \cdot \log\left(2 + \frac{\Delta}{\log n} \right))$ colors in $\poly(\Delta, \log n)$ rounds. This breaks the $\Delta + \tilde{\Theta}(\sqrt{\Delta})$ barrier, provided that $\Delta = \omega(\log^{c} n)$ for a large enough $c$. However, when $\Delta$ is small, say in a constant degree graph, it is still unclear what the minimum possible number of colors is to color the graph in $\polylog(n)$ rounds.

\subsection{Our Results}
We show that by using $\poly(\Delta, \log n)$ rounds, the number of colors can be pushed down to $\Delta + 2$, which is merely just one more color than that in Vizing's Theorem.  

\begin{theorem}\label{thm:randomized}
There exists a randomized distributed $(\Delta + 2)$-edge-coloring algorithm that runs in $\poly(\Delta, \log n)$ rounds w.h.p. Furthermore, for any $\epsilon \geq (2/\Delta) $, this can be turned into a randomized distributed $ (1+\epsilon)\Delta$-edge-coloring  algorithm that runs in $\poly((1/\epsilon), \log n)$ rounds w.h.p.
\end{theorem}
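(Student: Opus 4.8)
The plan is to reduce $(\Delta+2)$-edge-coloring to the "online restricted balls-into-bins" problem that the abstract promises, and then to run the randomized ball placement that achieves maximum load $\ell = 1$, so that $\Delta + 2\ell - 1 = \Delta + 2$ colors suffice. First I would set up the reduction: process edges in a controlled order (e.g., via an $O(\Delta^2)$-coloring of the line graph, or via a low-out-degree acyclic orientation, to break symmetry in $\poly(\Delta,\log n)$ rounds using Linial-type colorings), and maintain for each edge $uv$ the palette $M(uv) = M(u) \cap M(v)$ of colors free at both endpoints. When I want to color $uv$ I must "throw a ball" — pick a color from $M(uv)$ — but doing so may block colors needed later by neighbors; the point of the balls-into-bins abstraction is exactly to bound how much damage this does. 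Crucially, unlike the random-palette approach criticized in the introduction, I must allow \emph{recoloring}: when $M(u)\cap M(v)=\emptyset$, invoke a Vizing-style augmenting path / fan rotation to locally shuffle colors and free up a common color. The extra slack from $\Delta+2$ rather than $\Delta+1$ is what makes these augmenting structures short — $O(\poly\log n)$ or even shorter — so they can be found and applied in the \textsf{LOCAL} model in $\poly(\Delta,\log n)$ rounds, and so that many of them can be applied in parallel after an independent-set-style selection that prevents two augmenting paths from interfering.

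Second, I would carry out the randomized load-$1$ analysis. The idea is that each uncolored edge, at the moment it is handled, chooses a uniformly random available color; a "collision" (two adjacent edges wanting the same color, or a color becoming unavailable) corresponds to a ball landing in an occupied bin, and the restricted balls-into-bins bound shows that with the $\Delta+2$ palette the process can be scheduled so every bin receives at most one ball, i.e.\ every edge is successfully colored, w.h.p. Concretely this means: partition the computation into $\poly(\Delta,\log n)$ phases; in each phase, a constant fraction (in expectation) of the currently-uncolored edges get permanently colored, using randomness to ensure that each such edge's chosen color is simultaneously free at both endpoints and not chosen by a conflicting neighbor; edges that fail are deferred and retried, and a standard "each edge survives a phase with probability $\le 1 - c$" argument plus a union bound over $\poly(\Delta,\log n)$ phases gives the w.h.p.\ termination. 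The LLL / distributed LLL machinery cited in the introduction (\cite{chung2014LLL,ChangHLPU18}) is the natural tool to guarantee the existence and distributed computability of a good assignment within each phase when naive independence fails near high-degree vertices.

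Third, for the "furthermore" clause I would observe that the same algorithm, when run with a palette of size $(1+\epsilon)\Delta \ge \Delta+2$, has strictly more slack, so every structural step (augmenting paths, fan rotations, the balls-into-bins scheduling) only becomes easier; the extra colors shorten the augmenting paths further, which is what lets the round complexity drop from $\poly(\Delta,\log n)$ to $\poly(1/\epsilon,\log n)$. Essentially one replaces the bound "augmenting structure has length $O(\poly\log n)$ because slack is $1$" by "length $O(\poly(1/\epsilon))$ because slack is $\epsilon\Delta$," and then caps the dependence on $\Delta$ by switching, once $\Delta$ is large relative to $1/\epsilon$, to the cheaper randomized-palette primitives (or to a degree-splitting recursion) that already work at that many colors. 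The constant $2/\Delta$ lower bound on $\epsilon$ is exactly the statement that $(1+\epsilon)\Delta \ge \Delta+2$, so the two regimes meet cleanly.

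\textbf{Main obstacle.} The hard part will be establishing that the augmenting / recoloring step is genuinely local: a single Vizing fan-plus-augmenting-path computation is inherently sequential along a path that can be $\Omega(n)$ long in the worst case, and the whole contribution of this paper is to show that with $\Delta+2$ (resp.\ $(1+\epsilon)\Delta$) colors one can (i) keep these paths short and (ii) apply a large batch of them in parallel without conflicts. Making (i) and (ii) precise — identifying the right random process so that the "bins" (colors at a vertex) receive load at most $1$, and proving the corresponding concentration with all the dependencies that edge-coloring introduces — is where the real work lies; everything else (the $O(\Delta^2)$ line-graph coloring to get an ordering, the deferral/retry phase structure, the union bound) is standard \textsf{LOCAL}-model bookkeeping.
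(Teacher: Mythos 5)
Your proposal misidentifies the core mechanism of the paper, and the misidentification would leave an unclosable gap. You hypothesize that the extra slack from a $(\Delta+2)$-palette makes the Vizing augmenting paths short (``$O(\polylog n)$ or even shorter''), so that they can be executed locally. The paper makes no such claim, and none is available: with only one spare color, an $\alpha\beta$-alternating path can genuinely have length $\Omega(n)$, and the introduction emphasizes precisely this as the obstacle. The actual resolution is orthogonal to ``short paths exist'': the algorithm \emph{truncates} every alternating path at length $T=\poly(\Delta,\log n)$, places a \emph{blocking edge} somewhere on the truncated prefix colored with a special color $\star$, and augments only up to the blocking edge. The extra two colors are not used to shorten anything; they pay for recoloring the subgraph induced by $\star$-edges at the end. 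If each vertex touches at most $\ell$ blocking edges, that subgraph has maximum degree $\ell$ and is recolored greedily with $2\ell-1$ fresh colors, giving $\Delta+1+(2\ell-1)=\Delta+2\ell$ total; for $\Delta+2$ one needs $\ell=1$, i.e.\ the blocking edges must form a matching.

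This also means your identification of the balls-into-bins instance is off. You take balls to be uncolored edges choosing colors, and bins to be ``colors at a vertex,'' so that a collision is two edges fighting for a color. In the paper the balls are the truncated alternating paths (those with $|P|>T$), the bins are the \emph{vertices}, each ball has $T$ candidate bins (the $T$ candidate positions for its blocking edge), and the load of a bin is the number of blocking edges incident to that vertex. The load-$1$ guarantee is obtained by choosing, for each path, a blocking edge uniformly at random from the edges of the prefix both of whose endpoints are still load-zero; the paper then proves by a coupling argument and a Chernoff bound with a union bound over all $n\Delta^{T-1}$ length-$T$ paths that a constant fraction of the prefix is always available, so the choice is always possible w.h.p. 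Your proposed scheduling (random color per edge, retry on collision, LLL) is the ``random palette'' approach the introduction explicitly argues is stuck at $\Delta+\tilde\Theta(\sqrt\Delta)$; the paper's whole point is to avoid it. LLL is never invoked in the paper's proof. (Your reading of the ``furthermore'' clause---degree-split into pieces of degree $O(1/\epsilon)$ and apply the base algorithm per piece, with $\epsilon\ge 2/\Delta$ ensuring $(1+\epsilon)\Delta\ge\Delta+2$---is the one part that matches the paper, via Lemma~\ref{lem:convert}.)
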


Chang et al.~\cite{ChangHLPU18} showed that any algorithm for $(\Delta + c)$-edge-coloring based on ``extending partial colorings by recoloring subgraphs'' needs $\Omega(\frac{\Delta}{c} \log \frac{cn}{\Delta})$ rounds. Our algorithms belongs to such category and so a polynomial dependency on $\Delta$ and $\log n$  is necessary.

Since the edge-coloring problem is locally-checkable, by using the derandomization result from \cite{GHK18}, we can convert the randomized algorithm to a deterministic algorithm with a $2^{O(\sqrt{\log n})}$ factor blow-up.

\begin{corollary}\label{cor:deterministic}
For any $\epsilon \geq (2/\Delta)$, there exists a deterministic distributed $(1+\epsilon)\Delta$-edge-coloring algorithm  that runs in $\poly((1/\epsilon))\cdot 2^{O(\sqrt{\log n})}$ time.
\end{corollary}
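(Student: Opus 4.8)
The plan is to derive \Cref{cor:deterministic} directly from \Cref{thm:randomized} by invoking the generic derandomization framework of \cite{GHK18}. The key observation is that $k$-edge-coloring is a \emph{locally checkable labeling} (LCL)-type problem in the sense required by that framework: given a candidate coloring, a vertex $v$ can verify in a single round whether the (at most $\Delta$) edges incident to it all received distinct colors from the palette $\{1,\dots,k\}$, and the global coloring is valid if and only if every vertex accepts. Hence the problem fits the hypothesis of the derandomization theorem of \cite{GHK18}, which states (roughly) that any randomized \local algorithm for a locally checkable problem that succeeds w.h.p.\ and runs in $T(n)$ rounds can be converted into a deterministic \local algorithm running in $T(n)\cdot 2^{O(\sqrt{\log n})}$ rounds (more precisely, the complexity becomes a polynomial in $T(n)$, $\Delta$, and $\log n$ times the network-decomposition overhead $2^{O(\sqrt{\log n})}$, which for the parameter regime here is absorbed into $\poly((1/\eps))\cdot 2^{O(\sqrt{\log n})}$).

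Concretely, first I would take the randomized $(1+\eps)\Delta$-edge-coloring algorithm guaranteed by the second part of \Cref{thm:randomized}, which runs in $\poly((1/\eps),\log n)$ rounds and requires $\eps \ge 2/\Delta$. Second, I would phrase edge-coloring as a problem on the line graph $L(G)$, or equivalently keep it on $G$ with the labels living on edges, and spell out the one-round local checker described above so that the derandomization machinery of \cite{GHK18} applies verbatim. Third, I would feed the randomized algorithm into that machinery; the output is a deterministic algorithm whose round complexity is the randomized round complexity multiplied by the $2^{O(\sqrt{\log n})}$ overhead coming from the deterministic network decomposition, giving $\poly((1/\eps))\cdot 2^{O(\sqrt{\log n})}$ after collecting the $\polylog n$ factors into the $2^{O(\sqrt{\log n})}$ term. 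Finally, I would note that the constraint $\eps \ge 2/\Delta$ is inherited unchanged from \Cref{thm:randomized}.

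The main subtlety — rather than a genuine obstacle — is checking that the derandomization result of \cite{GHK18} is stated in a form that tolerates a \emph{polynomial} (in $\Delta$ and $\log n$) rather than strictly polylogarithmic randomized running time, and that its success-probability / locality-radius hypotheses match what \Cref{thm:randomized} provides; this is where one must be careful to cite the precise statement. One should also confirm that the coloring produced is required to be correct on \emph{all} edges with high probability (which \Cref{thm:randomized} guarantees), since the derandomization converts a w.h.p.-correct randomized algorithm into an always-correct deterministic one. Given these checks, the corollary follows with no additional algorithmic work; I would present it as a short paragraph-length argument rather than a formal multi-step proof.
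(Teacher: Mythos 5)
Your proposal matches the paper's argument exactly: observe that $k$-edge-coloring is locally checkable, then apply the derandomization result of \cite{GHK18} to the randomized algorithm from \Cref{thm:randomized} to incur a $2^{O(\sqrt{\log n})}$ blow-up. The paper states this in a single sentence, and your added care about verifying the hypotheses of \cite{GHK18} is reasonable but does not change the route.
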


For deterministic algorithms, we also show that it is possible to obtain a $\Delta + O(\log n / \log \log n)$ coloring in $O(\Delta^6 \cdot \log^{O(1)} n)$ rounds.

\begin{theorem}\label{thm:deterministic}
For any $\lambda > 1$, there exists a deterministic distributed $\Delta + O(\log n / (\log \lambda + \log \log n))$-edge-coloring algorithm  that  runs in $O(\lambda \cdot \Delta^6 \cdot \log^{O(1)} n)$ rounds. Furthermore, this can be turned into a deterministic distributed $(1+\epsilon) \Delta$-edge-coloring algorithm  that  runs in $O(\lambda \cdot (1/\epsilon)^6 \cdot \log^{O(1)} n)$ rounds, provided $\epsilon \geq  K \log n / (\Delta (\log \lambda +  \log \log n))$ for some constant $K >0$. 
\end{theorem}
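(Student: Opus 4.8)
The plan is to reduce deterministic edge-coloring to an online, restricted balls-into-bins problem, as announced in the abstract: if the maximum load of any bin over the whole process is $\ell$, we obtain a $(\Delta + 2\ell - 1)$-edge-coloring. Concretely, I would process the edges in a small number of ``batches'' (color classes computed one after another via a deterministic primitive such as the degree-splitting / partial-coloring machinery of \cite{GKMU18,FGK17,GHK18}); at each step an uncolored edge $uv$ must be assigned a color from the intersection of the palettes $M(u) \cap M(v)$. The key structural idea is that instead of insisting (as the randomized $\Delta + \tilde\Theta(\sqrt\Delta)$ approaches do) that an already-colored edge keeps its color forever, we allow a bounded amount of \emph{recoloring}: when we place a ``ball'' (an edge) into a ``bin'' (a color at a vertex) we may displace a bounded number of previously placed balls, and the bound on how deep these displacement chains go is exactly the load $\ell$. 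Running the online balls-into-bins subroutine deterministically with $\Delta + 2\ell - 1$ colors guarantees a legal move is always available, because the two endpoints' palettes, each of size $\ge \ell$ in the residual instance, must overlap once the total number of colors exceeds $\deg(u) + \deg(v) - 2\ell + 1$.

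The next step is to show that the online restricted balls-into-bins problem admits a \emph{deterministic} distributed solution with load $\ell = O(\log n / (\log\lambda + \log\log n))$ in $O(\lambda \cdot \polylog n)$ rounds. Here I would use a potential / power-of-choices style argument made constructive: maintain for each bin a counter, and when inserting a ball choose among $\Theta(\lambda \log n)$ candidate bins the least loaded one, derandomizing the choice via the method of conditional expectations over $O(\log n)$-wise independent hash families or via network decomposition so that the choices can be coordinated in $\poly(\Delta,\log n)$ rounds. The factor $\lambda$ buys more candidate bins per ball (equivalently, more rounds), which drives the max load down from $\Theta(\log n/\log\log n)$ to $\Theta(\log n/(\log\lambda + \log\log n))$; this is the familiar $d$-choice tradeoff, only here realized deterministically and in a locality-bounded fashion, so the careful part is bounding the diameter of the displacement/augmenting structures so that each round of the balls-into-bins game costs only $\poly(\Delta,\log n)$ CONGEST-style or LOCAL rounds. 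Combining with the reduction gives $\Delta + 2\ell - 1 = \Delta + O(\log n/(\log\lambda + \log\log n))$ colors in $O(\lambda\cdot\Delta^6\cdot\polylog n)$ rounds; setting $\lambda$ a large constant recovers the $\Delta + O(\log n/\log\log n)$ headline bound, and letting $\lambda$ grow polynomially drives the additive term lower at a polynomial cost in running time.

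For the ``furthermore'' clause, the reduction to a $(1+\epsilon)\Delta$-coloring is purely arithmetic bookkeeping: we need $2\ell - 1 \le \epsilon\Delta$, i.e. $\ell \le (\epsilon\Delta + 1)/2$; since our construction achieves $\ell = O(\log n/(\log\lambda+\log\log n))$, the requirement becomes $\epsilon \ge K\log n/(\Delta(\log\lambda + \log\log n))$ for the appropriate absolute constant $K$, which is exactly the stated hypothesis. The running time $O(\lambda\cdot(1/\epsilon)^6\cdot\polylog n)$ then follows by substituting the effective ``maximum degree'' $1/\epsilon$ in place of $\Delta$ in the per-step cost, exactly as in the first part — one first reduces the general-$\Delta$ instance to the regime where $\Delta = \Theta(1/\epsilon)$ (e.g. by an initial partial coloring that leaves a graph of degree $O(1/\epsilon)$), then runs the $\Delta+2\ell-1$ algorithm there.

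The main obstacle I anticipate is the deterministic, locality-bounded realization of the power-of-choices step: random $d$-choice load balancing is easy, but derandomizing it while keeping each ball-insertion implementable in $\poly(\Delta,\log n)$ rounds — and simultaneously controlling the length of displacement chains so the recoloring propagates only a bounded distance in $G$ — is the delicate heart of the argument. Everything else (the palette-intersection counting that yields $\Delta+2\ell-1$, and the $\epsilon$-vs-$\lambda$ arithmetic) is routine.
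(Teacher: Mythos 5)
Your proposal captures the high-level contract advertised in the abstract (a balls-into-bins problem where max load $\ell$ yields $\Delta + 2\ell - 1$ colors, with $\lambda$ trading running time for load) and you correctly anticipate that the ``furthermore'' clause is just degree-splitting plus arithmetic, which indeed matches the paper's Lemma~\ref{lem:convert}. However, your instantiation of the balls-into-bins abstraction is not the paper's, and the gap is not cosmetic: your account would not yield the claimed result.

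In the paper the ``balls'' are not edges and the ``bins'' are not color-slots at vertices. The algorithm is the distributed Vizing fan machinery (Algorithms~\ref{alg:main} and \ref{alg:fast}): grow fans, perform $\shift$ and $\augment$, and, crucially, \emph{truncate} any $\alpha\beta$-alternating path $P_v$ of length $> T$ by placing a special \emph{blocking edge} colored $\star$ somewhere on $P_v(T)$, so the augmentation does not have to propagate more than $T$ hops. A ball is such a truncated alternating path; a bin is a vertex (or, after the discretization, an edge in $E'(P)$); placing a ball into a bin means choosing where on $P_v(T)$ to put the blocking edge. The load $\ell(v)$ is the number of $\star$-edges incident to $v$, and the extra $2\ell - 1$ colors come from the fact that at the very end the $\star$-edges form a subgraph of max degree $\ell$, which can be greedily edge-colored with $2\ell-1$ fresh colors via Panconesi--Rizzi. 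Your reading --- that $\Delta + 2\ell - 1$ colors suffice because palettes of size $\geq \ell$ at both endpoints must intersect --- is not the argument and in fact is not even numerically what you would want (you would need palettes larger than half the excess to force intersection). The ``displacement chains of depth $\ell$'' picture is also not what is going on: there are no displacement chains, only bounded-length alternating paths whose truncation cost is being load-balanced.

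Your derandomization sketch is likewise off-target. The paper does not use $O(\log n)$-wise independence, hash families, or $d$-choice load balancing in the conventional sense. Each ball (truncated alternating path) simply sees all $T = 2t\lambda$ candidate blocking edges on its prefix; each candidate is used by at most $t$ balls over the algorithm, so the ratio $T/t = 2\lambda$ is what buys the improvement. The deterministic rule is the greedy one: pick the blocking edge $e = uv$ minimizing the exponential cost $c(e) = (1+\delta)^{\ell_{q-1}(u)} + (1+\delta)^{\ell_{q-1}(v)}$, and the bound $\ell(G) = O(\log n / (\log\lambda + \log\log n))$ follows from a pessimistic-estimator/method-of-conditional-expectations argument on the potential $\Phi(q) = \sum_v (1+\delta)^{\ell_q(v)} (1 + 2\delta/T)^{t-q} / (1+\delta)^{(1+\delta)/\lambda}$, exploiting that the alternating paths processed in one round are vertex-disjoint so the greedy choices minimize $\Phi$ independently. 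You name ``method of conditional expectations'' as a tool, which is the right derandomization lens, but without the fan/alternating-path/blocking-edge structure there is no well-defined online process for it to apply to, and no reason the round complexity would be $\poly(\Delta, \log n)$. In short, the one idea your proposal is missing is precisely the ``delicate heart'' you flag at the end: the reason the augmenting structures stay local is the blocking-edge truncation, and the $2\ell - 1$ extra colors are the price of removing the $\star$-edges, not a palette-intersection phenomenon.
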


Comparing Theorem \ref{thm:deterministic} (with $\lambda = O(1)$) to the deterministic algorithms in \cite{GKMU18}, we are using fewer colors ($\Delta + O(\log n/\log \log n)$ v.s.~$\Delta + O(\log n \cdot \log (2 + \frac{\Delta}{\log n} ))$) and fewer number of rounds ($\tilde{O}(\Delta^6)$ v.s.~$\tilde{O}(\Delta^9)$). For bipartite graphs, we show that it is possible to further improve the number of rounds  to just $\tilde{O}(\Delta^4)$. We summarize what coloring can be obtained in $\poly(\Delta, \log n)$ in Table \ref{tbl:result}.

The second statements of Theorem \ref{thm:randomized} and Theorem \ref{thm:deterministic} can be obtained by applying the degree splitting algorithm of \cite{GHKMSU17} to recursively split the edges into subgraphs whose maximum degrees are upper bounded by ${O}(1/\epsilon)$ (see Lemma \ref{lem:convert}). Then we can apply the algorithm on each subgraph to obtain a $\poly((1/\epsilon), \log n)$ algorithm.

\begin{table}[]
\caption{Minimum number of colors that can be obtained in $\poly(\Delta, \log n)$ rounds. The deterministic result on the fourth row is obtained by setting $\lambda = \Delta^{\gamma}$ for some constant $\gamma > 0$ in Theorem \ref{thm:deterministic}.}\label{tbl:result}
\centering
\begin{tabular}{|c|r|r r|r|}
\hline
&Previous&   &  New & Range of $\Delta$  \\ \hhline{|=|=|==|=|}
\multirow{3}{*}{\textsc{Randomized}} &$\Delta + \tilde{O}(\sqrt{\Delta})$ \hspace{6.7mm} \cite{ChangHLPU18} & \multirow{3}{*}{$\Delta + 2$}  & \multirow{3}{*}{Thm. \ref{thm:randomized}}  & $O(\log^c n)$  \\ 
 \cline{2-2} \cline{5-5} 
 &\multirow{2}{*}{\begin{tabular}[c]{@{}r@{}}$\Delta + O(\log n \cdot \log(2 + \frac{\Delta}{\log n} ))$ \\ \cite{GKMU18} \end{tabular}}  & & &  \multirow{2}{*}{$\Omega(\log^c n)$}  \\ 
 & & & & \\ \hhline{|=|=|==|=|}
\multirow{4}{*}{\textsc{Deterministic}}& \multirow{2}{*}{$1.5\Delta$ \hspace{16.8mm} \cite{GKMU18}}
 & & & $O(\frac{\log n}{\log \log n})$  \\ \cline{3-5}
& & $\Delta + O(\frac{\log n}{\log \log n})$  & Thm.~\ref{thm:deterministic}   & $[\Omega(\frac{\log n}{\log \log n}) ,O(\log n)]$  \\ \cline{2-5}
&\multirow{2}{*}{\begin{tabular}[c]{@{}r@{}}$\Delta + O(\log n \cdot \log(2 + \frac{\Delta}{\log n} ))$ \\ \cite{GKMU18} \end{tabular}} & $\Delta + O(\log_{\Delta} n)$  & Thm.~\ref{thm:deterministic}   & $[\Omega(\log n), 2^{O(\sqrt{\log n})}]$   \\ \cline{3-5}
& & $\Delta + 2$ &   Cor.~\ref{cor:deterministic} & $2^{\Omega(\sqrt{\log n})}$   \\ \hline
\end{tabular}
\end{table}

\subsection{A High-Level Description of Our Method}

To get a $\poly(\Delta, \log n)$-round algorithm, we employ the construction of Vizing's coloring in the distributed setting. In Vizing's Theorem, it was shown that it is possible to align the endpoints of an uncolored edge with the same missing color. Thus, the edge can be colored with the missing color. The alignment can be done by shifting a  fan and augmenting along an alternating path. The edges can be colored by this method one after another in a sequential manner. 

However, in the distributed setting, there are two challenges.  First, we have to color a large fraction of uncolored edges simultaneously.  The difficulties arise when we  color  the edges together since there are dependencies among the coloring processes. Second, the number of rounds needed to augment along an alternating path must be at least proportional to its length. It is unclear if short alternating paths always exist (potentially the length of an alternating path can be $\Omega(n)$).

While the first challenge can be resolved by decomposing the coloring process into different phases and applying symmetry breaking techniques carefully, overcoming the second challenge relies on the fact all alternating paths are short. 
Instead of showing this is true, we chop up long alternating paths to stop the augmentation from propagating. This can be done by placing a blocking edge on it and augment on the portion of the path only from the starting vertex to the blocking edge. In the end, if each vertex is adjacent to at most $\ell$ blocking edges, then the remaining blocking edges can be colored with additional $2\ell - 1$ colors, because the maximum degree graph induced by the blocking edges is at most $\ell$. Therefore, the problem reduces to the following load-balancing problem.

\paragraph{Balls into Bins v.s.~Paths into Vertices}  Consider the following version of balls-into-bins problem. Suppose there are $n$ bins.  The balls arrive in an online fashion. When a ball $x$ arrives, it is given a size-$T$ subset of the bins, $B_x$. We have to decide which bin in $B_x$ to place $x$ into. Each bin can only be the potential choice of at most $t$ balls.  The question is how to minimize the maximum load of the bins. 
   
 The problem of minimizing the maximum number of incident blocking edges per vertex can be thought as a balls-into-bins problem. In particular, each alternating path of length at least $T$ is a ball and each vertex is a bin. Choosing the blocking edge on the first $T$ edges is equivalent to choosing the bin (thus each ball occupies two bins, but we ignore this fact for now). We will show that each vertex can be passed by at most $t$ alternating paths throughout the algorithm.
   
If $t \leq T/\lambda$, then by placing each ball into a random bin, we can show that w.h.p.~the maximum load is $O(\log n /(\log \lambda + \log \log n))$. Moreover, such randomized algorithm can be derandomized by letting each ball go to the bin with the smallest cost (we will specify the cost function later), breaking ties arbitrarily. This is how we obtain our deterministic algorithm that uses additional $O(\log n / (\log \lambda + \log \log n))$ colors in Theorem \ref{thm:deterministic}. Note that the $\Theta(\log n /(\log \lambda + \log \log n))$ bound is tight for the balls-into-bins problem. Additional constraints must be considered in order to further reduce the load.  

To achieve maximum load of 1, we modify our algorithm such that we place the ball into one of the {\it non-empty} bins randomly. If there is  always at least one non-empty bin in $B_x$ for every ball $x$, it guarantees a maximum load of one. Recall that each ball is a path in our scenario. There are at most $n \Delta^{T-1}$ different paths of length $T$. If $T$ is large enough ($T = \poly(\Delta, \log n)\cdot t$), we can show that w.h.p.~for {\it every} path, at most a constant fraction of the bins on the path are occupied (throughout the algorithm) by taking a union bound over all possible paths of length $T$. Since there are at least a constant fraction of  empty bins, each ball is always guaranteed to land in an empty bin. 
In the end of the algorithm, the blocking edges induce a matching. After recoloring the matching with a new color, we obtain a $(\Delta + 2)$-edge-coloring.


\section{Notation and Vizing's Theorem}\label{sec:notation}
Consider a partial edge coloring $\phi: E \to \{\bot, 1,2,\ldots, \Delta+1 , \star \}$ of a graph $G$. A partial coloring is {\it proper} if the subgraph induced by each color $i \in \{1,2,\ldots, \Delta+ 1\}$ is a matching. We say an edge $e$ is uncolored if $\phi(e) = \bot$. The color $\star$ is a special color that denotes the color of blocking edges. Let $M(v) \subseteq \{1, 2 ,\ldots, \Delta + 1\}$ denote the {\it missing colors} of $v$ (i.e.~the colors that are not assigned to the incident edges of $v$). Note that $M(v)$ is always non-empty. Fix $m(v)$ to be any missing color of $v$.

Fix a partial edge-coloring $\phi$. Suppose that $v$ is missing $\alpha$, i.e.~$\alpha \in M(v)$, and there is at least one uncolored edge that is incident to $v$. The {\it fan} $F_v$  with respect to $\phi$ is constructed as follow. Let $e_1 = v x_1$ be an uncolored edge incident to $v$. Our goal is to color the edge $vx_1$. If $m(x_1) \notin M(v)$, then there must be an edge $vx_2$ with the color $m(x_1)$.  In particular,  we construct $vx_i$ as the edge with the color $m(x_{i-1})$. We stop this process if for some $i$, we have that $m(x_i) \in M(v)$ or some previous edge $vx_j$ currently has color $m(x_i)$. 

More formally, given $e_{i} = v x_{i}$, we construct edge $e_{i+1}$ if both of the following two conditions do not occur: (i) $m(x_i) \in M(v)$. (ii) $m(x_i) \in \{m(x_1), \ldots, m(x_{i-1}) \}$. Furthermore, $e_{i+1}$ is constructed as the edge that is incident to $v$ and colored in $m(x_{i})$. We repeatedly construct edges $e_1 \ldots e_k$ for some $k \geq 1$ until it is not possible to do it further. The vertex set of $F_v$ consists of the {\it center} $v$ and the {\it leaves} $x_1, \ldots, x_k$. The edge set of $F_v$ is $\{e_1, \ldots e_k\}$. We say $k$ is the degree of $F_v$, $\deg(F_v)$. A fan $F_v$ is an $\alpha \beta$-fan if $\alpha \in M(v)$ and $m(x_{\deg(F_v)}) = \beta$. We again note that this procedure is well-defined since with  $\Delta+1$ colors in the palette, $m(x_i)$ is always non-empty. 

We emphasize that when we use $F_v$ to refer to a fan, that is a shorthand to say that it has $v$ as the center. The notation $F_v$ alone however does not uniquely define a fan since we may grow multiple fans centering at $v$ throughout the algorithm.

The operation $\shift(F_v, i)$ recolors the edges of $F_v$ as follow. For each $1 \leq j \leq i - 1$, color edge $e_j$ with $m(x_j)$. We leave the color of $e_i$ to be the same. An $\alpha \beta$-alternating path is a path whose edges are colored in $\alpha$ and $\beta$ alternatively and it is {\it maximal} of such paths.  That is, no path whose edges are alternating between $\alpha$ and $\beta$ contains it properly. Given an $\alpha \beta$-alternating path $P$, the operation $\augment(P)$ switches the colors from $\alpha$ to $\beta$ and from $\beta$ to $\alpha$ for each edge on $P$. Note that two $\alpha \beta$-alternating paths cannot possibly intersect with each other.

\subsection{Vizing's Theorem and Fan Repair}


Vizing's Theorem shows that the uncolored edge in a fan can be properly colored if one recolors some other edges. This is done by using an $\augment$ operation and a $\shift$ operation. We call such a procedure {\it repairing} the fan. We adopt a version of the proof from \cite{MG92}, where it only uses one alternating path instead of potentially two. Given a fan $F_v$ with edges $v x_1, \ldots, v x_k$. If $m(x_k) \in M(v)$, perform $\shift(F_v, k)$ and color $v x_k$ with $m(x_k)$ then every edge in $F_v$ is now colored. Otherwise, it must be the case that there exists $j$, where $1 \leq j \leq k-2$, such that $m(x_k) = m(x_j)$. Let $P_v$ be the $\alpha \beta$-alternating path that starts at $v$. Note that the first edge of $P_v$ is $e_{j+1} = v x_{j+1}$. There are two cases. 

{\bf Case 1:} $P_v$ does not end at $x_j$. We perform $\shift(F_v, j+1)$ followed by $\augment(P_v)$. After the $\shift(F_v, j+1)$ operation, there is only one conflict -- both $v x_j$ and $v x_{j+1}$ are colored with $\beta$. This is resolved by   
$\augment(P_v)$, which changes the color of $v x_{j+1}$ to $\alpha$. Moreover, $\augment(P_v)$ does not create new conflicts, since its ending vertex must be missing either $\alpha$ or $\beta$ (depending on the color of its last edge) before the $\shift(F_v,j+1)$ operation. The operation $\shift(F_v, j+1)$ does not color any edge with $\alpha$ or $\beta$ except for $x v_{j}$. Since $P_v$  does not end at $x_j$ as assumed, the partial coloring remains proper after augmenting along $P_v$.

{\bf Case 2:} $P_v$ ends at $x_j$ (Figure \ref{fig:1a}), we perform $\augment(P_v)$ first. After the augmentation, now $v$ is missing $\beta$ and $x_{j}$ is missing $\alpha$  (Figure \ref{fig:1b}). We let  $m(x_{j}) = \alpha$ (there can be multiple missing colors, but we choose $m(x_j)$ to be $\alpha$). Under the new partial coloring and $m(\cdot)$-values, $F_v$ is still a fan of $v$ by definition. Moreover, $m(x_k) = \beta \in M(v)$. We then perform $\shift(F_v, k)$ and color $v x_k$ with $\beta$ (Figure \ref{fig:1c}).

\begin{figure*}
\centering
 \subcaptionbox{$P_v$ ends at $x_j$\label{fig:1a}}[.3\linewidth]{%
    \includegraphics[scale = 0.3]{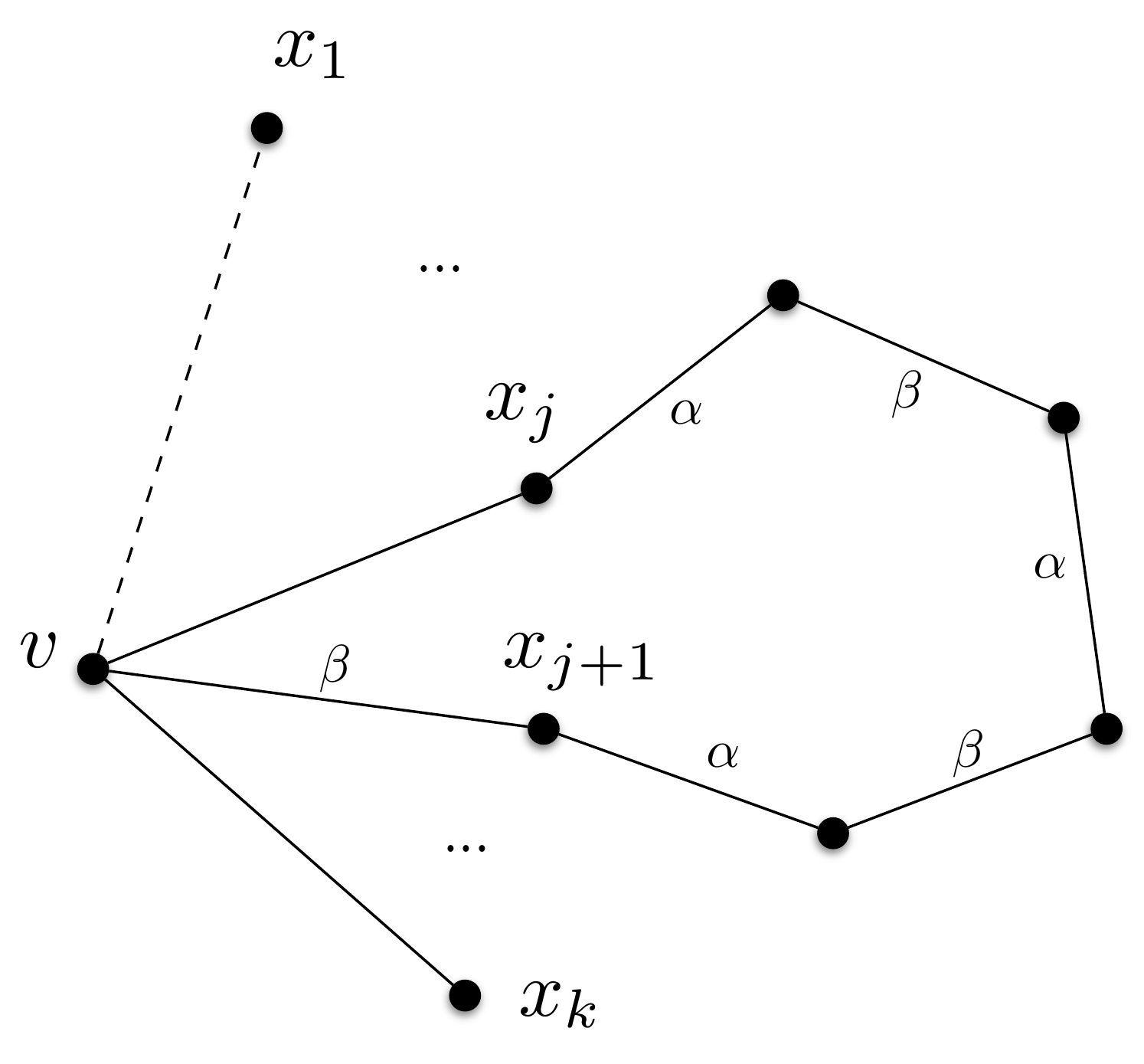}}
  \subcaptionbox{$\augment(P_v)$\label{fig:1b}}[.3\linewidth]{%
    \includegraphics[scale = 0.3]{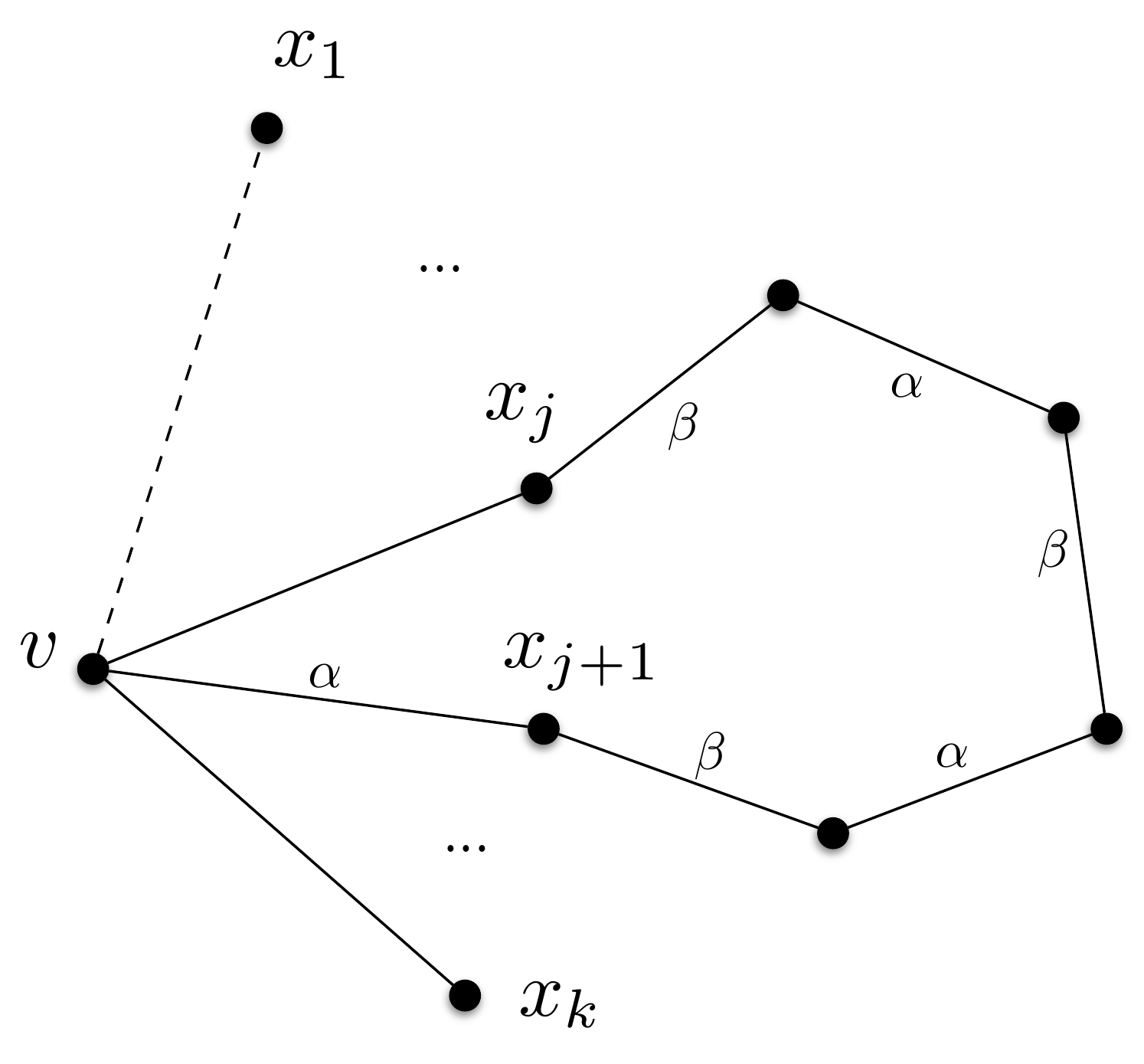}}
      \subcaptionbox{$\shift(F_v,k)$ and color $vx_k$ with $\beta$\label{fig:1c}}[.3\linewidth]{%
    \includegraphics[scale = 0.3]{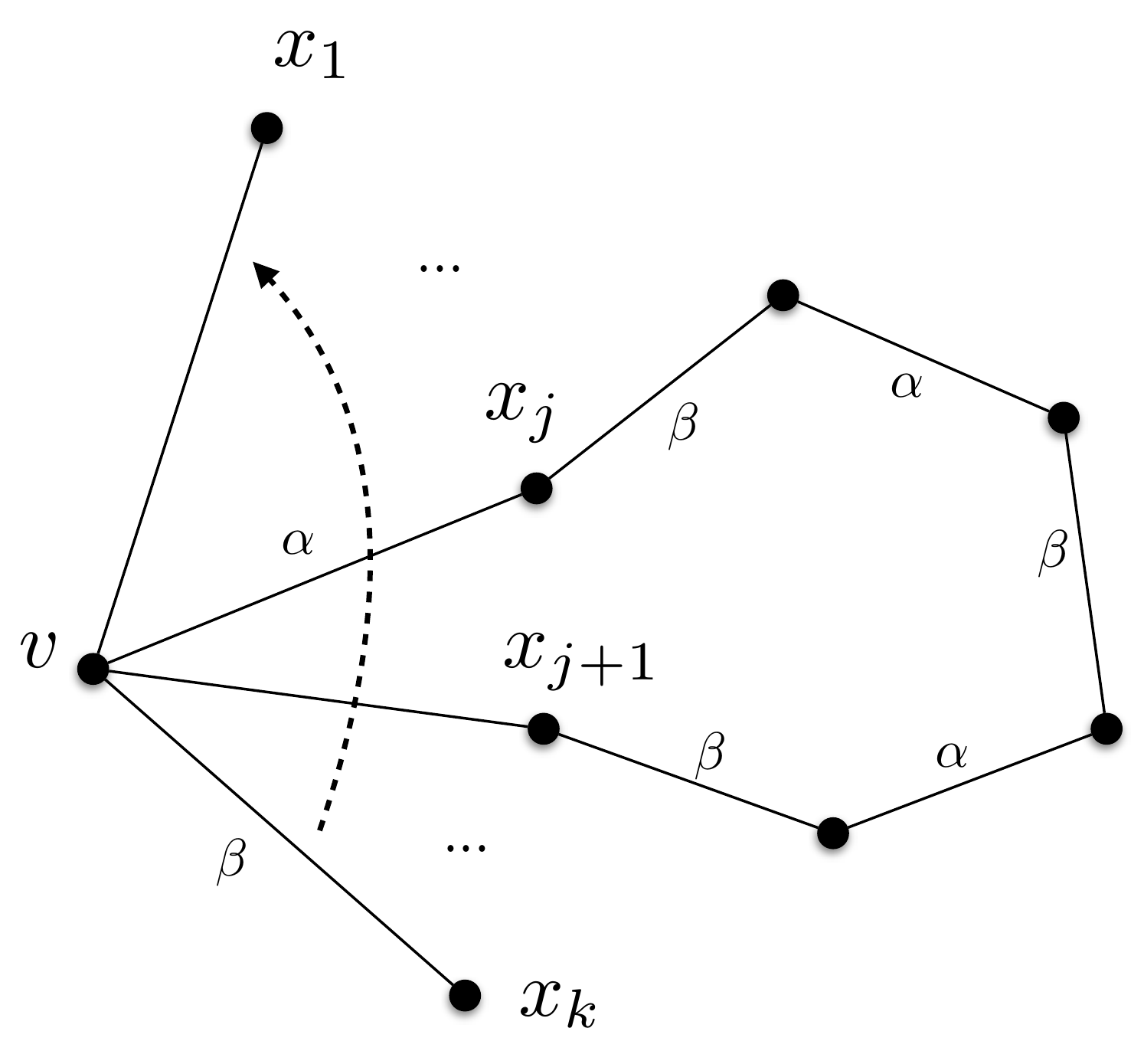}}
\caption{}
\end{figure*}

Therefore, given a fan, we can repair it and thus decrease the number of uncolored edges. Sequentially, if we fix fans one by one, we will be able to color all the edges. To be able to do this efficiently in the distributed setting, we need to address two challenges. 

\begin{itemize}
\item We need to be able to color a large portion of uncolored edges at the same time.

\item  The number of rounds needed to augment along a path is proportional to the length of the path. It is unclear if all alternating paths are short (i.e.~ $\poly(\Delta, \log n)$).

\end{itemize}


\section{Algorithm}
To tackle the first challenge, we divide the fans into different color classes so that the fans in the same color class can be fixed without interfering with each other. The rough idea is to take advantage of the fact that any two $\alpha \beta$-alternating paths do not intersect and the fact that if $u$ and $v$ are in the same color class of a 2-hop coloring then $F_u$ and $F_v$ do not intersect. Therefore, we can repair  all $\alpha\beta$-fans whose centers are in the same color class of a 2-hop coloring together.



 To overcome the second challenge, the idea is to truncate long alternating paths. Given an alternating path $P = (v_0, v_1, \ldots v_{k} )$, let $P(i)$ denote the truncated version of $P$ defined as follow $$P(i) = \begin{cases} P, &\mbox{if $k \leq i$.} \\ (v_0, v_1, \ldots, v_i), &\mbox{otherwise.} \end{cases}$$
 
 We will truncate alternating paths at length $T = \poly(\Delta ,  \log n)$, where the actual value of $T$ will be specified later. Instead of doing the $\augment$ operation on $P$, we will choose one edge in $P(T)$ to be the blocking edge to block the augmentation propagating down the path. Note that we only need to put a blocking edge on $P(T)$ if $|P| > T$, since if $|P| \leq T$, we can augment $P$ in $O(T)$ rounds. 
 
 \subsection{The Framework}
 
 We describe our algorithm in Algorithm \ref{alg:main}. Let $G^2$ denote the distance-2 graph of $G$ where $(u,v) \in G^2$ if $\dist_G(u,v) \leq 2$. Since the maximum degree of $G^2$ is at most $\Delta + \Delta \cdot(\Delta -1) = \Delta^2$, we can color $G^2$ using $\Delta^2 + 1$ colors. This can be done in $O(\Delta^2 + \log^{*} n)$ rounds \cite{BEK09}. 
 
\begin{algorithm}
\caption{Distributed Fan Repair}\label{alg:main}
\begin{algorithmic}[1]\small
\State Obtain a 2-hop coloring using $\Delta^2 + 1$ colors.		

\While{there exists uncolored edges}
\For{$i = 1, 2, \ldots, \Delta^2 + 1$}\label{ln:outerforloop}		
	\For{$\alpha = $ color $1, 2, \ldots, \Delta + 1$}\label{ln:innerforloop}	
			\State Let $V_{i, \alpha}$ denote color-$i$ vertices that  misses $\alpha$ and have at least one uncolored edges.
			\State Let $\mathcal{F}_{i, \alpha}$ denote the set of fans grown from each vertex $V_{i, \alpha}$. 
			\For{$\beta = $ color $1, 2, \ldots, \Delta + 1$}
				\State Let $\mathcal{F}_{i, \alpha, \beta}$ denote the set of $\alpha \beta$-fans in $\mathcal{F}_{i, \alpha}$.
				\State\label{ln:low_arbor_color}Build a conflict graph $G_{\mathcal{F}_{i, \alpha, \beta}}$ for $\mathcal{F}_{i, \alpha, \beta}$  and color $G_{\mathcal{F}_{i, \alpha, \beta}}$ using $O(1)$ colors.
				\State Let $\mathcal{F}_{i, \alpha, \beta, j}\subseteq \mathcal{F}_{i, \alpha, \beta}$ denote the fans that are  colored in $j$.
				\For{$j =$ color $1,2,\ldots, O(1)$}\label{ln:innermostloop}
					\State \label{ln:last} Repair each fan in $\mathcal{F}_{i, \alpha,  \beta, j}$ that has not been destroyed.
				\EndFor
			\EndFor
		\EndFor
	\EndFor		
\EndWhile
\end{algorithmic}
\end{algorithm}
 The outermost while loop repeats until there is no uncolored edge. We will show in Lemma \ref{lem:outerloop} that  each iteration colors a constant fraction of the uncolored edges. Thus, the while loop uses $O(\log n)$ iterations.  Inside the while loop, we iterate through each color of $G^2$ (Line \ref{ln:outerforloop}). For each color class, we further iterate through all possible values for $\alpha$ (Line \ref{ln:innerforloop}).  Let $V_{i, \alpha}(\phi)$ denote the set of color-$i$ vertices that have at least one incident uncolored edge and miss $\alpha$ w.r.t.~ the partial coloring $\phi$. We often omit the parameter $\phi$ and write it as $V_{i, \alpha}$ when the reference is w.r.t.~the current coloring. 

 

  For each $v \in V_{i,\alpha}$, we grow a fan $F_v$ centered at $v$. Let $\mathcal{F}_{i, \alpha}$ be a collection of all such fans. It is guaranteed that the fans do not intersect, since $V_{i,\alpha}$ is an independent set in $G^2$.  Next, we iterate $\beta$ from color 1 to $\Delta + 1$. Let $\mathcal{F}_{ i, \alpha, \beta}$ denote all the $\alpha\beta$-fans in $\mathcal{F}_{i, \alpha}$. Ideally, we want to be able to repair all fans in $\mathcal{F}_{i, \alpha,  \beta}$ simultaneously, since their alternating paths do not intersect. However, it is possible that an alternating path of a fan $F_v$ ends at a node of another fan $F_w \in \mathcal{F}_{i, \alpha,  \beta}$. In this case, we might not be able to repair them together, because the $\augment(P_v)$ step may {\it destroy} the structure of $F_w$ (i.e.~change $M(x)$ for $x \in F_w$ or change the color of the edges in $F_w$).  Note that if $P_v$ crosses a node in $F_w$ and does not end at a node in $F_w$, augmenting along $P_v$ does not affect the structure of $F_w$, since in this case, $P_v$ cannot contain any edge of $F_w$. In other words, if $P_v$ crosses but does not end at a node $x \in F_w$, then   $m(x) \neq \alpha$ and $m(x) \neq \beta$. Therefore,  augmenting $P(v)$ or the prefix of $P_v$ before the blocking edge (if $|P_v| > T$) cannot change $m(x)$.
  
  To resolve this issue, we consider the directed graph $G_{\mathcal{F}_{i ,\alpha, \beta}} = (V_{\mathcal{F}_{ i ,\alpha,\beta}}, E_{\mathcal{F}_{i ,\alpha, \beta}})$, where $V_{\mathcal{F}_{ i , \alpha,\beta}} = \{F_v \mid F_v \in \mathcal{F}_{i, \alpha,\beta} \}$ and $(F_u, F_w) \in E_{\mathcal{F}_{i, \alpha,\beta}}$ if $|P_v| \leq T$ and $P_v$ ends at any nodes of $F_w$. Note that we only consider $|P_v| \leq T$, because if $|P_v| > T$, we will only do modification on $P_v(T)$. Such modification will not affect the fans that intersect with $P_v(T)$.
  
  Since the out-degree of every vertex in $G_{\mathcal{F}_{ i , \alpha,\beta}}$ is at most 1, the arborcity of $G_{\mathcal{F}_{i , \alpha, \beta}}$ is $O(1)$. If we consider $G_{\mathcal{F}_{i,\alpha,\beta}}$ as an undirected graph, we can color $G_{\mathcal{F}_{i , \alpha, \beta}}$ using $O(1)$ colors in $O(\log n)$ rounds \cite{barenboim2010sublogarithmic}, if $G_{\mathcal{F}_{ i , \alpha,\beta}}$ were the underlying communication graph. Since each edge in $G_{\mathcal{F}_{ i , \alpha,\beta}}$ is stretched by a factor of at most $T$ in $G$, we can simulate the coloring algorithm in $O(T \cdot \log n)$ rounds in $G$. Now we iterate through each color $j$ of the coloring and process the fans colored in $j$ together (Line \ref{ln:innermostloop}).  Since the alternating path of each fan colored in $j$ does not end at any other fan colored in $j$, nor do the alternating paths intersect, we can repair them simultaneously. 
  
Therefore, it is possible to repair all undestroyed fans in  $\mathcal{F}_{i,\alpha,\beta,j}$ simultaneously, using a $\shift$ operation and possibly an $\augment$ operation for each fan. For a fan $F_v$ with its alternating path $P_v$ whose first edge is the edge between $v$ and the $(j+1)$'th leave of $F_v$, if $|P_v| \leq T$, we repair the fan as outlined in Section \ref{sec:notation}. Otherwise, we select an edge of $P_v(T)= (v_0,v_1,\ldots,v_T)$ to be the {\it blocking edge}, say $(v_{i-1}, v_{i})$, where $1 \leq i \leq T$. We color $(v_{i-1}, v_i)$ with a special color $\star$ and then perform $\shift(v, j+1)$ followed by $\augment(P_v(i-1))$ (Case 1 of the repairing step). After the repairing step, the uncolored edge in the fan becomes colored and the partial coloring remains proper. We will show that after $O(\log n)$ iterations of the outermost while loop, all the edges become colored (some possibly in color $\star$). 

\begin{lemma}\label{lem:outerloop} In each iteration of the while loop, at least a constant fraction of uncolored edges become colored. Thus, it takes $O(\log n)$ iterations to color all the edges. \end{lemma}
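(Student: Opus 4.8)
The plan is to show that a single iteration of the while loop colors at least a constant fraction (say, at least one fourth) of the edges that are $\bot$ at its start, so that $O(\log n)$ iterations suffice. Write $d_\bot(v)$ for the current number of uncolored edges incident to $v$. Two elementary facts drive everything. First, the invariant $|M(v)|\ge d_\bot(v)+1$ in every proper partial coloring: the non-$\star$ colored edges at $v$ use $\deg(v)-d_\bot(v)-\deg_\star(v)$ distinct colors out of $\Delta+1$, so $|M(v)|=\Delta+1-\deg(v)+d_\bot(v)+\deg_\star(v)\ge d_\bot(v)+1$. Second, during an iteration no edge ever turns from colored back to $\bot$, and every completed fan repair converts exactly one $\bot$-edge into a colored edge, with distinct repairs handling distinct edges; hence the number of edges colored in an iteration equals the number $S:=\sum_i S_i$ of fan repairs completed in it, where $S_i$ counts those completed inside the $G^2$-color-class block $i$ of Line~\ref{ln:outerforloop}.

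Next I would analyze one block $i$ across its whole inner $\alpha$-loop. The fans $\mathcal{F}_{i,\alpha}$ are created at the start of sub-iteration $(i,\alpha)$ and all processed before $(i,\alpha+1)$ starts, so at every moment of block $i$ only the fans of one single $\mathcal{F}_{i,\alpha}$ are ``active'', and those are pairwise node-disjoint since color-$i$ vertices are independent in $G^2$. Thus the $\shift$ part of a repair recolors only edges at its own center and its own leaves and so cannot damage another active fan; the $\augment$ part (or its truncated prefix, together with the blocking edge) damages another active fan only if its alternating path \emph{ends} at a node of that fan, which happens for only $O(1)$ vertices per repair, each a node of at most one active fan. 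Hence each completed repair in block $i$ destroys $O(1)$ other fans; since every grown fan is either destroyed or successfully repaired, the number $G_i$ of fans grown in block $i$ satisfies $G_i=O(S_i)$, i.e.\ $S_i=\Omega(1)\cdot\sum_{v:\,c(v)=i} g_v$, where $g_v$ is the number of fans grown with center $v$ in block $i$ and $c(\cdot)$ is the $G^2$-coloring.

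Now I would lower bound $\sum_{v:\,c(v)=i} g_v$ by a per-vertex potential argument. Fix $v$ with $c(v)=i$, let $d$ be $d_\bot(v)$ at the start of block $i$, and set $\Phi_\alpha:=|M(v)\cap\{\alpha,\alpha+1,\dots,\Delta+1\}|$ measured at the start of step $\alpha$. Then $\Phi_1=|M(v)|\ge d+1$, $\Phi_{\Delta+2}=0$, and $\Phi_{\alpha+1}=\Phi_\alpha-[\alpha\in M(v)]+\delta_\alpha^{(v)}$, where $[\alpha\in M(v)]$ is evaluated at the start of step $\alpha$ (and is $1$ exactly when $v$ grows a fan, provided $v$ still has an uncolored edge) and $\delta_\alpha^{(v)}$ is the net change that processing step $\alpha$ inflicts on $|M(v)\cap\{\alpha+1,\dots,\Delta+1\}|$. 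Telescoping gives $g_v\ge\Phi_1+\sum_\alpha\delta_\alpha^{(v)}\ge(d+1)+\sum_\alpha\delta_\alpha^{(v)}$ (the case where $v$ exhausts all its uncolored edges mid-block only helps, as then $g_v\ge d$ already). The crucial point is that $\delta_\alpha^{(v)}$ can be negative only because $v$ is the far endpoint of the (augmented prefix of the) alternating path of some completed \emph{non-truncated} repair at step $\alpha$: a $\shift$ changes only the $M$-values of its own leaves (which are non-color-$i$ vertices), an interior vertex of an alternating path keeps its $M$-value under $\augment$, blocking-edge recolorings only \emph{add} colors to the affected $M$-sets, and each completed non-truncated repair has one far endpoint, whose $M$ changes by a single color-swap, costing $\delta_\alpha^{(v)}$ at most $1$. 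Hence $\sum_{v:\,c(v)=i}\sum_\alpha\delta_\alpha^{(v)}\ge-S_i$, so $\sum_{v:\,c(v)=i}g_v\ge\sum_{v:\,c(v)=i}d_\bot^{(i)}(v)-S_i$ (writing $d_\bot^{(i)}(v)$ for $v$'s uncolored degree at block $i$'s start). Combining with $S_i=\Omega(1)\cdot\sum_{v:\,c(v)=i}g_v$ yields $S_i=\Omega(1)\cdot\sum_{v:\,c(v)=i}d_\bot^{(i)}(v)$. Finally, for an edge $e=uv$ uncolored at the iteration's start we have $c(u)\ne c(v)$; say $c(u)<c(v)$. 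Before block $c(u)$ begins, $e$ cannot have been colored, since a repair coloring $e$ must be a fan centered at $u$ or $v$ and neither block has been reached; so $e$ is counted in $d_\bot^{(c(u))}(u)$. Hence $\sum_v d_\bot^{(c(v))}(v)\ge|U|$ for the set $U$ of edges uncolored at the start, and the number of edges colored in the iteration, $\sum_i S_i=\Omega(1)\cdot\sum_v d_\bot^{(c(v))}(v)=\Omega(|U|)$. Thus a constant fraction of the uncolored edges gets colored per iteration, and $O(\log n)$ iterations suffice.

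The step I expect to be the main obstacle is controlling $\sum_\alpha\delta_\alpha^{(v)}$: a priori, augmenting the alternating paths of the many other fans processed in the same block could keep swapping colors into and out of $M(v)$ and thereby prevent $v$ from ever becoming eligible often enough to grow $\Omega(d_\bot^{(i)}(v))$ fans. The resolution, which is the heart of the argument, is that each such perturbation is localized at a single vertex — the far endpoint of that path's augmented prefix — and is a single swap, so the total perturbation budget of block $i$ is at most its own number of completed repairs $S_i$, exactly the quantity being lower-bounded; the apparent obstruction therefore cancels against itself. A secondary point needing a little care is the case in which a vertex uses up all of its uncolored edges partway through its block, but such a vertex has then already had a full complement of its edges colored, which is stronger than needed.
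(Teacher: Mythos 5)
Your proposal is essentially the same accounting argument as the paper's, just dressed differently: where you run a telescoping potential $\Phi_\alpha=|M(v)\cap\{\alpha,\dots,\Delta+1\}|$ and a budget of disruptions $\delta_\alpha^{(v)}$, the paper fixes $C(v)$ to be the first $\outdeg(v)$ colors missing at the start of block $i$ and, for each $\alpha\in C(v)$ that fails to produce a fan, ``blames'' the earlier repair that disrupted $M(v)$. Both arguments then use the orientation-by-$G^2$-color trick to relate $\sum_v\outdeg(v)$ (or your $d_\bot^{(i)}(v)$) to $|U|$, and both close with the observation that a repair destroys at most one other active fan so at least half of the grown fans finish. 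So the approach is not genuinely different; the two write-ups are the same charge argument.

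One concrete slip worth flagging: you assert that ``$\delta_\alpha^{(v)}$ can be negative only because $v$ is the far endpoint of the (augmented prefix of the) alternating path of some completed non-truncated repair,'' and that ``a shift changes only the $M$-values of its own leaves.'' That is not quite true. When the repair of $v$'s \emph{own} fan falls into the no-augment case ($m(x_k)\in M(v)$), the operation $\shift(F_v,k)$ followed by coloring $vx_k$ with $m(x_k)$ removes $m(x_k)$ from $M(v)$, and $m(x_k)$ can be a color $>\alpha$; this is precisely the paper's ``Case 2'' blame and it hits the center, which is a color-$i$ vertex. So the shift does change the center's $M$-set, not only the leaves'. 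The overall budget bound $\sum_v\sum_\alpha\delta_\alpha^{(v)}\ge -S_i$ nevertheless survives, because the no-augment case has no far endpoint and the augment cases remove only $\alpha$ itself from $M(v)$ — i.e.\ the self-charge and the far-endpoint charge are mutually exclusive per completed repair (this mutual exclusivity is exactly what the paper invokes to conclude each fan is blamed at most once). You should state and use that disjointness explicitly rather than the incorrect ``only far endpoints'' claim; as written the step is false even though the inequality it is meant to justify is true.
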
 
\begin{proof}
Consider an iteration of the while loop. Let $\phi_0^{-1}(\bot)$ denote the set of uncolored edges at the beginning of the while loop. Let $\phi_{i,\alpha}$ be the partial coloring during the $i$'th iteration of the outer {\bf for loop} at Line \ref{ln:outerforloop} and at the beginning of $\alpha$'th iteration of the inner {\bf for loop} at Line \ref{ln:innerforloop}. Let $\phi_{i}$ denote the partial coloring at the beginning of $i$'th iteration of the outer {\bf for loop} at Line \ref{ln:outerforloop}. 

For each $uv \in \phi_0^{-1}(\bot)$, we orient the edge from $u$ to $v$ if the color of $u$ is smaller than that of $v$; otherwise we orient it from $v$ to $u$. We will show that \[
\sum_{\alpha = 1}^{\Delta+1}|\mathcal{F}_{i,\alpha}| \geq  \sum_{u \in V_i} \outdeg(u) / 2,\] where $V_i$ denotes the set of color-$i$ vertices. 

For each $v \in V_i$, let $C(v)$ denote the first $\outdeg(v)$ missing colors of $v$ w.r.t.~$\phi_i$ so that $|C(v)| = \outdeg(v)$. For each $\alpha \in C(v)$, if $v \in V_{i,\alpha}(\phi_{i,\alpha})$, then a fan centered at $v$ will be created during iteration $\alpha$ (Recall that $V_{i,\alpha}(\phi_{i,\alpha})$ denotes the set of color $i$-vertices that have at least one uncolored incident edge and are missing $\alpha$ with respect to the coloring $\phi_{i,\alpha}$). Otherwise, if $v \notin V_{i,\alpha}(\phi_{i,\alpha})$, it must be caused by the repairing step of some color-$i$ vertex $u$ that changes the color of an incident edge of $v$ to $\alpha$ during iteration $\alpha'$ for some $\alpha' < \alpha$. In other words, there is a fan $F_u$ that was grown in iteration $\alpha' < \alpha$ and repairing $F_u$ causes $v \notin V_{i,\alpha}(\phi_{i,\alpha})$. Repairing a fan $F_u$ can only change the missing color set (i.e.~$M(\cdot)$) of its leaves, its center, and the other endpoint of the alternating path. Since $u$ and $v$ are both in $V_i$, they cannot be neighbors. Therefore, $v$ must be either the endpoint of the alternating path $P_u$ or $v = u$. 

 {\bf Case 1:} $v$ is the end of the alternating path of $F_u$.  It must be the \texttt{augment} operation of $F_u$ that changes the missing color of $v$ during the repairing step of $F_u$. 
 
 {\bf Case 2:} $u = v$. Let $x_1 \ldots x_k$ be the leaves of $F_u$. If an edge incident to $u$ becomes colored in $\alpha$ during iteration $\alpha'$, it must be the case that $\alpha = m(x_k)$ and $\alpha \in M(u)$. We perform $\shift(F_u,k)$ operation and color $u x_k$ in $\alpha$. Note that in this case the $\augment$ operation is not performed.


Therefore, if $v \notin V_{i,\alpha}(\phi_{i,\alpha})$, we can blame it to some fan $F_u$. Note that each fan $F_u$ can be blamed at most once, since Case 1 and Case 2 are mutually exclusive. Since for each $v$ and $\alpha \in C(v)$, either it creates a fan or it blames another fan, we have
\begin{align*}
\sum_{u \in V_i} \outdeg(u) =  \sum_{u \in V_i} |C(u)| &\leq \sum_{\alpha=1}^{\Delta + 1}|\mathcal{F}_{i,\alpha}| + \mbox{(\#blames)} \\
&\leq 2 \sum_{\alpha=1}^{\Delta + 1}|\mathcal{F}_{i,\alpha}|  ~.
\end{align*}

The last inequality follows since $\mbox{(\#blames)} \leq \sum_{\alpha=1}^{\Delta + 1}|\mathcal{F}_{i,\alpha}|$.



Note that at most half of the fans are destroyed, since repairing a fan can only destroy at most one other fan.  If a fan is not destroyed, then the uncolored edge in the fan must become colored. Therefore, at least $\sum_{i = 1}^{\Delta^2 + 1} \sum_{\alpha = 1}^{\Delta + 1}  |\mathcal{F}_{i,\alpha}|/2$ uncolored edges are colored. The total number of edges that are colored during the while loop is at least
$$\sum_{i = 1}^{\Delta^2 + 1} \sum_{\alpha = 1}^{\Delta + 1} |\mathcal{F}_{i,\alpha}| / 2  \geq  \sum_{i = 1}^{\Delta^2 + 1} \sum_{v \in V_i} \outdeg(v) / 4 = |\phi_0^{-1}(\bot)| /4~. $$

This implies that at least $1/4$ of the uncolored edges become colored during the iteration. Thus, after $O(\log n)$ iterations, all edges are colored.

\end{proof}

\subsection{Load Balancing the Blocking Edges}
By Lemma \ref{lem:outerloop}, Line \ref{ln:last} of Algorithm \ref{alg:main} is executed for at most $t = K \Delta^4 \log n$ times for some constant $K > 0$. Note that in each execution of Line \ref{ln:last}, the alternating path of each fan in $ \mathcal{F}_{i,\alpha, \beta, j}$ does not intersect alternating paths of other fans in $\mathcal{F}_{i, \alpha,\beta,j}$ .



Let $\ell(v)$ denote the {\it load} of $v$, which is defined to be the  number of incident edges to $v$ colored in $\star$. Also, let $\ell(G) = \max_{v \in V} \ell(v)$. The hope is that at the end of the algorithm, $\ell(G)$ will be small. Since the maximum degree induced by $\star$-edges is $\ell(G)$, we can recolor  $\star$-edges using $2\ell(G)-1$ additional colors using the algorithm of Panconesi and Rizzi \cite{panconesi-rizzi} in $O(\ell(G) + \log^{*} n)$ rounds.

In each execution of Line \ref{ln:last} of Algorithm \ref{alg:main}, each vertex can only increase its $\star$ degree by one. This is because the alternating paths of the fans in $\mathcal{F}_{i,\alpha, \beta,j}$ do not intersect, and thus each vertex can be passed by at most one alternating path. Recall that $t$ is the number of times Line \ref{ln:last} of Algorithm \ref{alg:main} is executed. Therefore, each vertex can be passed by at most $t$ alternating paths throughout the algorithm. 

On the other hand, for each alternating path $P$ of length at least $T$, it has $T$ choices to place the blocking edge on $P(T)$. If we place the blocking edge uniformly at  random on each path, the expected load for each vertex $v$ is, $\E[\ell(v)] \leq \frac{2t}{T} = O(1/\lambda)$, provided $T \geq t \cdot \lambda$. We can show by Chernoff bound that $\ell(G) = O(\log n /(\log \lambda + \log \log n))$ w.h.p. However, to obtain a $(\Delta + 2)$-edge coloring, we need to make $\ell(G) = 1$ so that the $\star$-edges form a matching. In the rest of this section, we show how to achieve $\ell(G) = 1$ using randomization and $\ell(G) = O(\log n /(\log \lambda + \log \log n))$ without randomization.

\paragraph{Achieving $\ell(G) = 1$} Let $T = K \Delta^{7}\cdot  t = K \Delta^{11} \log n$ for some sufficiently large constant $K$. Instead of randomly 
placing the blocking edge, we do the following. Given a path $P$, let $E_0(P)$ be the set of edges whose endpoints have zero loads. In each alternating path $P$, we choose an edge from $E_0(P)$ uniformly at random to be the blocking edge. We will show that at any time during the algorithm, for every path $P$  of length $T$ ($P$ is not necessarily an alternating path), $|E_0(P)| \geq T / 15$ w.h.p. Therefore, given an alternating path of length at least $T$, it is always possible to choose a blocking edge whose both endpoints have zero loads w.hp. If some alternating path of length at least $T$ with $|E_0(P)| < T/15$ occurs during Line $\ref{ln:last}$, we say the algorithm {\it fails} and let our graph and coloring {\it freeze} there. In that case, $P$ will not be processed. If the algorithm did not fail, then we have achieved $\ell(G) = 1$.


\begin{lemma}After the last execution of Line \ref{ln:last} of Algorithm \ref{alg:main}, it holds w.h.p.~ that for every path $P$ of length $T$, we have that $|E_0(P)| \geq T/15$. \end{lemma}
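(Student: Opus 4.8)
The plan is to bound, via a union bound over all paths $P$ of length $T$, the probability that more than a $14/15$ fraction of the edges of $P$ have an endpoint of positive load. Since there are at most $n\Delta^{T-1} \le n\Delta^T$ distinct paths of length $T$ in $G$, it suffices to show that for a fixed path $P$ (which we think of as fixed in advance, before the algorithm runs), the probability that $|E_0(P)| < T/15$ at any point in the execution is at most, say, $n^{-c}\Delta^{-T}$ for a suitable constant $c$. Combined with the fact that every step before a failure keeps the coloring proper and that, conditioned on no failure, every blocking edge is placed on an edge with two zero-load endpoints (so loads never exceed $1$), this yields $\ell(G)=1$ w.h.p.

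For the per-path bound I would argue as follows. Fix the path $P = (v_0,\dots,v_T)$. An edge $(v_{s-1},v_s)$ of $P$ leaves $E_0(P)$ only if one of $v_{s-1},v_s$ receives a $\star$-edge, i.e.\ only if some alternating path $Q$ processed on Line~\ref{ln:last} places its blocking edge on an edge incident to $v_{s-1}$ or $v_s$. Each vertex is passed by at most $t$ alternating paths over the whole algorithm (shown in the excerpt), so at most $2(T+1)t$ ``opportunities'' exist to kill edges of $P$; but each such opportunity is an alternating path $Q$ of length $\ge T$ that chooses its blocking edge uniformly at random among the $\ge T/15$ edges of $E_0(Q)$ (by induction, using that the claim has not yet failed). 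The chance that $Q$'s random choice lands on a specific edge incident to a specific vertex of $P$ is therefore at most $\frac{2}{T/15} = \frac{30}{T}$ (two edges of $Q$ touch that vertex). The key quantitative point is the gap between $T = K\Delta^{11}\log n$ and $t = K\Delta^4\log n$: each alternating path $Q$ through a given vertex $v$ of $P$ kills $v$'s $E_0$-membership with probability $O(1/T)$, and $v$ is crossed by only $t \ll T$ such paths, so the expected number of killed edges of $P$ is $O(Tt/T) = O(t) \ll T$. More carefully, I would set up the count so that the expected number of edges of $P$ that ever leave $E_0(P)$ is at most $T/30$ (which holds for $K$ large, since it is $O(t) = O(\Delta^4\log n)$ while $T/30 = \Omega(\Delta^{11}\log n)$), and then apply a concentration inequality to get that exceeding $14T/15$ is exponentially unlikely.

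The technical obstacle — and the step I expect to require the most care — is that the random choices are \emph{not} independent and are made \emph{adaptively}: whether an edge of $Q$ is in $E_0(Q)$ depends on earlier blocking-edge placements, and the path $Q$ itself (which vertices it crosses) is determined by the current coloring, hence by earlier random choices. The clean way around this is to expose the algorithm's randomness one Line-\ref{ln:last} execution at a time and build a supermartingale (or use Freedman's inequality / a martingale Chernoff bound): let $X_r$ be the number of edges of $P$ that have been removed from $E_0(P)$ after the $r$-th execution of Line~\ref{ln:last}. Conditioned on the history, the increment $X_r - X_{r-1}$ is stochastically dominated by a sum of (at most $2(T+1)$) indicator variables, one per vertex of $P$, each of which is $1$ with probability at most $30/T$ if that vertex is crossed by the alternating path processed in step $r$ and $0$ otherwise; and across all $r$, each vertex of $P$ contributes a nonzero term at most $t$ times. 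Hence the conditional expected total increment is at most $2(T+1)\cdot t \cdot \frac{30}{T} \le 90\,t$ summed appropriately — wait, more precisely the bound is $\sum_r \E[X_r - X_{r-1}\mid \mathcal{H}_{r-1}] \le 2(T+1)t \cdot \frac{30}{T} = O(t)$. A martingale tail bound (e.g.\ the Chernoff-type bound for sums of conditionally-bounded, conditionally-small-mean increments) then gives $\Pr[X_{\text{final}} \ge 14T/15] \le \exp(-\Omega(T))$, and since $T = \omega((\log n + T\log\Delta)/1)$ is \emph{not} quite true, I need instead $\exp(-\Omega(T)) \le \Delta^{-T} n^{-c}$, i.e.\ I need the exponent $\Omega(T)$ to dominate $T\log\Delta + c\log n$; this is where the polynomial-in-$\Delta$ slack in $T = K\Delta^{11}\log n$ versus the $T\log\Delta$ in the union-bound exponent must be checked — one wants the deviation probability to be $\exp(-\Omega(T\log\Delta))$ rather than merely $\exp(-\Omega(T))$, which follows by taking the threshold as a $14/15$ fraction and noting each ``bad'' event forces $\Omega(T)$ of $\ll T/30$-mean increments, a standard Chernoff argument giving the extra $\log\Delta$ factor in the exponent when $t/T = \Theta(\Delta^{-7})$. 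I would then conclude with the union bound over the $n\Delta^T$ paths.
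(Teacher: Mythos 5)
Your overall plan --- union bound over the $\leq n\Delta^{T-1}$ paths of length $T$, combined with a per-path failure probability of order $1/(\poly(n)\cdot\Delta^T)$, with the algorithm freezing on failure so that $|E_0(\tilde P)|\geq T/15$ holds whenever a path $\tilde P$ is processed --- matches the paper's, and you correctly identify the key obstacle (the choices are adaptive and non-independent). But you then take a genuinely different route from the paper, and as sketched it has two concrete gaps.

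The paper does not use a martingale at all. It partitions the alternating paths touching $P$ into $\mathcal{P}_1$ (adjacent to $\geq T/1000$ edges of $P$) and $\mathcal{P}_2$ (the rest). A deterministic count --- each edge of $P$ is touched by at most $2t$ alternating paths over the whole run, so there are at most $2tT$ (edge, heavy-path) incidences, while each heavy path consumes $\geq T/1000$ --- gives $|\mathcal{P}_1|\leq 2000t$, so $\mathcal{P}_1$ can occupy at most $6000t$ edges of $P$ regardless of randomness. For $\mathcal{P}_2$ the paper works with the alternate set $E'(P)$, orients edges so that each $\mathcal{P}_2$-path can newly make at most one edge of $E'(P)$ ``outwardly occupied,'' and then explicitly \emph{couples} the indicators $X_e$ with independent Bernoulli variables $Y_e$ of parameter $150t/T$ (Lemma~\ref{lem:coupling}), by re-implementing the uniform choice from $E_0(\tilde P)$ as a sequence of coin flips indexed by the edges in $E'_{\tilde P}(P)$. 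The hypothesis $\tilde P\in\mathcal{P}_2$ is exactly what keeps $|out_{\tilde P}(e)|/(|E_0(\tilde P)|-\mathit{occurred})\leq 75/T$ a legitimate probability. After the coupling, a vanilla Chernoff bound on $Y=\sum_e Y_e$, with $1+\delta = T/(700t)\geq e^2\Delta^7$, produces the decisive $\Delta^{-T}$ decay.

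For your supermartingale route the two things you would need to repair are these. (i) You expose one execution of Line~\ref{ln:last} at a time, but a single execution processes many vertex-disjoint alternating paths, so the per-step increment can be $\Theta(T)$; any Azuma/Freedman bound then degenerates. The fix is to expose one alternating path $Q$ at a time, so the increment $Z_Q$ (edges of $P$ newly made occupied by $Q$'s single blocking edge) satisfies $Z_Q\leq 3$ always. (ii) Freedman's inequality in its usual additive form $\exp(-\lambda^2/(2V+\tfrac{2}{3}M\lambda))$ only gives $\exp(-\Omega(T))$ when $V=O(t)$ and $\lambda=\Theta(T)$, which, as you notice, cannot beat the $\Delta^T$ factor in the union bound. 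What you need to invoke instead is the multiplicative (Bennett-type) martingale bound: if $Z_Q\in[0,M]$ and $\sum_Q \E[Z_Q\mid\mathcal{H}]\leq P$ holds deterministically on every sample path (here $P=O(t)$, since each vertex of the fixed path is crossed at most $t$ times and each placement chooses uniformly among $\geq T/15$ candidates), then $\Pr[\sum_Q Z_Q\geq a]\leq (eP/a)^{a/M}$. With $a=\Theta(T)$, $M=3$, and $T/t=\Theta(\Delta^7)$, this yields $\Delta^{-\Omega(T)}$, which does suffice once $K$ is large. Calling this ``a standard Chernoff argument'' glosses over the fact that you are in a martingale, not an i.i.d., setting, and the standard Freedman statement would not work. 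Interestingly, once you cap each path's increment at $3$ your route no longer needs the $\mathcal{P}_1$/$\mathcal{P}_2$ split at all --- heavy paths contribute at most $3$ each and there are only $O(t)$ of them --- whereas the split is essential to the paper's coupling. So the two proofs genuinely differ in structure: the paper buys elementary tools (vanilla Chernoff for independent Bernoullis) at the cost of an explicit coupling construction and the orientation device, while your route dispenses with the coupling but requires the sharper off-the-shelf martingale inequality, plus care about the exposure granularity.
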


{
\begin{proof}

Fix a path $P$ of length $T$. We will show that the probability $\Pr(|E_0(P)| < T/15) \leq 1/(\poly(n) \cdot \Delta^{T})$ after the $t$'th execution of Line \ref{ln:last} of Algorithm \ref{alg:main}. By taking a union bound over all possible paths of length $T$, we will conclude that w.h.p.~for every path $P$ of length $T$, $|E_0(P)| < T/15$. This also implies the algorithm did not fail throughout execution $1 \ldots t$ w.h.p.

We say a vertex $v$ is {\it occupied} if $\ell(v) \geq 1$. Otherwise, it is {\it empty}.  Similarly, we say an edge is occupied if any of its endpoints is occupied. Otherwise, it is 
empty.

We say an alternating path is adjacent to an edge $uv$ in $P$ if  $u$ or $v$ is in the alternating path. We divide all the alternating paths from the beginning up to after the  $t$'th execution of Line \ref{ln:last} of Algorithm \ref{alg:main} that intersect $P$ into two groups:
\begin{align*}
\mathcal{P}_1 & = \{ \text{alt. paths that are adjacent to at least $\frac{T}{1000}$ edges in $P$} \} \\
\mathcal{P}_2 & = \{ \text{alt. paths that are adjacent to fewer than $\frac{T}{1000}$ edges in $P$} \}.
\end{align*}

Because the alternating paths in each  execution of Line \ref{ln:last} of Algorithm \ref{alg:main} are vertex-disjoint, for each edge in $P$, at most 2 alternating paths are adjacent to it in each execution. Thus, at most $2t$ alternating paths are adjacent to that edge in total. This implies the number of pairs $(e, \tilde{P}) \in P \times \mathcal{P}_1$ where $\tilde{P}$ is adjacent to $e$ is at most $2tT$. Since each alternating path in $\mathcal{P}_1$ consumes at least $T/1000$ pairs, we have
$$
|\mathcal{P}_1| \leq \frac{2t T}{T/1000} = 2000t~.
$$
Note that we only place at most one blocking edge on an alternating path (i.e., if the path is truncated). Each blocking edge can cause at most three edges in $P$ to be occupied (this corresponds to the case the blocking edge is in $P$). Thus, the alternating paths in $\mathcal{P}_1$ can cause at most $6000t$ edges in $P$ to be occupied.

Now we bound the number of occupied edges in $P$ caused by the blocking edges of alternating paths in $\mathcal{P}_2$.
Let  $P = u_0 \ldots u_{T}$. Let $E'(P) = \{ u_0 u_1 , u_2 u_3, \ldots, u_{2 \cdot (\lceil T/2 \rceil - 1)}u_{2 \cdot (\lceil T/2 \rceil - 1) + 1} \}$ denote the set of every  other edge in $P$ so that $|E'(P)| = \lceil T/2 \rceil$. We say an edge is {\bf adjacent} to $E'(P)$ if it shares an endpoint with an edge in $E'(P)$. 

We say a vertex $u$ is {\bf in} $E'(P)$ if it is an endpoint of an edge in $E'(P)$. We orient every edge (including those not in $P$) $e = uv$ that is adjacent to $E'(P)$  as follows: Without loss of generality, let $u$ be the vertex that is in $E'(P)$. If $v$ is not in $E'(P)$, we orient $e$ from $u$ to $v$. Otherwise, we orient the edge from the vertex that appears earlier in $P$ to the vertex that appears later in $P$.

Given an edge $e=uv \in E'(P)$, we say $e$ is {\bf outwardly occupied} if any of the {\bf outgoing edges} from $u$ or $v$ is a blocking edge of some alternating path in $\mathcal{P}_2$. Next, we will bound the number of outwardly occupied edges in $E'(P)$. 

The reason we are using ``outwardly occupied'' instead of ``occupied'' is because it resolves some dependency issues. In particular, each alternating path in $\mathcal{P}_2$ can only make at most one edge in $E'(P)$ outwardly occupied, while it can make two edges occupied. Moreover, as we will see later, the number of occupied edges in $E'(P)$ is at most 3 times the number of outwardly occupied edges.


 
Now we hope to upper bound the total number of outwardly occupied edges $X = \sum_{e \in E'(P)} X_e$ where $X_e$ is the indicator variable for the event $e$ is outwardly occupied. Consider an edge $e$, the following is a rough argument to bound $\Pr(X_e = 1)$. The probability that an alternating path makes $e$ outwardly occupied is $O(1/T)$, since it has $\Omega(T)$ empty edges to choose from. Moreover, there are at most $O(t)$ alternating paths that are adjacent to $e$ throughout the algorithm. Thus, by taking a union bound over these paths, we have $\Pr(X_e = 1) = O(t) \cdot O(1/T) = O(t/T)$. As a result, $\E[X] =O(T \cdot (t/T)) = O(t)$. However, it will be difficult to obtain a tail bound on $X$ directly due to dependency issues.

Since we need to apply the strongest form of Chernoff bound in order to obtain a very sharp bound ($1/(\poly(n)\cdot \Delta^{T})$) on the probability that $X$ deviates from $\E[X]$, we couple random variables $\{ X_e \}_{e \in E'(P)}$ with independent Bernoulli random variables $\{Y_e \}_{e \in E'(P)}$ with success probability at most $150t/T$ in the following lemma whose proof is deferred to the end of this proof.
 \begin{lemma} \label{lem:coupling} For every $e \in E'(P)$, let $X_e = 1$ if $e$ is outwardly occupied by a blocking edge in an alternating path in $\mathcal{P}_2$ and $X_e = 0$ otherwise. Let $\{ Y_{e} \}_{e \in E'(P)}$ be independent Bernoulli random variables with success probability at most $150t/T$.  Let $X = \sum_{e \in E'(P)} X_e$ and $Y = \sum_{e \in E'(P)} Y_e$. For any $M \geq 0$, we have $\Pr(X > M )  \leq \Pr(Y > M )$. \end{lemma}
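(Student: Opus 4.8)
## Proof Proposal for Lemma~\ref{lem:coupling}

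The plan is to build the coupling by revealing the algorithm's random choices in their natural temporal order and, at each step that could make some edge $e \in E'(P)$ outwardly occupied, exhibiting a ``master'' Bernoulli variable that dominates the conditional probability of that event given everything revealed so far. Concretely, I would fix the path $P$ and process the executions of Line~\ref{ln:last} one at a time, $s = 1, \ldots, t$; within execution $s$ the relevant randomness is the choice of blocking edge for each alternating path in $\mathcal{P}_2$ that was truncated. Let $\mathcal{H}_{s}$ denote the history (the entire evolving coloring, which paths arose, and all blocking-edge choices) up through execution $s-1$. The key structural facts I would invoke are: (i) the alternating paths handled in a single execution of Line~\ref{ln:last} are vertex-disjoint, so within one execution at most one such path is adjacent to a given vertex of $P$, hence at most one can outwardly occupy a given $e \in E'(P)$ in that execution; and (ii) for any alternating path $\tilde P \in \mathcal{P}_2$, at the moment its blocking edge is chosen the set $E_0(\tilde P(T))$ of candidate edges (both endpoints empty) has size at least $T/15$ — this is exactly the invariant maintained under the ``did not fail'' assumption — and among those at most a constant number (at most $3$, counting the at-most-$2$ orientations times endpoints) have an outgoing edge landing on a fixed edge of $E'(P)$; so conditioned on $\mathcal{H}_s$ and on the identity of $\tilde P$, the probability that $\tilde P$'s random blocking edge outwardly occupies a fixed $e$ is at most $\frac{3}{T/15} = 45/T$. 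Summing over the at-most-$2t/\text{(constant)}$... more carefully, over the at most a bounded number of paths in $\mathcal{P}_2$ adjacent to $e$ across all $t$ executions — by the vertex-disjointness within executions there are at most $2$ per execution, so at most $2t$ total, but a factor-$2$ slack gives the stated $150t/T$ — we get $\Pr(X_e = 1 \mid \text{relevant sub-history}) \le 150t/T$.

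The heart of the argument is to turn these per-step conditional bounds into a genuine stochastic domination $\Pr(X > M) \le \Pr(Y > M)$ with $Y = \sum_e Y_e$ a sum of \emph{independent} Bernoullis. I would do this with the standard sequential-coupling lemma: if one can order the ``decision points'' and show that, conditioned on all earlier decisions, each $X_e$ becomes $1$ with probability at most $p := 150t/T$, then $X$ is stochastically dominated by a binomial $\mathrm{Bin}(|E'(P)|, p)$, which is exactly the law of $Y$. The subtlety is that the $X_e$'s are not revealed ``one $e$ at a time'' — a single execution's randomness can affect many $e$'s at once — so I would instead index decision points by (execution, alternating path) pairs and argue: within execution $s$, the disjoint alternating paths in $\mathcal{P}_2$ can be processed in any fixed order, and when we process $\tilde P$, it turns at most one previously-$0$ edge $e \in E'(P)$ into a $1$ (by disjointness from the other paths in the same execution and by the fact that its single blocking edge has at most one outgoing edge hitting $E'(P)$ under our orientation), and it does so with conditional probability $\le 45/T$ given everything revealed so far. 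Assigning each edge $e$ its own independent $\mathrm{Uniform}[0,1]$ seed $U_e$ and declaring $Y_e = \mathbf{1}[U_e \le p]$, I would maintain the coupling invariant that whenever $X_e$ flips to $1$ at some decision point with conditional probability $q \le 45/T \le p$, we route the event through the seed so that $X_e = 1 \implies U_e \le p$, hence $X_e \le Y_e$ pointwise on the coupled space; summing gives $X \le Y$ pointwise and therefore $\Pr(X > M) \le \Pr(Y > M)$. (The bookkeeping that a given $e$ is ``at risk'' at most $2t$ times and each time fires with probability $\le 45/T$ is what makes the union of those at-most-$2t$ conditional events have probability $\le 150t/T$, leaving room in the constant.)

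The main obstacle I anticipate is making the conditioning rigorous: the object ``$\mathcal{P}_2$'' and the set $E_0(\tilde P(T))$ are themselves random and depend on the very choices we are trying to control, so one must be careful that the bound ``$\Pr(\tilde P$ outwardly occupies $e) \le 45/T$'' holds \emph{conditionally} on the filtration up to the moment the blocking edge of $\tilde P$ is chosen — at which point $\mathcal{P}_1/\mathcal{P}_2$ membership of $\tilde P$, the vertex set of $\tilde P$, and the current load pattern (hence $E_0(\tilde P(T))$) are all already determined and non-random, and only the uniform choice within $E_0(\tilde P(T))$ remains. Once phrased this way — filtration generated by ``all algorithmic randomness except the final uniform pick of each blocking edge, revealed path-by-path in temporal order'' — the per-step bound is immediate from $|E_0(\tilde P(T))| \ge T/15$ (guaranteed by the non-failure assumption) and the orientation trick, and the sequential coupling goes through. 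I would also note explicitly that we only need the bound on the coupled tail probability, so we never have to reason about the joint law of the $X_e$'s directly, which is precisely why the coupling is introduced. A secondary, purely cosmetic obstacle is tracking the constants ($3$, $15$, $45$, $150$) so that they are consistent with the ``$|E_0(P)| \ge T/15$'' invariant and the later Chernoff computation; I would leave generous slack rather than optimize.
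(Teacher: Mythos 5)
Your high-level plan is the same as the paper's: reveal the algorithmic randomness in temporal order, bound the conditional probability of each ``at-risk event'' for a fixed $e\in E'(P)$ by $O(1/T)$ using the non-failure invariant $|E_0(\tilde P)|\geq T/15$ and the orientation/disjointness trick (each blocking edge is outgoing from at most one vertex of $E'(P)$, so $out_{\tilde P}(e)$ and $out_{\tilde P}(e')$ are disjoint, hence a path touches at most one $e$), then invoke sequential coupling. The paper also constructs the auxiliary family differently from you: it introduces $2t$ independent Bernoullis $\theta_{e,1},\dots,\theta_{e,2t}$ per edge, each with success probability $75/T$, sets $Y_e=\bigvee_i\theta_{e,i}$, and maintains a pointer $q(e)$ that consumes one fresh $\theta_{e,q(e)}$ per at-risk event, whereas you use a single $\mathrm{Uniform}[0,1]$ seed $U_e$ with a cumulative-threshold scheme. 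Both implement the same domination; the paper's variant avoids the threshold bookkeeping.

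There is one detail you gloss over that the paper handles explicitly and that your $45/T$ bound does not correctly capture. When a single path $\tilde P$ is processed, its blocking-edge choice is a single uniform draw from $E_0(\tilde P)$ that could occupy any one of the edges in $E'_{\tilde P}(P)$. To route this through per-edge seeds you must iterate through $E'_{\tilde P}(P)$ in some order, and the probability that $e$ is hit, conditioned on the previous $e'$'s not being hit, is $|out_{\tilde P}(e)|/(|E_0(\tilde P)|-occurred)$ where $occurred$ is the mass already ruled out. The paper's ``modified algorithm'' with the $occurred$ counter is exactly this decomposition, and the $75/T$ is chosen so that it dominates this conditional probability (using $occurred\leq 4T/1000$ and $|out_{\tilde P}(e)|\leq 4$, not $3$). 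Your $45/T$ uses the unconditioned denominator $T/15$ and the count $3$; as written it is not quite the right quantity to feed into the coupling, though the final constant $150t/T$ you target is still achievable since you deliberately left slack. So this is not a fatal gap — you identified the obstacle and the filtration setup correctly — but the proof as sketched would need the within-$\tilde P$ sequential conditioning made explicit to be complete.
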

Therefore,
\begin{align*}
\E[Y]   \leq \lceil T/2 \rceil \cdot \frac{150t}{T}  
< 100t~.
\end{align*}
Let $\delta = (T/700 t) - 1$, and so $(1+\delta) = (T/700 t)$. Let $M= 100t$ be an upper bound of $\E[Y]$.  We can apply a variant of Chernoff bound (where one could replace $\E[Y]$ by $M$ if $M\geq \E[Y]$, see Corollary 4 of \cite{PettieS-trianglefree}) to obtain:
\begin{align*}
\Pr(Y > T/7) &= \Pr(Y > (T/700t) \cdot 100t)  \\
& = \Pr(Y > (1+\delta) \cdot M) \\
& \leq \Pr(Y > \E[Y] + \delta M) \\
&  \leq \left(\frac{e}{1+\delta} \right)^{(1+\delta) \cdot M} && \text{Corollary 4 of \cite{PettieS-trianglefree}}\\ 
&= \left(\frac{e}{1+\delta} \right)^{T/7} \\
&\leq \left(\frac{1}{e \Delta^{7}}\right)^{T/7}  && (1+\delta) = \frac{T}{700t} > e^2\cdot \Delta^{7}\\
&\leq \frac{1}{\poly(n)} \cdot \frac{1}{\Delta^{T}} && T = \Omega(\log n)~.	
\end{align*}

By Lemma \ref{lem:coupling}, $\Pr(X > T/7) \leq \Pr(Y > T/7 ) \leq 1/\poly(n) \cdot 1/\Delta^T$.

Suppose that the number of outwardly occupied edges in $E'(P)$ is at most $T/7$. Consider an occupied edge that is not outwardly occupied. Its adjacent blocking edge must be adjacent to an edge $e \in E'(P)$ that is outwardly occupied. Moreover, each $e \in E'(P)$ can only be adjacent to at most two blocking edges. Therefore, the number of occupied edges is at most $T/7 + 2 (T/7) +6000t = 3T/7 + 6000t$. The number of empty edges in $E'(P)$ must be at least $\lceil T / 2 \rceil - 3T/7 - 6000 t > T/15$. Thus, the probability that there are fewer than $T/15$ empty edges on $P$ is at most $1/(\poly(n) \cdot \Delta^{T})$.

There are at most $n \cdot \Delta^{T-1}$ paths of length $T$. By taking a union bound  all length-$T$ paths, the probability that there is a length-$T$ path with fewer than $T/15$ empty edges is at most $(n \cdot \Delta^{T-1} )/(\poly(n) \cdot \Delta^{T}) = 1/\poly(n)$. This also implies that the algorithm did not fail w.h.p. 
\end{proof}


\begin{proof}[Proof of Lemma \ref{lem:coupling}]

First, we create i.i.d.~random variables $\theta_{e,i}$ for each edge $e \in E'(P)$ and $1 \leq i \leq 2t$, where $\Pr(\theta_{e,i} = 1) = 75/T$. For each $e \in E'(P)$, we let $Y_e = \bigvee_{1 \leq i \leq 2t} (\theta_{e,i} = 1)$. Clearly, $\Pr(Y_e = 1) \leq 2t \cdot (75/T) = 150t/ T$ and $\{Y_e\}_{e \in E'(P)}$ are independent. 

The next step is to couple $\{X_e\}_{e \in E'(P)}$ with $\{\theta_{e,i}\}_{e,i}$. Recall that in our algorithm, each alternating path $\tilde{P} \in \mathcal{P}_2$ (that is processed) selects the blocking edge uniformly at random from $E_0(\tilde{P})$ (unoccupied edges at the execution when $\tilde{P}$ occurs). Now we consider a modified algorithm that has the same outcome with the original algorithm. It allows us to do the coupling more easily. Given $e \in E'(P)$, let $out_{\tilde{P}}(e)$ denote the set of edges that are in $E_0(\tilde{P})$ and are outgoing edges from any endpoint of $e$. Note that if $e,e' \in E'(P)$ and $e \neq e'$, then $out_{\tilde{P}}(e)$ and $out_{\tilde{P}}(e')$ are disjoint.

Let $E'_{\tilde{P}}(P) \subseteq E'(P)$ denote the set of edges with non-empty $out_{\tilde{P}}(e)$. Since $\tilde{P} \in \mathcal{P}_2$, $|E'_{\tilde{P}}(P)| \leq T/1000$. Also, since $|out_{\tilde{P}}(e)| \leq 4$, we have $\sum_{e \in E'_{\tilde{P}}(P) } |out_{\tilde{P}}(e)| \leq 4T/1000$.

The modified algorithm maintains a counter, $occurred$, which is initially set to $0$. Then it goes through each edge $e \in E'_{\tilde{P}}(P)$ and does the following: Throw a coin independently with success probability $|out_{\tilde{P}}(e)|/(|E_0(\tilde{P})| - occurred)$. If it succeeds, it selects a blocking edge randomly from $out_{\tilde{P}}(e)$ and then it is done with processing $\tilde{P}$. Otherwise, it sets $occurred \leftarrow occurred + |out_{\tilde{P}(e)}|$ and continues to process the next edge in $E'_{\tilde{P}}(P)$. If the algorithm did not successfully select a blocking edge after it processed every edge in $E'_{\tilde{P}}(P) \subseteq E'(P)$, it selects an edge from $E_0(\tilde{P}) \setminus (\bigcup_{e \in E'_{\tilde{P}}(P)} out_{\tilde{P}}(e))$ uniformly at random as the blocking edge.

It is easy to see that in the modified algorithm, the probability that any edge in $E_0(\tilde{P})$ is selected as the blocking edge is exactly $1 /|E_0(\tilde{P})|$. Therefore, the modified algorithm is equivalent to the original algorithm where $\tilde{P}$ selects a random edge in $E_0(\tilde{P})$ as the blocking edge.

\sloppy
Now we couple the randomness of the modified algorithm with $\{\theta_{e,i}\}_{i,e}$. When the modified algorithm processes an edge $e \in E'_{\tilde{P}}(P)$, instead of throwing a new coin with probability $|out_{\tilde{P}}(e)|/(|E_0(\tilde{P})| - occurred)$, we will couple it with the randomness from $\theta_{e,i}$. For each $e \in E'(P)$, we maintain a counter $q(e)$ that points to the smallest index such that $\theta_{e, q(e)}$ is not yet exposed. Initially, $q(e)$ are 1 for every $e \in E'(P)$. Now when the algorithm processes $\tilde{P} \in \mathcal{P}_2$ and an edge $e \in E'_{\tilde{P}}(P)$, we will expose $\theta_{e, q(e)}$. Let $\mathcal{E}_1$ denote the event that $\theta_{e, q(e)} = 1$. Also, we set $q(e) \leftarrow q(e) + 1$.  Let $\mathcal{E}_2$ denote the event that a coin with success probability $ \frac{1}{ \Pr(\mathcal{E}_1)} \cdot \frac{|out_{\tilde{P}}(e)|}{(|E_0(\tilde{P})| - occurred)}$ succeeds.  Note that $\Pr(\mathcal{E}_1) \cdot \Pr(\mathcal{E}_2) = |out_{\tilde{P}}(e)|/(|E_0(\tilde{P})| - occurred)$. The original coin with probability  $|out_{\tilde{P}}(e)|/(|E_0(\tilde{P})| - occurred)$ will be simulated by the trial that both the two events $\mathcal{E}_1$ and $\mathcal{E}_2$ happen. 

Note that given $e \in E'(P)$, at most $2t$ alternating paths in $\mathcal{P}_2$ may be adjacent to $e$ in total. This implies $\theta_{e, q(e)}$ is always well-defined when the algorithm refers to it. It remains to show that $\Pr(\mathcal{E}_2)$ is a valid probability, namely, $|out_{\tilde{P}}(e)|/(|E_0(\tilde{P})| - occurred) \leq \Pr(\mathcal{E}_1)$. 
\begin{align*}
\Pr(\mathcal{E}_1) &= 75 / T \\
&\geq \left( \frac{|out_{\tilde{P}}(e)|}{T/15 - 4T/1000} \right) && |out_{\tilde{P}}(e)| \leq 4\\
&\geq  \left( \frac{|out_{\tilde{P}}(e)|}{T/15 - occurred} \right) && occurred \leq 4T/1000 \\
&\geq \left( \frac{|out_{\tilde{P}}(e)|}{|E_0(\tilde{P})| - occurred} \right) && |E_0(\tilde{P})| \geq T/15~.
\end{align*}

The second to the last line holds since $occurred \leq \sum_{e \in E'_{\tilde{P}}(P) } |out_{\tilde{P}}(e)| \leq 4T/1000$. Also, the last line holds since if $|E_0(\tilde{P})| < T/15$, the algorithm freezes and $\tilde{P}$ will not be processed.  

Finally, we point out that the coupling argument implies whenever $X_e = 1$, it must be the case that $\theta_{e,i} = 1$ for some $1 \leq i \leq 2t$. This in turns implies that $Y_e = 1$ and thus $X_e \leq Y_e$. Therefore, for any $M>0$, $\Pr(X > M) \leq \Pr(Y > M)$.
\end{proof}




\ignore{
\begin{proof}[Proof of Lemma \ref{lem:occupied_prob}]


Fix an edge $e \in P$. Given an alternating path $\tilde{P}$, we say that an edge $e$ is outwardly occupied by $\tilde{P}$ if $\tilde{P}$ selects an edge adjacent to $e$ (including $e$ itself) as the blocking edge.

\begin{align*}\Pr\left(\mbox{$e$ is outwardly occupied } \mid  \bigwedge_{e' \in Z} X_{e'} \right) \leq \sum_{\tilde{P} \in \mathcal{P}_2} \Pr(\mbox{$e$ is outwardly occupied by $\tilde{P}$} \mid  \bigwedge_{e' \in Z} X_{e'}) \end{align*}

We claim that for every $\tilde{P} \in \mathcal{P}_2$,  $$\Pr(\mbox{$e$ is outwardly occupied by $\tilde{P}$} \mid  \bigwedge_{e' \in Z} X_{e'}) \leq 2/((T/15) - 4/(T/1000)) + 1/n^c$$

Suppose that $\tilde{P}$ is an alternating path that occurred during the $r$'th execution, where $1 \leq r \leq q$. We will condition on $H(r-1)$. This implies $|E_0(\tilde{P})| \geq T/15$. Now since $\tilde{P} \in \mathcal{P}_2$, it is adjacent to at most $T/1000$ edges in $Z$. Let $Z'$ denote the set of edges in $Z$ that are adjacent to $\tilde{P}$. 

Note that since given two different edges $e_1,e_2 \in E'(P)$, placing a blocking edge in an alternating path can only make at most {\it one} of $e_1$ or $e_2$ to be outwardly occupied. If the outcome of any $e' \in Z'$ we are conditioning has $X_{e'} = 1$, then $\Pr(\mbox{$e$ is outwardly occupied by $\tilde{P}$}) = 0$

Note that $H(q-1)$ implies $H(r-1)$, and $H(q-1)$ is a high probability event, conditioning on it can only affect the probability of a given event by $\pm 1/n^c$ for a constant $c > 0$ (i.e. $|\Pr(A) - \Pr(A \mid B)| \leq \Pr(\bar{B})$, see Lemma 3 of \cite{PettieS-trianglefree}).

 Given any alternating path $\tilde{P} \in \mathcal{P}_2$, we know that conditioned on $H(q-1)$,the probability that $\tilde{P}$ picks any adjacent edge to $u$ as a blocking edge is at most $2/(T/15) + 1/n^{c}$ since the alternating path has at most two edges adjacent to $u$.

Fix an edge $e\in P$. Consider an arbitrary set of edges $Z \subseteq E'(P)$ such that $e \notin Z$. 
We will argue that, the probability that $X_e =1$  conditioned on any outcomes of $X_{e'}$ for all $e' \in Z$ is at most
$$
\frac{4t}{T/15 -  4 T/1000} + \frac{1}{n^c} < \frac{64t}{T} + \frac{1}{n^c} <  \frac{100t}{T} ~. 
$$ 
The argument is as follows. Consider an alternating path $\tilde{P}$ in $\mathcal{P}_2$. Recall that placing a blocking edge in  $\tilde{P}$ can only make at most one edge in $E'(P)$ outwardly occupied. If any edge $e' \in Z$ is outwardly occupied because of the blocking edge in $\tilde{P}$ (or $e'$ is outwardly occupied because of $\tilde{P}$ for short), then the probability of $e$ being outwardly occupied because of $\tilde{P}$  is 0. 
Now, suppose all edges $e' \in Z$ are not outwardly occupied because of $\tilde{P}$. Recall that $\tilde{P}$ is adjacent to at most $T/1000-1$ edges in $E'(P)$. Furthermore, the event that an edge $e' \in Z$ is not outwardly occupied because of $\tilde{P}$ implies that each of at most four edges in $\tilde{P}$ that are adjacent to $e'$ cannot be a blocking edge. Hence, we have 
\begin{align*}
&\Pr(\text{$e$ is outwardly occupied because of $\tilde{P}$} \mid \bigwedge_{e' \in Z} X_{e'} )\\
& \leq \frac{4}{T/15-4T/1000} +  \frac{1}{n^c}< \frac{100}{T}~.
\end{align*}
Therefore, by a union bound over at most $t$ alternating paths in $\mathcal{P}_2$, we deduce that
\begin{align*}
&\Pr(X_e=1 \mid \bigwedge_{e' \in Z} X_{e'} )< \frac{100t}{T}~. \qedhere
\end{align*}
\end{proof}
}
}
By Lemma \ref{lem:outerloop}, Line \ref{ln:low_arbor_color} is executed $O(\Delta^4 \log n)$ times. Each execution uses $O(T \cdot \log n)$ time, where $T = O(\Delta^{11} \log n)$. By Lemma \ref{lem:outerloop} again, Line \ref{ln:last} is also executed $O(\Delta^4 \log n)$ times. Each execution takes $O(T)$ time. Therefore, the number of rounds used by our algorithm to obtain a $(\Delta+2)$-edge-coloring is $O(\Delta^{15}\log^3 n)$. The following lemma (whose proof we postpone to the end of the section) completes the proof of Theorem \ref{thm:randomized}, which shows the algorithm can be converted to a $(1+\epsilon)\Delta$-edge-coloring algorithm that runs in $O((1/\epsilon)^{15} \log^{3} n)$ rounds for any $\epsilon \geq 2 / \Delta$.

\begin{lemma}\label{lem:convert}Suppose that there is a deterministic (randomized) $(\Delta + z)$-edge-coloring algorithm that runs in $O(\Delta^x \log^{y} n)$ rounds. Then there exists a deterministic (randomized) $(1+\epsilon)\Delta$-edge-coloring algorithm that runs in $O((1/\epsilon)^x \cdot (z+1)^{x} \cdot \log^{y} n)+\tilde{O}(\log^2\Delta \cdot \epsilon^{-1} \cdot \log \epsilon^{-1} \cdot \log n)$ rounds for any $\epsilon \geq z/\Delta$, where $\tilde{O}(\cdot)$ hides a $\poly(\log \log n)$ factor.\end{lemma}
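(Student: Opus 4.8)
The plan is to reduce a $(1+\epsilon)\Delta$-edge-coloring of $G$ to many instances of the given $(\Delta'+z)$-edge-coloring algorithm, run on edge-disjoint subgraphs of small maximum degree $\Delta'=\Theta(z/\epsilon)$, via distributed degree splitting. I would set $k=\lfloor c\,\epsilon\Delta/z\rfloor$ for a sufficiently small constant $c>0$; the hypothesis $\epsilon\ge z/\Delta$ is exactly what makes $k\ge 1$. If $k\le 1$ then $\Delta=O(z/\epsilon)$ and I would simply run the given algorithm directly on $G$, which already meets the claimed bound since $O(\Delta^x\log^y n)=O((z/\epsilon)^x\log^y n)=O((1/\epsilon)^x(z+1)^x\log^y n)$. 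Otherwise I would invoke the deterministic degree-splitting procedure of \cite{GHKMSU17} to partition $E(G)$ into $k$ edge-disjoint subgraphs $G_1,\dots,G_k$ of maximum degrees $\Delta_1,\dots,\Delta_k$, run the given $(\Delta_i+z)$-edge-coloring algorithm on each $G_i$ with its own block of $\Delta_i+z$ colors disjoint from the blocks used for the other subgraphs, and output the union of these colorings. This is a proper edge-coloring of $G$ (adjacent edges in the same subgraph are colored properly by the subroutine; adjacent edges in different subgraphs get colors from disjoint blocks), using $\sum_{i=1}^k(\Delta_i+z)$ colors.

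The role of the splitting is to control both $\max_i\Delta_i$ and $\sum_i\Delta_i$. Applying the $2$-split of \cite{GHKMSU17} recursively for $\lceil\log_2 k\rceil$ levels with relative-error parameter $\eta=\Theta(\epsilon/\log k)$ at each level, a subgraph at recursion level $j$ has maximum degree at most $(1+\eta)^j\,\Delta/2^j$; since $(1+\eta)^{\lceil\log_2 k\rceil}\le 1+\tfrac{\epsilon}{2}$ with the constants chosen small enough, every final subgraph satisfies $\Delta_i\le(1+\tfrac{\epsilon}{2})\Delta/k=O(z/\epsilon)$, and likewise $\sum_i\Delta_i\le(1+\tfrac{\epsilon}{2})\Delta$. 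Hence the total number of colors is $\sum_i(\Delta_i+z)\le(1+\tfrac{\epsilon}{2})\Delta+kz\le(1+\epsilon)\Delta$, using $kz\le c\,\epsilon\Delta\le\tfrac{\epsilon}{2}\Delta$.

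For the running time, the $k$ subroutine calls in the coloring phase run simultaneously: in the \local model, executing an algorithm on a subgraph of the communication graph costs no extra rounds, and since the palettes are pairwise disjoint there is nothing to coordinate, so the coloring phase takes $O(\max_i\Delta_i^x\log^y n)=O((z/\epsilon)^x\log^y n)=O((1/\epsilon)^x(z+1)^x\log^y n)$ rounds. If the given algorithm is deterministic, so is the whole construction; if it is randomized, a union bound over the $k\le n$ subgraphs preserves the high-probability guarantee. In the splitting phase, all $2^j$ splits at level $j$ act on edge-disjoint subgraphs and run in parallel, and by \cite{GHKMSU17} a single split with relative error $\eta$ costs $\tilde O(\log n\cdot\poly(1/\eta,\log\Delta))$ rounds; substituting $1/\eta=\Theta(\log k/\epsilon)=O(\log\Delta/\epsilon)$ and summing over the $O(\log k)=O(\log\Delta)$ levels yields the additive $\tilde O(\log^2\Delta\cdot\epsilon^{-1}\log\epsilon^{-1}\cdot\log n)$ term.

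I expect the main obstacle to be the degree-splitting bookkeeping: choosing $\eta$ small enough that the errors compounded over $\Theta(\log k)$ levels keep $\max_i\Delta_i=O(z/\epsilon)$ and $\sum_i\Delta_i\le(1+\tfrac{\epsilon}{2})\Delta$, and, more delicately, handling the bottom of the recursion when the target degree $\Theta(z/\epsilon)$ falls below the degree range where the relative-error splitting of \cite{GHKMSU17} applies cleanly. In that regime I would stop splitting at a slightly larger degree and use the fact that $\epsilon\ge z/\Delta$ — together with the way $\epsilon$ is tied to $z$ in the statement — to argue that the correspondingly larger count $kz$ of extra colors is still at most $\epsilon\Delta$. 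The remaining steps are routine geometric-series estimates.
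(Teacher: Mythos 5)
Your proposal is correct and follows essentially the same route as the paper's proof: recursively applying the degree-splitting of \cite{GHKMSU17} down to a target maximum degree $\Theta(z/\epsilon)$, using disjoint palettes on the resulting subgraphs, choosing the per-level relative error as $\Theta(\epsilon/\log k)$ (the paper writes $\epsilon' = \epsilon/(32\log_{4/3}\Delta)$) so the compounded multiplicative overhead stays at $1+O(\epsilon)$, and handling the $\Delta = O(z/\epsilon)$ base case by running the given algorithm directly. You correctly flag the additive $+4$ error of the splitter as the delicate point; the paper deals with it exactly as you anticipate, by stopping the recursion once $(1/2+\epsilon')^h\Delta$ reaches $\Theta((z+8)/\epsilon) \geq 64/\epsilon$, so that the accumulated additive term (at most $16$) is dominated by an $\epsilon/4$ fraction of the multiplicative part.
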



\paragraph{Achieving $\ell(G) = O(\log n / (\log \lambda + \log \log n))$ deterministically} We will choose $T = 2t\cdot \lambda$. Let $(1+\delta) = K \cdot \lambda \cdot \log n /(\log \lambda + \log \log n)$ for some constant $K > 0$.  Given an alternating path $P$ with length at least $T$, let $\ell_q (v)$ denote the load of $v$ after $q$'th execution of Line \ref{ln:last} of Algorithm \ref{alg:main}.

During the $q$'th execution of Line \ref{ln:last} of Algorithm \ref{alg:main}, we select the blocking edge as follows. Given an edge $e = uv$, define the cost $c(e)$ to be $(1+\delta)^{\ell_{q-1}(v)} + (1+\delta)^{\ell_{q-1}(u)} $. Given an alternating path $P$, we select the edge $e \in P(T)$ with the minimum cost $c(e)$ to be the blocking edge. We will show that this greedy algorithm achieves a maximum load of $O(\log n /(\log \lambda + \log \log n ) )$.

\begin{lemma} If $T = 2t \lambda$, the greedy algorithm achieves a load of $O(\log n / (\log \lambda+ \log \log n))$. \end{lemma}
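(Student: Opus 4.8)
The strategy is a standard potential-function argument for greedy load balancing, mirroring the classical "power of two choices" and weighted-balls analyses. Define the potential after the $q$-th execution of Line~\ref{ln:last} to be $\Phi_q = \sum_{v \in V} (1+\delta)^{\ell_q(v)}$, where $(1+\delta) = K \lambda \log n / (\log\lambda + \log\log n)$. Initially $\Phi_0 = n$. The goal is to show $\E[\Phi_q]$ (or $\Phi_q$ itself, since the greedy choice is deterministic) stays bounded by $\poly(n)$, which forces $\max_v \ell_q(v) = O(\log n/(\log\lambda + \log\log n))$ since any single vertex with load $\ell$ contributes $(1+\delta)^\ell$ to the potential, and $(1+\delta)^{c\log n/(\log\lambda+\log\log n)} > n^2$ once $c$ is a large enough constant.

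First I would bound the increase in potential caused by a single alternating path $P$ of length $\ge T$ processed during execution $q$. The greedy rule picks the blocking edge $e^\star = uv \in P(T)$ minimizing $c(e) = (1+\delta)^{\ell_{q-1}(u)} + (1+\delta)^{\ell_{q-1}(v)}$. Placing the blocking edge on $e^\star$ raises $\ell(u)$ and $\ell(v)$ each by one, so the increase in $\Phi$ from this path is at most $\delta\bigl((1+\delta)^{\ell_{q-1}(u)} + (1+\delta)^{\ell_{q-1}(v)}\bigr) = \delta \cdot c(e^\star)$. Now the key averaging step: $P(T)$ has $T$ edges, hence at least $T/2$ vertex-disjoint edges (take every other edge), and the minimum cost $c(e^\star)$ is at most the average cost over these $\ge T/2$ disjoint edges, which is at most $\frac{2}{T}\sum_{v\in P(T)} (1+\delta)^{\ell_{q-1}(v)} \le \frac{2}{T}\Phi_{q-1}$. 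So each processed path of length $\ge T$ increases the potential by at most $\frac{2\delta}{T}\Phi_{q-1}$.

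Next I would count how many such paths are processed across one execution of Line~\ref{ln:last} and aggregate. Within a single execution the alternating paths are vertex-disjoint, and across the whole algorithm Line~\ref{ln:last} runs $t$ times, so each vertex is passed by at most $t$ alternating paths total; but more usefully, in a single execution the total potential increase is at most $\sum_{\text{paths } P} \frac{2\delta}{T}\Phi_{q-1}$ — and since the paths in one execution are vertex-disjoint the averaging bound can be sharpened so that the total increase in that execution is $\le \frac{2\delta}{T}\Phi_{q-1}$ (the disjoint $P(T)$'s together touch at most the vertices counted once in $\Phi_{q-1}$). Thus $\Phi_q \le (1 + \tfrac{2\delta}{T})\Phi_{q-1}$, and after all $t$ executions $\Phi_t \le (1+\tfrac{2\delta}{T})^t \Phi_0 \le e^{2\delta t / T} \cdot n$. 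With $T = 2t\lambda$ this is $e^{\delta/\lambda}\cdot n = \exp\bigl(K\log n/(\log\lambda + \log\log n)\bigr)\cdot n$. Hence for every $v$, $(1+\delta)^{\ell_t(v)} \le \Phi_t$, so $\ell_t(v)\log(1+\delta) \le \log\Phi_t = O(\log n)$, and since $\log(1+\delta) = \Theta(\log\lambda + \log\log n)$ we conclude $\ell_t(v) = O(\log n/(\log\lambda + \log\log n))$, i.e. $\ell(G) = O(\log n/(\log\lambda+\log\log n))$, as claimed.

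The main obstacle I anticipate is making the per-execution aggregation rigorous: one must be careful that when several vertex-disjoint paths in the same execution each place a blocking edge, the bound $c(e^\star_P) \le \frac{2}{T}\sum_{v\in P(T)}(1+\delta)^{\ell_{q-1}(v)}$ uses the potential at the \emph{start} of the execution ($\ell_{q-1}$), which is legitimate because the paths are disjoint and thus do not see each other's blocking edges within the execution — so the costs are all evaluated against the same $\Phi_{q-1}$ and the contributions from disjoint vertex sets sum to at most $\Phi_{q-1}$. A secondary point to check is that only paths of length $\ge T$ contribute blocking edges, and that the edges of $P(T)$ need not be distinct from $P$'s earlier portion in a problematic way — but since we only ever read $\ell_{q-1}$ on $P(T)$ and increment two coordinates, this is fine. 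Finally, one should verify the arithmetic $2\delta t/T = \delta/\lambda$ and that $\log(1+\delta)\ge \tfrac12(\log\lambda + \log\log n)$ for the chosen constant $K$, which is routine.
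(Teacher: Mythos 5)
Your proposal is correct and takes essentially the same approach as the paper: a potential function $\sum_v (1+\delta)^{\ell_q(v)}$ (the paper wraps it in a normalizing factor to make it a pessimistic estimator for conditional expectations, but the content is identical), the per-execution bound $\Phi_q \le (1+2\delta/T)\Phi_{q-1}$ obtained by averaging over the $\ge T/2$ vertex-disjoint candidate edges of each $P(T)$ and exploiting vertex-disjointness of the paths within an execution, and then the telescoping plus the arithmetic $\delta/\lambda$ and $\log(1+\delta)=\Theta(\log\lambda+\log\log n)$ to conclude. The only cosmetic difference is that the paper phrases the per-step bound as ``$\min_B Q(B) \le \E_B[Q(B)]$'' with a random baseline, whereas you derive the same inequality directly via min-$\le$-average; both are the same estimate.
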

\begin{proof}
We use the method of conditional expectation to show that the greedy algorithm is a derandomization of the randomized algorithm. Define the pessimistic estimator  $$\Phi(q) = \sum_{v \in V} \frac{(1+\delta)^{\ell_q(v)} \cdot (1+\delta \cdot (2 / T) )^{t-q}}{(1+\delta)^{(1+\delta)/\lambda}}~.$$

We show that $\Phi(q+1) \leq \Phi(q)$  by the choice of the greedy algorithm. Consider the $(q+1)$'th iteration of Line \ref{ln:last} of Algorithm \ref{alg:main}.  

Let $B$ denote a set of blocking edges chosen during $(q+1)$'th execution. Given an edge $e$, if $v$ is an endpoint of $e$, we write $v \in e$. Moreover, we write $v \in B$, if there exists $e \in B$ such that $v \in e$. We use $[ v \in B]$ denote the indicator variable for the event $v \in B$. Let $Q(B)$ denote the following quantity:

$$Q(B) = \sum_{v \in V} (1+\delta)^{\ell_q(v) + [v \in B]}~.$$

Suppose that we choose each blocking edge randomly in $(q+1)$'th execution of Line \ref{ln:last}, we have \begin{align*}  & \E[Q(B)] \\ 
&= \sum_{v \in V} \left( \Pr(v \in B)\cdot (1+\delta)^{\ell_q(v)+1} + \Pr(v \notin B) \cdot (1+\delta)^{\ell_q(v)}  \right) \\
&= \sum_{v \in V}  (1+\delta)^{\ell_q(v)} \cdot \left(\Pr(v \in B)\cdot(1 + \delta) + (1 - \Pr(v \in B)) \right)\\
&=  \sum_{v \in V}  (1+\delta)^{\ell_q(v)} \cdot \left(1 + \delta \cdot \Pr(v \in B) \right) \\
&\leq  \sum_{v \in V} (1+\delta)^{\ell_q(v)} \cdot \left(1 + \delta \cdot \frac{2}{T}  \right) ~. \end{align*}

Observe that given $e$ and $e'$, if $c(e) < c(e')$, then
$\sum_{v \in V} (1+\delta)^{\ell_q(v) + [v \in e]} < \sum_{v \in V} (1+\delta)^{\ell_q(v) + [v \in e']}$. Moreover, the alternating paths generated in the $(q+1)$'s execution of Line \ref{ln:last} do not intersect. Therefore, our greedy strategy is always choosing a set of blocking edges to minimize $Q(B)$. Thus,

\begin{align*}
\Phi(q + 1) &= \frac{(1+\delta \cdot (2/T))^{t-q-1}}{(1+\delta)^{(1+\delta)/\lambda}} \cdot \min_{B} Q(B) \\
&\leq \frac{(1+ \delta \cdot (2/T))^{t-q-1}}{(1+\delta)^{(1+\delta)/\lambda}} \cdot \E_{B}[Q(B)] \\
&\leq \frac{(1+\delta  \cdot (2/T))^{t-q-1}}{(1+\delta)^{(1+\delta)/\lambda}} \cdot \sum_{v \in V} (1+\delta)^{\ell_q(v)}\left(1 + \delta \cdot (2/T) \right) \\
& = \Phi(q)~.
\end{align*} 

Let $X = \log \lambda + \log \log n$. We have
\begin{align*}
 &\sum_{v \in V} (1+\delta)^{\ell_t(v) - (1+\delta)/\lambda} = \Phi(t) \leq \Phi(t-1) \leq \ldots \leq \Phi(0) \\
 &= \sum_{v \in V} \frac{(1+\delta \cdot (2/T))^{t}}{(1+\delta)^{(1+\delta)/\lambda}} \\
&\leq \sum_{v \in V} \frac{e^{\delta \cdot (2t / T)}}{(1+\delta)^{(1+\delta)/\lambda}}&& \text{(since  $1 +x \leq e^{x}$)} \\
&\leq  \sum_{v \in V} \frac{e^{\delta / \lambda}}{(1+\delta)^{(1+\delta)/\lambda}} \\
&\leq \sum_{v \in V} \left[ \frac{e}{K \lambda \log n / (\log \lambda + \log \log n)} \right]^{K\log n / (\log \lambda + \log \log n)} \\
&= \sum_{v \in V}\exp(-(X + \log K - \log(X) - 1) \cdot K \log n / X ) \\
&\leq \sum_{v \in V}\exp(-(K/2) \log n) && \text{(for sufficiently large $X$)}\\
&\leq \sum_{v \in V} \frac{1}{\poly(n)} < 1~.
\end{align*}
This implies $\ell_t(v) \leq (1+\delta)/\lambda = O(\log n / (\log \lambda + \log \log n))$ for all $v$. Thus, we have achieved $\ell(G) = O(\log n /(\log \lambda + \log \log n))$.
\end{proof}
The running time is dominated by Line \ref{ln:low_arbor_color}, which is executed $O(\Delta^4 \log n)$ times. Each execution uses $O(T \cdot \log n)$ time. The total running time for obtaining a $\Delta + O(\log n / (\log \lambda + \log \log n))$ coloring is $O(\Delta^4 T \cdot \log^2 n) = O(\lambda \cdot \Delta^8 \log^3 n)$. We show how to reduce the running time to $O(\lambda \cdot \Delta^6 \log^3 n)$ in the next section. By Lemma \ref{lem:convert}, it can be converted to a deterministic $(1+\epsilon)\Delta$-edge-coloring algorithm that runs in $O(\lambda \cdot  (1/\epsilon)^6 \log^9 n)$ algorithm for $\epsilon = \Omega(\log n / (\Delta (\log \lambda  +\log \log n)))$.

\begin{proof}[Proof of Lemma \ref{lem:convert}]
W.L.O.G.~we may assume $\epsilon \leq 1/8$. If $\Delta \leq (16(z+8)) / \epsilon$, then by applying the $(\Delta+z)$-edge-coloring algorithm, we obtain an $(1+\epsilon)\Delta$-edge-coloring in $O(\Delta^{x} \log^{y} n) = O((1/\epsilon)^{x} \cdot (z+1)^{x} \log^{y} n)$ rounds, since $z \leq \epsilon \Delta$ and $\epsilon \leq (16(z+8))/\Delta$. 

Otherwise, let $\epsilon' = \epsilon / (32 \log_{4/3} \Delta)$.  By using the splitting algorithm of \cite{GHKMSU17}, we obtain two subgraphs of maximum degree $(1/2 + \epsilon')\Delta + 4$ in $\tilde{O}(\epsilon'^{-1} \cdot \log \epsilon'^{-1} \cdot \log n)$ rounds. Let $\Delta_0 = \Delta$ and $\Delta_i = (1/2 + \epsilon')\Delta_{i-1} + 4$. If we apply the algorithm recursively for $h$ levels, we obtain $2^h$ subgraphs whose maximum degree are bounded by $\Delta_h$. We choose $h$ to be the smallest number that $(1/2 + \epsilon')^{h}\cdot \Delta \in [(8(z+8))/\epsilon, (16(z+8))/\epsilon]$. Note that $h \leq \log_{4/3} \Delta$, since $\epsilon' \leq \epsilon \leq 1/4$. Note that 
\begin{align*}
\Delta_h &= (1/2 + \epsilon')^{h}\cdot \Delta + \sum_{i=0}^{h-1} (1/2 + \epsilon')^{i} \cdot 4 \\
&\leq (1/2 + \epsilon')^{h}\cdot \Delta + \sum_{i=0}^{h-1} (3/4)^{i} \cdot 4   && \text{(since  $\epsilon' < 1/4$)}\\
&\leq (1/2 + \epsilon')^{h}\cdot \Delta + 16 \\
&\leq (1+\epsilon/4) (1/2 + \epsilon')^{h}\cdot \Delta \\
& \hspace{1cm} \text{(since $(1/2+\epsilon')^{h} \cdot \Delta \geq (8(z+8))/\epsilon \geq 64/\epsilon$)}~.
\end{align*}


We run the $(\Delta_h + z)$-edge-coloring algorithm on the $2^h$ subgraphs using disjoint palettes to obtain a $(\Delta_h+z)$-edge-coloring for each. The total number of colors that are used is
\begin{align*}
2^h \cdot (\Delta_h+z) &\leq 2^h \cdot (1+\epsilon / 4)\cdot \Delta_h && \text{(since $\Delta_h \geq (1/2 + \epsilon')^{h} \cdot \Delta \geq 8z / \epsilon$)} \\
&= 2^h \cdot (1+\epsilon / 4)^2 \cdot (1/2 + 2\epsilon')^{h} \cdot \Delta  && \text{(since $\Delta_h \leq (1+\epsilon/4)\cdot(1/2 + 2\epsilon')^{h} \cdot \Delta$)}\\
&= (1+\epsilon / 4)^2 \cdot (1 + 4\epsilon')^{h} \cdot \Delta  \\
&\leq (1+\epsilon / 4)^2 \cdot (1 + 8h \epsilon') \cdot \Delta && \text{(since \mbox{$(1+a)^b \leq 1+2ab$ for $0 \leq ab \leq 1$})} \\
&\leq (1+\epsilon / 4)^2 \cdot (1 + \epsilon / 4) \cdot \Delta && \text{(since $\epsilon' = \epsilon / (32 \log_{4/3} \Delta) \leq \epsilon / (32 h)$)}\\
&\leq (1+\epsilon)\Delta && \text{(when $\epsilon < 1/8$)} ~.
\end{align*}

Since $\Delta_h = O(z/\epsilon)$, the running time is $\tilde{O}(\log\Delta \cdot \epsilon'^{-1} \cdot \log \epsilon'^{-1} \cdot \log n) + O((1/\epsilon)^{x} (z+1)^{x} \log^{y} n)  $.
\end{proof}


\section{Faster Algorithms} 
In this section, we provide more efficient algorithms that use fewer number of rounds. Our technique is based on another type of fans that was introduced by Gabow et al. \cite{GNKLT85}. This  allows us to reduce the number of rounds by a factor $\Delta^2$ for general graphs and $\Delta^4$ for bipartite graphs.


A \emph{reverse $\alpha {B}$-fan} with center $v$ and leaves $x_1,x_2,\ldots,x_k$ (where ${B}=\{ \beta_1,\beta_2,\ldots,\beta_{k+1} \}$ and  $k \geq 2$)  satisfies:  All edges $vx_i$ are uncolored.  The color $\alpha$ is missing at all leaves, i.e., $\alpha \in M(x_i)$ for all $1 \leq i \leq k$.  The colors in $B$ are missing at $v$, i.e., $\beta \in M(v)$ for all $\beta \in B$.  Finally, there is an edge with color $\alpha$ that is incident to the center $v$, i.e., $\alpha \notin M(v)$. The above definition is well-defined since there must be at least $k+1$ missing colors at $v$. We use \emph{normal fans} and \emph{reverse fans} to distinguish between the two types. We use fans to refer to fans of either types. We use \emph{incomplete vertices} to refer to vertices that are incident to at least one uncolored edge.

\begin{algorithm}
\caption{Faster Distributed Fan Repair}\label{alg:fast}
\begin{algorithmic}[1]\small
 \While{there exists uncolored edges}  \label{ln:while-loop-2}
	
	
	\State Color the graph using $\Delta+1$ colors.

	\For{$\gamma =$ color $1,2,\ldots,\Delta+1$}  \label{ln:iteration-gamma-2}
		\State Let $V_\gamma$ be the vertices with color $\gamma$.

		\For{$\alpha = $ color $1, 2, \ldots, \Delta + 1$} \label{ln:iteration-alpha-2}
			\State Let $E_{\gamma,\alpha} = \{ uv \in E \mid \mbox{$u \in V_\gamma$ and $uv$ is uncolored and $\alpha \in M(u) \cap M(v)$} \}$.
			\State Let $M_{\gamma,\alpha}$ be a maximal matching  in $E_{\gamma,\alpha}$.
			\State Color $M_{\gamma,\alpha}$ with $\alpha$. \label{ln:color_maximal}
			
			\State Let $V_{\gamma,\alpha}$ be the set of incomplete vertices in $V_\gamma$ that now miss the color $\alpha$.  
			\State Color $G^{4}[V_{\gamma,\alpha}]$ using at most $4\Delta^4$ colors.	 \label{ln:merge-fans}	

			\For{$i = 1, 2, \ldots, 4\Delta^4$}
				\State Grow a normal $\alpha$-fan $F_v$ from each vertex $v$ with color $i$ in $G^{ 4}[V_{\gamma,\alpha}]$. \label{ln:grow-fans-2}
				\State $\merge $  intersecting fans.  \label{ln:end-merge-fans}  
			\EndFor 

			 \For{$\beta = $ color $1, 2, \ldots, \Delta + 1$} \label{ln:iteration-beta-2}  \label{ln:maximal-disjoint}  
				\State Build a conflict graph $G_{\mathcal{F}_{\gamma,\alpha, \beta}}$ for normal and sub-reverse $\alpha\beta$-fans in $\mathcal{F}_{\gamma,\alpha,\beta}$.
				\State Color $G_{\mathcal{F}_{\gamma,\alpha, \beta}}$ using $O(1)$ colors.
				\For{$\eta =$ color $1,2,\ldots O(1)$}
					\State Repair undestroyed normal and sub-reverse fans in $\mathcal{F}_{\gamma,\alpha, \beta}$ with color $\eta$ in $G_{\mathcal{F}_{\gamma,\alpha, \beta}}$ . \label{ln:parallel-agumentation-2a}
				\EndFor
				\State Repair semi-destroyed sub-reverse  fans in $G_{\mathcal{F}_{\gamma,\alpha, \beta}}$. \label{ln:parallel-agumentation-2b}
			\EndFor
		\EndFor		
	\EndFor
\EndWhile
\end{algorithmic}
\end{algorithm}

\paragraph{Algorithm outline} We outline our algorithm (Algorithm \ref{alg:fast}) and analysis as follows. Similar to the previous section, we will show that each iteration of the {\bf while loop} in Line \ref{ln:while-loop-2} will color at least a constant fraction of the uncolored edges and therefore the algorithm will terminate after  $O(\log n)$ iterations. Roughly speaking, we iterate through the incomplete vertices that miss the colors $\alpha =1,2,\ldots,\Delta+1$ (i.e., the {\bf for loop} in Line \ref{ln:iteration-alpha-2}). For each color $\alpha$, we try to make a maximal set of vertex-disjoint $\alpha$-fans (i.e., every vertex in $V_{\gamma,\alpha}$ belongs to a fan) that consists of normal fans and reverse fans. This is done by merging intersecting  fans in a clever way.
Then, the algorithm tries to repair these fans simultaneously. The  idea is that we can repair the fans simultaneously in fewer rounds since they are vertex-disjoint. There are two new key challenges in the \local model (beside long alternating paths):
\begin{itemize}
\item We need to implement the merge step in the  distributed setting efficiently. Gabow et al. \cite{GNKLT85} only provided a sequential algorithm for this.
\item With two types of fans, there are more cases to analyze when repairing a fan destroys the structure of another fan. This can become tricky since a destroyed reverse fan may lead to many edges remain uncolored.
\end{itemize}

\subsection{Making a Maximal Set of $\alpha$-Fans}

We consider a fixed iteration of the outermost {\bf while loop}. We first color the vertices using $\Delta+1$ colors which could be done deterministically in  $O(\Delta + \log^{*} n)$ rounds \cite{BEK09}. We then process vertices in the same color class $\gamma$, denoted by $V_\gamma$, together.  The reason for this step will become clear later.



Fix a nested iteration $\gamma$ and $\alpha$  of the {\bf for loop} in Line \ref{ln:iteration-gamma-2} and the {\bf for loop} in  Line \ref{ln:iteration-alpha-2}. We first consider the set $E_{\gamma,\alpha}$ of all uncolored edges $uv$  that are incident to one vertex in $V_{\gamma}$ such that both $u$ and $v$ miss the color $\alpha$. More formally,
$$E_{\gamma,\alpha} = \{ uv \in E \mid \mbox{$u \in V_\gamma$ and $uv$ is uncolored and $\alpha \in M(u) \cap M(v)$} \}~.$$

We find a maximal matching $M_{\gamma,\alpha}$ in $E_{\gamma,\alpha}$ and color the matching with the color $\alpha$ in $O(\Delta + \log n)$ rounds \cite{BEK09}.

After coloring $M_{\gamma,\alpha}$, recall from Algorithm \ref{alg:fast} that $V_{\gamma,\alpha}$ denotes the set of incomplete vertices in $V_\gamma$  that miss the color $\alpha$.  Note that $V_{\gamma,\alpha}$ is an independent set since the $\gamma$-color vertices must not be neighbors.

We now describe a sequential  procedure that finds a maximal set of $\alpha$-fans. Specifially, each vertex in $V_{\gamma,\alpha}$ is either a center of a fan or a leaf of a reverse fan. Furthermore, all the fans are vertex-disjoint and the number of uncolored edges remain the same.  

We want to maintain the following invariant: a) all active fans are vertex-disjoint and b) all vertices $v \in V_{\gamma,\alpha}$ that have been considered must either be a center of a normal fan or a leaf of a reverse fan.  Now, suppose we consider the next vertex $v$ in $V_{\gamma,\alpha}$ and grow a normal $\alpha$-fan from $v$.  We have argued that $V_{\gamma,\alpha}$ is an independent set. Assuming the invariant holds up to this point, we have:

\begin{claim} \label{claim:intersecting-fans}
Consider a normal $\alpha$-fan $F_v$ grown  from $v \in V_{\gamma,\alpha}$ that intersects with a currently active $\alpha$-fan $F_u$. If $F_u$ is a normal fan, they must intersect at leaf node(s). On the other hand, if $F_u$ is a reverse  fan, then they only intersect at $u$. 
\end{claim}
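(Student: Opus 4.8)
\textbf{Proof plan for Claim~\ref{claim:intersecting-fans}.}

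The plan is to argue by a careful case analysis on what kind of vertex a shared vertex can be, using the structural invariant together with the defining properties of normal and reverse $\alpha$-fans. First I would fix the shared vertex, say $w$, lying both in the newly grown normal $\alpha$-fan $F_v$ and in some currently active fan $F_u$. Since $F_v$ is a normal fan grown from $v$, the vertex $w$ is either the center $v$ itself or one of the leaves $x_i$ of $F_v$. By the invariant, $v$ is a ``fresh'' vertex of $V_{\gamma,\alpha}$ not yet assigned to any active fan, so $w \ne v$ cannot be forced merely because $v$ is a center; rather I want to rule out $w = v$ by noting that every vertex of an active fan is either its center (which is in $V_{\gamma,\alpha}$ and already processed, hence $\ne v$) or a leaf. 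Concretely, I would split into: $w$ is a leaf of $F_v$; or $w = v$ is the center of $F_v$, and then show $w$ must be $u$ when $F_u$ is reverse.

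The key structural facts I would invoke are: (i) in a normal $\alpha$-fan $F_v$, the center $v$ misses $\alpha$ and has an uncolored incident edge, while the leaves are reached by following colored edges $v x_i$ with colors determined by $m(x_{i-1})$; the first leaf edge $vx_1$ is uncolored; (ii) in a reverse $\alpha B$-fan $F_u$, all edges $u x_j$ are uncolored, $\alpha \in M(x_j)$ for every leaf $x_j$, the colors of $B$ are missing at $u$, and crucially $\alpha \notin M(u)$ — the center of a reverse fan does \emph{not} miss $\alpha$. Now if $F_u$ is a normal $\alpha$-fan: its center $u \in V_{\gamma,\alpha}$ was already processed and is distinct from $v$; moreover $u$ cannot be a leaf of $F_v$ because $u \in V_{\gamma,\alpha}$ and the $V_\gamma$-vertices form an independent set (so $u$ is not even adjacent to $v$, hence not a leaf of $F_v$), and symmetrically $v$ is not a leaf of $F_u$; therefore any intersection is at a leaf of $F_v$ that is also a leaf of $F_u$, giving the ``leaf node(s)'' conclusion. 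If $F_u$ is a reverse $\alpha B$-fan: the leaves of $F_u$ all miss $\alpha$, whereas the leaves $x_i$ of the normal fan $F_v$ for $i \geq 1$ satisfy $m(x_i) = $ the color of edge $vx_{i+1}$ which is \emph{not} missing at $v$ in general — I would instead argue that a leaf $x_i$ of $F_v$ with $i \ge 1$ is joined to $v$ by the \emph{colored} edge $e_{i+1}$ when $i \ge 1$... wait, more cleanly: the only vertex of the reverse fan $F_u$ not missing $\alpha$ is $u$, and since the leaves of $F_u$ all miss $\alpha$ while $v$ misses $\alpha$, if the shared vertex is a leaf of $F_u$ it is consistent with being $v$; but I claim the shared vertex must be exactly $u$, which I would get by excluding the leaves of $F_u$ from being leaves of $F_v$ (a leaf $x$ of $F_u$ has $x \notin V_{\gamma,\alpha}$ but the edge $ux$ is uncolored, so $x$ is incomplete; I would trace the precise missing-color constraints to show $x$ cannot sit in $F_v$ as a leaf because the leaf edges of $F_v$ past the first are \emph{colored}, while this would conflict with $ux$ uncolored only if $x$ also touches $v$ — so the real work is showing $x$ adjacent to $v$ is impossible except through $w = v$, which is impossible since $v \notin F_u$ as $\alpha \notin M(u)$ but $\alpha\in M(v)$ and... ) — so the honest statement is that I would carefully use the invariant ``$v$ is not yet in any active fan'' plus the independence of $V_\gamma$ to force the only possible shared vertex with a reverse $F_u$ to be its center $u$.

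The main obstacle I anticipate is exactly this last case: showing that a normal fan $F_v$ can meet an active \emph{reverse} fan $F_u$ only at $u$, and never at a leaf of $F_u$. This requires pinning down which vertices of $F_v$ can coincide with which vertices of $F_u$ using the missing-color bookkeeping ($\alpha \in M(\text{leaves of }F_u)$, $\alpha \notin M(u)$, the edge-color pattern along $F_v$), and in particular ruling out that a leaf $x_j$ of $F_u$ appears as some leaf $x_i$ of $F_v$. I would handle it by observing that the edges $v x_i$ of $F_v$ for $i \ge 2$ are \emph{colored}, so if $x_i = x_j$ is a leaf of $F_u$ then the edge $v x_i$ would have to be distinct from the uncolored edge $u x_j$, meaning $x_j$ has two relevant incident edges; then I would derive a contradiction with the construction rule of $F_v$ (the color of $v x_i$ equals $m(x_{i-1})$) against $x_j$ missing $\alpha$ and $v$ missing $\alpha$. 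The first leaf $x_1$ of $F_v$ is the delicate sub-case since $v x_1$ is uncolored; here I expect to use that $x_1$ misses $m(x_1)$ and relate this to the reverse-fan constraint to conclude that if $x_1$ were a leaf of $F_u$ the two fans could be \emph{merged} rather than genuinely intersecting, which the subsequent $\merge$ step (Line~\ref{ln:end-merge-fans}) is designed to handle — so the claim as stated (intersection only at $u$ for the reverse case, only at leaves for the normal case) should fall out once the colored-edge structure is exploited.
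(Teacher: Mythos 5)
Your handling of the normal-vs-normal case is fine: you correctly invoke that $u,v\in V_{\gamma,\alpha}$ are distinct, that $V_\gamma$ is independent so neither is a neighbor of the other, and hence neither can sit among the other's leaves, leaving only a leaf--leaf intersection.

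The reverse-fan case, however, has a genuine gap, and it is caused by a factual error: you assert that a leaf $x$ of an active reverse fan $F_u$ satisfies $x\notin V_{\gamma,\alpha}$. This is backwards. In this algorithm a reverse $\alpha B$-fan is only ever created (and extended) by the $\merge$ routine, where the leaves attached to it are the centers of normal fans that were just grown from $V_{\gamma,\alpha}$; it is the \emph{center} $u$ of the reverse fan that fails to miss $\alpha$ and hence lies outside $V_{\gamma,\alpha}$. So the leaves of $F_u$ are exactly previously processed vertices of $V_{\gamma,\alpha}$, and that is the whole argument: (a) no leaf of $F_u$ can equal a leaf of $F_v$, since leaves of $F_v$ are neighbors of $v$ and two vertices of the independent set $V_\gamma$ cannot be adjacent; (b) no leaf of $F_u$ can equal $v$, since $v$ has not yet been processed while every leaf of $F_u$ has (your ``invariant''); (c) $u\neq v$, since $\alpha\in M(v)$ but $\alpha\notin M(u)$. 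This exhausts every vertex of $F_u$ except the center $u$, so the intersection can only be $u$ itself, appearing as a leaf of $F_v$. Once you drop the wrong premise, no missing-color or colored/uncolored edge bookkeeping is needed; your detour into the pattern of colors along $F_v$'s spokes and the tentative ``the merge step would handle it'' remark are signs that the argument had not actually closed. The fix is the single structural fact you were missing: leaves of active reverse fans are, by construction, exactly processed elements of $V_{\gamma,\alpha}$.
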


Suppose $F_v$ intersects with at least one other $\alpha$-fan. Let the leaves of $F_v$ be $x_1,\ldots,x_k$ and let $w=x_j$ be the first  leaf that is in an intersection, i.e., $j$ is the smallest possible. Suppose $F_v$ intersects with $F_u$ at $w$. 

If $F_u$ is a normal fan, then we shift $F_v$ and $F_u$ from $w$ and uncolor the edges $vw$ and $uw$. We  then deactivate $F_u$ and $F_v$ and grow a reverse fan centering at $w$ with the leaves $u$ and $v$. We note that $w$ cannot miss $\alpha$ otherwise the matching $M_{\gamma,\alpha}$  is not maximal. 

If $F_u$ is a reverse fan, then  note that $u=w$ based on Claim \ref{claim:intersecting-fans}. We shift $F_v$ from $u$ and uncolor the edge $vu$. We then deactivate $F_v$ and add $vu$ to  $F_u$. 

The fact that $w$ is the first leaf that is in an intersection is important since $F_v$ may intersect with many current active fans and shifting at another leaf vertices may destroy other fans that are not $F_v$ and $F_u$. It is easy to see that the invariant that the number of uncolored edges remain the same and the fans are vertex-disjoint  holds after we process each vertex $v$ in $V_{\gamma,\alpha}$. Hence, we obtain a maximal set of vertex-disjoint $\alpha$-fans.

\paragraph{Merging fans the distributed setting}  We still need to implement this idea in the  \local model. We now describe the merge step that corresponds to Line \ref{ln:merge-fans} to Line \ref{ln:end-merge-fans} in our algorithm. 

First, we obtain a $4$-hop coloring of $V_{\gamma,\alpha}$ using at most $4\Delta^4$ colors which can be done in $O(\Delta^4 + \log^{*} n)$ rounds \cite{BEK09}. In particular, this is a vertex coloring of the graph $G^{4}[V_{\gamma,\alpha}]$ in which there is an edge between $uv$ if the distance between $u$ and $v$ is less than five in $G$. We then go through vertices in each color class $i=1,\ldots,4\Delta^4$ and grow a normal $\alpha$-fan from vertices in that class. 

The key observation is as follows. Suppose $u$ and $v$ are from the same color class $i$. Since $\dist_G(u,v) > 4$,  the normal fans $F_u$ and $F_v$ that are grown in Line \ref{ln:grow-fans-2} do not intersect and furthermore, $F_u$ and $F_v$ cannot intersect the same currently active $\alpha$-fan $F_w$. Therefore, we can simultaneously merge the newly grown fans with the existing active fans as described in a constant number of rounds (Algorithm \ref{alg:merge}).

\begin{algorithm}
\caption{$\merge$}\label{alg:merge}
\begin{algorithmic}[1]\small
	\State Let $A$ be the set of fans that have just been grown from vertices in color class $i$ of the $4$-hop coloring.
	\State Let $B$ be the set of current active fans including the fans in $A$.
	\For{each $F_v$ in $A \setminus B$ that intersects with another fan in $B$}
		\State  Let $x$ be the first  leaf of $F_v$ that is in an intersection.
		\State Let $F_w$ be the fan that intersects with $F_v$ at $x$.
		\If{$F_w$ is a normal fan}
			\State Shift $F_w$ and $F_v$ from $x$.
			\State Uncolor $wx$ and $vx$.
			\State Grow a reverse $\alpha$-fan centering at $x$ with leaves $w$ and $v$.
			\State Deactivate $F_v$ and $F_w$.
		\Else{}
			\State Shift $F_v$ from $x$.
			\State Uncolor $vx$.
			\State Add $vx$ to $F_w$ (note that $w=x$).
			\State Deactivate $F_v$.
		\EndIf

	\EndFor

\end{algorithmic}
\end{algorithm}

Hence, we obtain a maximal set of vertex-disjoint $\alpha$-fans after we iterate through  all color classes $i=1,2,\ldots,4\Delta^4$.  We summarize this as the following claim.

\begin{claim}
After the merge step (just before Line \ref{ln:maximal-disjoint}), we have a maximal set of vertex-disjoint $\alpha$-fans with respect to $V_{\gamma,\alpha}$. 
 \end{claim}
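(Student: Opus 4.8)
The plan is to prove the claim by induction on the index $i \in \{0,1,\ldots,4\Delta^4\}$ of the color class of the $4$-hop coloring of $G^{4}[V_{\gamma,\alpha}]$ that has just been processed by Lines~\ref{ln:grow-fans-2}--\ref{ln:end-merge-fans}. Let $H(i)$ be the assertion that, after processing classes $1,\ldots,i$: (a) the active fans are pairwise vertex-disjoint; (b) every vertex of $V_{\gamma,\alpha}$ that lies in one of the classes $1,\ldots,i$ is either the center of an active normal $\alpha$-fan or a leaf of an active reverse $\alpha$-fan; (c) each active fan is a valid normal or reverse $\alpha$-fan under the current partial coloring; and (d) the number of uncolored edges equals the number right after Line~\ref{ln:color_maximal}. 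The base case $H(0)$ is trivial. Once $H(4\Delta^4)$ is shown, every vertex of $V_{\gamma,\alpha}$ has been considered, so by (a) and (b) the active fans form a set of vertex-disjoint $\alpha$-fans covering $V_{\gamma,\alpha}$ in the required way; it is maximal precisely because it already covers $V_{\gamma,\alpha}$, so no additional disjoint $\alpha$-fan can be added with respect to $V_{\gamma,\alpha}$. This gives the claim.

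For the inductive step, assume $H(i-1)$ and process class $i$. The crucial point is that Algorithm~\ref{alg:merge} run in $O(1)$ rounds faithfully simulates a sequential pass that grows and merges the class-$i$ fans one at a time, and this reduces to the separation observation from the text. A freshly grown normal fan $F_v$ consists of a center and its leaves, hence lies within distance $1$ of $v$; and every active $\alpha$-fan (normal or reverse) has diameter at most $2$. If $u,v$ are distinct class-$i$ vertices then $\dist_G(u,v) > 4$, so $F_u$ and $F_v$ are disjoint; and if both met a common active fan $F_w$, say at $x \in F_u \cap F_w$ and $y \in F_v \cap F_w$, then $\dist_G(u,v) \le \dist_G(u,x) + \dist_G(x,y) + \dist_G(y,v) \le 1+2+1 = 4$, a contradiction. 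Hence the pairs $(F_v,F_w)$ triggering a merge involve pairwise vertex-disjoint substructures, the merge operations commute, and they can be executed simultaneously with the effect of any sequential order. Given this, each newly grown $F_v$ falls into a case already settled by the sequential description: either $F_v$ is disjoint from all active fans and remains a normal fan; or, letting $x$ be its first leaf in an intersection with an active fan $F_w$, by Claim~\ref{claim:intersecting-fans} the intersection is at a leaf of $F_w$ when $F_w$ is normal and at $w=x$ when $F_w$ is reverse, and we either shift $F_v,F_w$ at $x$, uncolor $vx$ and $wx$, and open a reverse $\alpha$-fan at $x$ with leaves $v,w$ (valid: $v,w$ still miss $\alpha$, while $x$ does not miss $\alpha$ since $M_{\gamma,\alpha}$ is maximal, and $x$ gains the two missing colors needed), or shift $F_v$ at $x$, uncolor $vx$, and append $vx$ to $F_w$. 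In every case the number of uncolored edges is unchanged and vertex-disjointness is preserved, so $H(i)$ holds.

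The main obstacle is the parallelization step, i.e.\ making precise that the $O(1)$-round distributed merge reproduces the sequential merge; it rests entirely on the diameter bound for fans together with the $4$-hop separation, and once that is nailed down the rest is the bookkeeping of invariants (b)--(d), lifted verbatim through the induction. A minor point to record carefully is that a reverse fan starts with $k=2$ leaves and only gains leaves, so $k \ge 2$ is maintained, and that—again by the common-$F_w$ exclusion above—no two class-$i$ fans can be absorbed by or appended to the same reverse fan, so the simultaneous execution is well defined.
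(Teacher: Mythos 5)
Your proof is correct and takes essentially the same approach as the paper: the paper's argument is precisely the sequential invariant plus the observation that same-color vertices in the $4$-hop coloring grow fans that neither intersect each other nor share a common active fan, which you formalize with the explicit $\dist_G(u,x)+\dist_G(x,y)+\dist_G(y,v)\le 1+2+1=4$ bound. You have merely packaged the paper's prose into an explicit induction on the color class with invariants (a)--(d), which is a faithful and slightly more careful write-up rather than a different route.
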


\subsection{Repairing Fans}

Recall that we currently fix a nested iteration $\gamma$ and $\alpha$  of the for loops in Line \ref{ln:iteration-gamma-2} and Line \ref{ln:iteration-alpha-2}. We furthermore have a maximal set of vertex-disjoint $\alpha$-fans.

\paragraph{Sub-reverse  fans}   For the sake of a simpler analysis, we further decompose a reverse $\alpha B$-fan that $F_w$ with $k$ leaves into $\floor{k/2}$ {edge-disjoint} \emph{sub-reverse fans}.   The first sub-reverse fan consists of $\{w,x_1,x_2 \}$. The second sub-reverse  fan consists of $\{w,x_3,x_4 \}$, and so on. The last sub-reverse fan may consist of three leaves if $k$ is odd.  

Recall that $B=\{\beta_1,\beta_2,\ldots,\beta_{k+1}\}$ is a set of $k+1$  colors missing at $w$. We associate $\beta_1$ with the first sub-reverse  fan, $\beta_2$ with the second sub-reverse  fan, and so on. We use $F_{w,1},F_{w,2},\ldots,F_{w,\floor{k/2}}$ to denote these sub-reverse fans. In particular, we call $F_{w,i}$ a sub-reverse $\alpha \beta_i$-fan. 

We now focus on normal fans and sub-reverse fans. Let $\alpha \beta$-fans denote normal $\alpha \beta$-fans or sub-reverse $\alpha \beta$-fans. We consider a simple repair for a sub-reverse $\alpha \beta_i$-fan with center vertex $w$ and (at least) two leaf vertices $x_{2i},x_{2i+1}$.  Consider the maximal alternating $ \alpha \beta_i$-path $P_w$ from the center $w$.   We can first  $\augment(P_w)$.  If $P_w$ does not end at $x_{2i}$, then color $wx_{2i}$ with $\alpha$. Otherwise,  if $P_w$ ends at $x_{2i}$, then color $wx_{2i+1}$ with $\alpha$. This repair will color  an uncolored  edge. 
After the algorithm repairs $F_{w,i}$, then either $w x_{2i}$ or $w x_{2i+1}$ will be colored with the color $\alpha$. Therefore, we still have $\alpha \notin M(w)$. As a result, other sub-reverse fans centering at $w$ are not destroyed.

\begin{claim}\label{claim:repair-sub-reverse-fans}
Repairing a sub-reverse fan $F_{w,i}$ does not destroy another sub-reverse fan $F_{w,j}$.
\end{claim}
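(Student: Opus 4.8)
The plan is to verify directly from the structure of the repair operation that the only vertex-data a repair of $F_{w,i}$ can disturb — namely missing-color sets and edge colors — is not the data that defines a sibling sub-reverse fan $F_{w,j}$. Recall that $F_{w,j}$ is a sub-reverse $\alpha\beta_j$-fan, so it is determined by the following facts: (i) its two (or three) spoke edges $wx_{2j}, wx_{2j+1}$ (and possibly $wx_{2j+2}$) are uncolored; (ii) $\alpha \in M(x_{2j}), M(x_{2j+1})$ (and the third leaf); (iii) $\beta_j \in M(w)$; and (iv) $\alpha \notin M(w)$. I will show each of (i)–(iv) survives the repair of $F_{w,i}$.

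First I would recall exactly what the repair of $F_{w,i}$ does: it augments the maximal $\alpha\beta_i$-alternating path $P_w$ starting at $w$, and then colors exactly one of $wx_{2i}$ or $wx_{2i+1}$ with $\alpha$. The augmentation only changes colors on edges of $P_w$, all of which are colored $\alpha$ or $\beta_i$; this can change $M(z)$ only for $z \in \{w, \text{other endpoint of } P_w\}$, and only with respect to the colors $\alpha$ and $\beta_i$. The key points are then: the spoke edges of $F_{w,j}$ are uncolored, hence not on $P_w$, so (i) is untouched by the augmentation, and the only spoke edge that gets colored is one incident to $x_{2i}$ or $x_{2i+1}$, which (since the sub-reverse fans are vertex-disjoint at their leaves, i.e., edge-disjoint with distinct leaves indexed $2i,2i+1$ vs.\ $2j,2j+1$) is not a spoke of $F_{w,j}$; so (i) holds. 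For (iv): as already observed in the paragraph preceding the claim, after the repair exactly one of $wx_{2i}, wx_{2i+1}$ is colored $\alpha$, so $w$ still has an incident $\alpha$-edge, i.e.\ $\alpha \notin M(w)$. For (iii): I need $\beta_j$ to remain missing at $w$; since $j \ne i$ and the $\beta$'s are distinct, the only way $\beta_j$ could leave $M(w)$ is if the augmentation gave some edge at $w$ the color $\beta_j$ — but the augmentation only assigns colors $\alpha$ and $\beta_i$ to edges of $P_w$, and $\beta_j \notin \{\alpha,\beta_i\}$, and the final coloring step assigns $\alpha$, so no edge at $w$ becomes $\beta_j$. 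Hence (iii) holds. For (ii): the leaves $x_{2j}, x_{2j+1}$ of $F_{w,j}$ are distinct from $w$ and from the leaves of $F_{w,i}$; the augmentation of $P_w$ can change $M(x_{2j})$ only if $x_{2j}$ is the far endpoint of $P_w$, but I must argue $P_w$ does not end at $x_{2j}$ — this uses that $wx_{2j}$ is uncolored (so $x_{2j}$ is reached by $P_w$, if at all, through a different edge) together with the fact that $x_{2j}$ misses $\alpha$, so a maximal $\alpha\beta_i$-path ending at $x_{2j}$ would have to end with a $\beta_i$-edge and $x_{2j}$ would have to miss $\alpha$ — which is consistent, so I should instead note that even if $P_w$ ends at $x_{2j}$, the augmentation would leave $x_{2j}$ missing $\beta_i \ne \alpha$ while possibly removing $\alpha$; to rule this out cleanly, observe that the spoke $wx_{2j}$ is uncolored, so $x_{2j}$ cannot be the endpoint of $P_w$ adjacent along a spoke, and any $\alpha\beta_i$-path through $x_{2j}$ would have to use two colored edges at $x_{2j}$, forcing $\alpha \notin M(x_{2j})$, contradicting (ii) as it held before the repair. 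Thus $P_w$ does not touch $x_{2j}$ (nor $x_{2j+1}$), so $M(x_{2j}), M(x_{2j+1})$ are unchanged and (ii) holds.

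Assembling (i)–(iv), $F_{w,j}$ remains a valid sub-reverse $\alpha\beta_j$-fan after the repair of $F_{w,i}$, which is the claim. The main obstacle is the argument for (ii) — ruling out that the alternating path $P_w$ of $F_{w,i}$ terminates at a leaf of the sibling fan $F_{w,j}$; everything else is a bookkeeping check that the repair only writes colors $\alpha$ and $\beta_i$ and only near $w$ and the path's far endpoint. The clean way to handle (ii) is to use that each leaf $x$ of a sub-reverse fan misses $\alpha$ and its spoke $wx$ is uncolored, so $x$ has at most $\deg(x)-1$ colored incident edges none of which is $\alpha$; a maximal $\alpha\beta_i$-path can pass through or end at $x$ only via colored edges, and ending at $x$ forces $x$ to miss one of $\alpha,\beta_i$ along that path, which combined with $\alpha \in M(x)$ and the uncolored spoke pins down that $x$ is either untouched or still missing $\alpha$ afterward. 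I would state this as a short sub-lemma and then the claim follows immediately.
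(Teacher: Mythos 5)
Your decomposition into the four invariants (i)--(iv) that define a sub-reverse fan and checking each one against the repair operation is a sensible and more systematic version of what the paper does. The paper's own proof is very terse: it argues only that $\alpha \notin M(w)$ is preserved (because the coloring step immediately recolors one of $wx_{2i}, wx_{2i+1}$ with $\alpha$ after the augmentation temporarily frees $\alpha$ at $w$), and then simply asserts the conclusion. Your checks for (i), (iii), and (iv) are correct and fill in details the paper leaves implicit.

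However, your argument for (ii) has a genuine gap, and it is the one case that actually requires care. You correctly observe that the augmentation of $P_w$ can change $M(x_{2j})$ only if $x_{2j}$ is the far endpoint of $P_w$, and you correctly observe that this scenario is \emph{consistent}: the spoke $wx_{2j}$ is uncolored, $x_{2j}$ misses $\alpha$, so $P_w$ could perfectly well terminate at $x_{2j}$ via a single $\beta_i$-colored edge. But your attempt to then rule this out --- ``any $\alpha\beta_i$-path through $x_{2j}$ would have to use two colored edges at $x_{2j}$, forcing $\alpha\notin M(x_{2j})$'' --- only excludes $x_{2j}$ from being an \emph{internal} vertex of $P_w$. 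It says nothing about $x_{2j}$ being the endpoint, where only one colored edge is used. So the conclusion ``Thus $P_w$ does not touch $x_{2j}$'' does not follow. If $P_w$ does end at $x_{2j}$, then after $\augment(P_w)$ the terminal $\beta_i$-edge at $x_{2j}$ becomes $\alpha$-colored and $\alpha$ is no longer missing at $x_{2j}$ --- precisely the destruction you were trying to exclude. You would need either a structural argument showing $P_w$ cannot reach any sibling leaf (not clear such an argument exists), or a different framing: for instance, arguing that even if a leaf of $F_{w,j}$ is hit, the damage is bounded and still permits the algorithm's counting to go through, or noting that the paper's notion of ``destroy'' (tied to the per-$\beta$ conflict graph) doesn't apply to a cross-$\beta$ interaction. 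As written, the (ii) branch of your proof is not closed, and you should not present it as if it were.
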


\paragraph{Repairing  fans in the distributed setting}  As outlined in the previous section, if the length of a maximal alternating path $P_v = (v_0,v_1,\ldots)$ is longer than $T$, we truncate $P_v$ and pick a blocking edge $(v_{i-1},v_i)$ accordingly (i.e., $i \leq T$ is chosen based on whether the goal is a deterministic algorithm with $\ell(G) = O(\log n / (\log \lambda + \log \log n))$ or a randomized algorithm with $\ell(G) = 1$). We then $\augment(P_v(i))$.

We make the following observation: Based on the definition, a vertex in a sub-reverse $\alpha \beta$-fan misses either $\beta$ if it is the center or $\alpha$ if it is a leaf. Therefore, a maximal  $\alpha \beta$-path cannot intersect a sub-reverse $\alpha \beta$-fan except at the last vertex of the path. Thus, a truncated $\alpha \beta$-path $P_v(T)$ where $|P_v| > T$ cannot intersect with a sub-reverse $\alpha \beta$-fan.

Next, let us fix an iteration $\beta$  in the {\bf for loop} of Line \ref{ln:iteration-beta-2}. We again observe that the alternating paths of  $\alpha \beta$-fans do not intersect. Augmenting an alternating path can only destroy the structure of at most one other  fan. This happens when an alternating path of length at most $T$  ends at another fan.

Let the set of the current normal and sub-reverse $\alpha \beta$-fans   be $\mathcal{F}_{\gamma,\alpha,\beta}$.  We build a conflict graph on $G_{\mathcal{F}_{\gamma,\alpha,\beta}} = (V_{\mathcal{F}_{\gamma,\alpha,\beta}}, E_{\mathcal{F}_{\gamma,\alpha,\beta}})$, where $V_{\mathcal{F}_{\gamma,\alpha,\beta}} = \{F_v \mid F_v \in \mathcal{F}_{\gamma,\alpha,\beta} \}$ and $(F_u, F_w) \in E_{\mathcal{F}_{\gamma,\alpha,\beta}}$ if $|P_u| \leq T$ and $P_u$ ends at a vertex in $F_w$. If $|P_u| > T$ and the truncated path $P_u(T)$ ends at any node of $F_w$, then $F_w$ must be a normal fan and furthermore the modification on $P_u(T)$ will not affect the structure of $F_w$ as pointed out in the previous section.

Since each vertex in $G_{\mathcal{F}_{\gamma,\alpha,\beta}}$ contributes at most one edge, the arboricity of $G_{\mathcal{F}_{\gamma,\alpha,\beta}}$ is constant. Therefore, as argued in the previous section, we can build and color the conflict graph using $O(1)$ colors in $O(T \log n)$ rounds \cite{barenboim2010sublogarithmic}. Now we iterate through each color $j$ of the coloring and process the fans colored in $j$ together.  Since the alternating paths of the fans colored in $j$ neither intersect nor end at another fan, we can repair them simultaneously. 

There are three  cases to take into consideration when an alternating path of length at most $T$ ends at  another fan $F_w$:
\begin{itemize}
\item Case 1: $F_w$ is a normal $\alpha \beta$-fan. 
\item Case 2: $F_w$ is a sub-reverse $\alpha \beta$-fan and the alternating path ends at a leaf of $F_w$.
\item Case 3: $F_w$ is a sub-reverse $\alpha \beta$-fan and the alternating path ends at $w$.
\end{itemize}

We say that $F_w$ is destroyed if Case (1) or Case (2) happens. This is because augmenting that path will destroy the structure of $F_w$.

We however cannot  ignore Case (3) since it may destroy $\Omega(\Delta)$ sub-reverse fans centering at $w$. If case (3) occurs, we say that $F_w$ is \emph{semi-destroyed}. Fortunately, it is relatively easy to repair all semi-destroyed fans simultaneously.

As argued above, if Case (3) happens, the final edge of the alternating path must be a color-$\alpha$ edge. After the alternating path is augmented,  the final edge will become a color-$\beta$ edge and therefore $\alpha$ becomes missing at $w$. Hence, we can color an edge in $F_w$ with $\alpha$. We call this step repairing semi-destroyed sub-reverse fans. Note that this step can be done simultaneously among $\alpha \beta$-fans.  Based on Claim \ref{claim:repair-sub-reverse-fans}, no other sub-reverse fan centering at $w$ is destroyed during this step. We also note that each $\shift$ operation to repair a normal fan will not destroy another $\alpha$-fan since all the fans are vertex-disjoint.

\paragraph{Analysis} Now, we will show that each iteration in the while loop  colors at least a constant fraction of the uncolored edges. Let $\phi_0$ be the partial coloring at the beginning of a fixed interation of the while loop. Let $\phi_{\gamma}$ be the partial coloring just before the iteration $\gamma$ of the { for loop} in Line \ref{ln:iteration-gamma-2}. Similarly, let $\phi_{\gamma,\alpha}$ be the partial coloring just before the iteration $\gamma$ and $\alpha$ of the nested for loops in Line \ref{ln:iteration-gamma-2} and Line \ref{ln:iteration-alpha-2}. We again use $|\phi_0^{-1}(\bot)|$ to denote the number of uncolored edges at the beginning of current iteration of the while loop.

We use $\mathcal{F}_{\gamma,\alpha,\beta}$ to denote the normal and sub-reverse $\alpha \beta$-fans considered in the nested iteration $\gamma,\alpha,\beta$.  In addition, let $\mathcal{F}_{\gamma,\alpha} = \cup_{\beta=1}^{\Delta+1} \mathcal{F}_{\gamma,\alpha,\beta}$.
 We again let $V_{\gamma,\alpha}(\phi)$ denote the set of color-$\gamma$ incomplete vertices  that miss color $\alpha$  with respect to the partial coloring $\phi$.

\begin{lemma}
Each iteration of the while loop of Algorithm \ref{alg:fast} colors a constant fraction  of uncolored edges. Thus, it takes $O(\log n)$ iterations to color all the edges.
\end{lemma}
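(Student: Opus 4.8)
The plan is to mirror the counting argument from the proof of Lemma \ref{lem:outerloop}, adapted to the richer fan structure. Fix one iteration of the while loop, with $\phi_0$ the partial coloring at its start. For each uncolored edge $uv \in \phi_0^{-1}(\bot)$, orient it from the endpoint with smaller $(\Delta+1)$-vertex-color to the larger one (note the vertex coloring used inside the while loop is recomputed each iteration, but that is harmless). The goal is to show that, summed over all color classes $\gamma$ and all target colors $\alpha$, the total ``fan capacity'' $\sum_{\gamma,\alpha} |\mathcal{F}_{\gamma,\alpha}|$ — where each normal fan counts once and each sub-reverse fan counts according to the number of uncolored edges it carries — is at least a constant fraction of $\sum_{v} \outdeg(v) = |\phi_0^{-1}(\bot)|$, and then that a constant fraction of these fans survive (are neither destroyed nor frozen by a failed blocking-edge placement) and that repairing a surviving fan colors at least one uncolored edge.

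First I would set up the bookkeeping on the center side. For $v \in V_\gamma$, let $C(v)$ be its first $\outdeg(v)$ missing colors with respect to $\phi_\gamma$. For each $\alpha \in C(v)$: either $v$ still misses $\alpha$ and has an incident uncolored edge when iteration $\alpha$ begins, in which case $v$ either gets an incident uncolored edge colored directly by the maximal matching step (Line \ref{ln:color_maximal}) or it becomes the center of a normal $\alpha$-fan / a leaf absorbed into a reverse $\alpha$-fan during the merge step; or else $v$ has been knocked out of $V_{\gamma,\alpha}(\phi_{\gamma,\alpha})$ by the repair of some earlier fan $F_u$ with $u \in V_\gamma$. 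Exactly as in Lemma \ref{lem:outerloop}, since $u$ and $v$ share the vertex color $\gamma$ they are non-adjacent, so the only way repairing $F_u$ changes $M(v)$ is if $v$ is the far endpoint of the alternating path of $F_u$ (Case-1-style destruction) or $v = u$ (a $\shift$-only repair). Each repaired fan can be ``blamed'' for at most one such event, so $\#\text{blames} \le \sum_{\alpha}|\mathcal{F}_{\gamma,\alpha}| + (\text{edges colored directly})$, giving $\sum_{v\in V_\gamma}\outdeg(v) \le 2\big(\sum_\alpha |\mathcal{F}_{\gamma,\alpha}| + (\text{directly colored edges incident to } V_\gamma)\big)$ and hence, summing over $\gamma$, that a constant fraction of $|\phi_0^{-1}(\bot)|$ is accounted for by fans or by edges that get colored outright.

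Next I would argue survival. A normal or sub-reverse $\alpha\beta$-fan is ``destroyed'' only in Case 1 or Case 2 above, each caused by the repair of exactly one other $\alpha\beta$-fan whose short alternating path ends inside it; since alternating paths within one color class $j$ of the conflict-graph coloring are vertex-disjoint and do not end at each other, repairing one fan destroys at most one other, so at least half the fans in each $\mathcal{F}_{\gamma,\alpha,\beta}$ survive this round of destruction. Case 3 (semi-destruction) is handled separately by the ``repair semi-destroyed sub-reverse fans'' step and never loses uncolored edges. In the randomized ($\ell(G)=1$) variant one must also account for a possible freeze; but by the lemma proven just above (every length-$T$ path has $\ge T/15$ empty edges w.h.p.), no freeze happens w.h.p., so conditioning on that event costs only a $1/\poly(n)$ additive loss. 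For every surviving fan, repairing it — via $\shift$ and an $\augment$ of a (possibly truncated) alternating path, exactly as in Section \ref{sec:notation} and Claim \ref{claim:repair-sub-reverse-fans} — colors at least one uncolored edge without uncoloring any. Summing, at least a constant fraction of $|\phi_0^{-1}(\bot)|$ edges get colored, so the while loop runs $O(\log n)$ times.

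The main obstacle is the accounting for reverse/sub-reverse fans: unlike a normal fan, a reverse $\alpha B$-fan with $k$ leaves carries $k$ uncolored edges $vx_1,\dots,vx_k$, and the merge step \emph{converts} two normal fans (each contributing one uncolored edge) into a reverse fan — so I must be careful that the decomposition into $\lfloor k/2\rfloor$ sub-reverse fans, each of which when repaired colors one uncolored edge, still captures a constant fraction of the uncolored edges that flowed into the merge, and that a semi-destroyed sub-reverse fan loses none of its uncolored edges permanently. Concretely I would track, for each $\gamma,\alpha$, the quantity ``number of uncolored edges incident to $V_{\gamma,\alpha}$ at the start of iteration $\alpha$'' and show it is preserved (up to the matching step, which only helps) through merging, and that each surviving sub-reverse fan pays for a constant number of these edges; combining with Claims \ref{claim:intersecting-fans}, \ref{claim:repair-sub-reverse-fans} and the destruction-counting above closes the argument.
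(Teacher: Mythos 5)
Your proposal follows the same route as the paper: orient each uncolored edge toward its higher-colored endpoint, charge each pair $(v,\alpha)$ with $v\in V_\gamma$ and $\alpha\in C(v)$ either to a matching edge, a fan in $\mathcal{F}_{\gamma,\alpha}$, or the repair of an earlier fan (a ``blame''), then argue at most half the fans are destroyed. The structure is right, but the blame count you assert is wrong for the new fan type, and that is precisely the point where this lemma differs from Lemma~\ref{lem:outerloop}. You claim ``each repaired fan can be blamed at most once,'' transplanting the two-mutually-exclusive-cases argument from the normal-fan setting. For a sub-reverse fan $F_{w,i}$, however, the center $w$ is \emph{not} in $V_\gamma$ (the leaves are), and repairing $F_{w,i}$ can change $M(\cdot)$ at up to five vertices: its two or three leaves, its center, and the far endpoint of the alternating path. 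So a single sub-reverse fan can be blamed more than once, and the inequality $\sum_{v\in V_\gamma}\outdeg(v)\le 2\bigl(\sum_\alpha|\mathcal{F}_{\gamma,\alpha}|+\text{(directly colored)}\bigr)$ does not hold as written. The paper uses a blame cap of $5$ per fan (noting $2$ suffices with care). There is a second, parallel miscount on the ``pay'' side: a sub-reverse fan with three leaves absorbs up to three $(v,\alpha)$ pairs, not one, so $|\mathcal{F}_{\gamma,\alpha}|$ must appear with a factor $3$ on that side of the inequality too. With both corrections one gets the paper's $\sum\outdeg\le\sum|M_{\gamma,\alpha}|+8\sum|\mathcal{F}_{\gamma,\alpha}|$ and a $1/16$ fraction of edges colored per iteration.

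Your closing paragraph does flag that the reverse-fan bookkeeping is the obstacle, but the proposed remedy --- tracking ``uncolored edges incident to $V_{\gamma,\alpha}$'' through the merge --- is a heavier invariant than needed and is left as a sketch rather than carried out. What the paper actually needs is only the two per-fan constants above (vertices-in-$V_{\gamma,\alpha}$ per fan, and maximum blames per fan), together with the observation you already have that repairing a surviving normal or sub-reverse fan colors one uncolored edge and destroys at most one other fan. The aside about conditioning on no ``freeze'' is harmless but out of scope for this lemma, which is a purely structural count independent of how blocking edges are chosen.
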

\begin{proof}

For each $uv \in \phi_0^{-1}(\bot)$, we orient the edge from $u$ to $v$ if the color of $u$ is smaller than that of $v$; otherwise we orient it from $v$ to $u$. Specifically, we orient the edge from $u$ to $v$ if $u \in V_\gamma$ and $v \in V_{\gamma'}$ such that $\gamma < \gamma'$. We will show that 

$$\sum_{\alpha =1}^{\Delta+1} |M_{\gamma,\alpha}| +  \sum_{\alpha = 1}^{\Delta+1}|\mathcal{F}_{\gamma,\alpha}| \geq  \sum_{u \in V_\gamma} \outdeg(u) / 8~.$$

For each $v \in V_\gamma$, let $C(v)$ denote the first $\outdeg(v)$ missing colors of $v$ w.r.t.~$\phi_\gamma$ so that $|C(v)| = \outdeg(v)$. For each $\alpha \in C(v)$, if $v \in V_{\gamma,\alpha}(\phi_{\gamma,\alpha})$, then there can be two cases. The first case is that an uncolored edge incident to $v$ will be in the matching $M_{\gamma,\alpha}$. The second case is that  $v$ will belong to a fan in $\mathcal{F}_{\gamma,\alpha}$. Recall that in the second case, the algorithm will first grow a normal $\alpha$-fan from $v$ and if the fan intersects with an existing fan, then $v$ will belong to  a reverse $\alpha$-fan after the merge step.

Otherwise, if $v \notin V_{\gamma,\alpha}(\phi_{\gamma,\alpha})$, it must be caused by either the repairing step of some normal or sub-reverse fan that was grown in iteration $\gamma$. Repairing a normal fan $F$ can only change the missing colors (i.e.~$M(\cdot)$) of its leaves, its center, and the other endpoint of the alternating path. So if $M(v)$ is changed by repairing $F$, we conclude that $v$ is the other end point of the alternating path of $F$ or $v$ is the center of $F$. Note that $v$ cannot be among $F$'s leaves since $V_\gamma$ is an independent set. On the other hand, repairing a sub-reverse fan $F$ can only changes the missing colors of at most five vertices (this corresponds to the case of repairing a sub-reverse fan with three leaves; in fact, a more careful argument shows that a sub-reverse fan can be blamed at most twice). 

Therefore, if $v \notin V_{\gamma,\alpha}(\phi_{\gamma,\alpha})$, we can blame it to the repair of a normal or sub-reverse fan $F$ grown in iteration $\gamma$. We have just argued that each fan $F$ can be blamed at most five times. Furthermore,  a normal or sub-reverse fan in $\mathcal{F}_{\gamma,\alpha}$ has at most one or three vertices in $V_{\gamma,\alpha}(\phi_{\gamma,\alpha})$ respectively. An edge in $M_{\gamma,\alpha}$ has one vertex in $V_{\gamma,\alpha}(\phi_{\gamma,\alpha})$. Hence, we have the following relationship
\begin{align*}
\sum_{u \in V_\gamma} \outdeg(u) =  \sum_{u \in V_\gamma} |C(u)| &\leq \sum_{\alpha=1}^{\Delta+1} |M_{\gamma,\alpha}|+3\sum_{\alpha=1}^{\Delta + 1}|\mathcal{F}_{\gamma,\alpha}| + \mbox{(\#blames)} \\
&\leq \sum_{\alpha=1}^{\Delta+1} |M_{\gamma,\alpha}| + 8 \sum_{\alpha=1}^{\Delta + 1}|\mathcal{F}_{\gamma,\alpha}|~.
 \end{align*}

The second inequality is because $\mbox{(\#blames)} \leq 5\sum_{\alpha=1}^{\Delta + 1}|\mathcal{F}_{\gamma,\alpha}|$. Therefore,
$$\sum_{\alpha =1}^{\Delta+1} |M_{\gamma,\alpha}| +  \sum_{\alpha = 1}^{\Delta+1}|\mathcal{F}_{\gamma,\alpha}| \geq  \sum_{u \in V_\gamma} \outdeg(u) / 8~.$$

Note that at most half of the (normal and sub-reverse) fans are destroyed, since repairing a fan can only destroy at most one other fan.  If a fan is not destroyed, then one uncolored edge in the fan must become colored. Therefore, at least $ \sum_{\gamma=1}^{\Delta+1} \sum_{\alpha=1}^{\Delta+1}|M_{\gamma,\alpha}| + \sum_{\gamma = 1}^{\Delta + 1} \sum_{\alpha = 1}^{\Delta + 1}  |\mathcal{F}_{\gamma,\alpha}|/2$ uncolored edges are colored. The total number of edges that are colored during an iteration of the while loop is at least
$$ \sum_{\gamma=1}^{\Delta+1} \sum_{\alpha=1}^{\Delta+1}|M_{\gamma,\alpha}| + \sum_{\gamma = 1}^{\Delta + 1} \sum_{\alpha = 1}^{\Delta + 1} |\mathcal{F}_{\gamma,\alpha}| / 2  \geq  \sum_{\gamma = 1}^{\Delta + 1} \sum_{v \in V_\gamma} \outdeg(v) / 16 = |\phi_0^{-1}(\bot)| /16~. $$

This implies at least a fraction $1/16$ of the uncolored edges become colored during the iteration. Thus, after $O(\log n)$ iterations, all edges are colored.

\end{proof}


It is easy to check that the number of rounds is $O\left(\log n \cdot \Delta^2 \left( \Delta^4 + \Delta \cdot T \cdot \log n\right) \right)$ and the number of parallel augmentation in Line  \ref{ln:parallel-agumentation-2a} and Line \ref{ln:parallel-agumentation-2b} is $t = O(\Delta^3 \log n)$. 

If we opt for a randomized algorithm with $\ell(G)=1$ or a deterministic algorithm with $\ell(G)=O(\log n/(\log \lambda + \log \log n))$, then we set $T = O(\Delta^7 t)$ or $T=2t \lambda$ respectively as in the previous section. Since we managed to reduce $t$ by a factor $\Delta^2$, we obtain the following improvements.

\begin{theorem}
There exists a deterministic distributed $\Delta + O(\log n / (\log \lambda+\log \log n))$-edge-coloring algorithm  that  runs in $O(\Delta^6 \cdot \lambda \log^3 n)$ rounds. Furthermore,
 there exists a deterministic distributed algorithm that colors all the edges using $\Delta + 2$ colors and runs in $O(\Delta^{13} \cdot  \log^3 n)$ rounds.
\end{theorem}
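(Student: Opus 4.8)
The plan is to obtain both statements by running Algorithm~\ref{alg:fast} with the two blocking-edge strategies developed in the load-balancing subsection (Section~3.2) and then recoloring the $\star$-edges. The key point is that both load guarantees established there — the pessimistic-estimator bound $\ell(G)=O(\log n/(\log\lambda+\log\log n))$ for the greedy rule, and the bound $\ell(G)=1$ for the ``random empty edge'' rule — depend only on three structural facts about the repair process: (i) an upper bound $t$ on the number of executions of the blocking-edge-selecting step; (ii) within each such execution the alternating paths that are actually augmented (including truncated prefixes $P_v(i)$ for long paths and the paths handled when repairing semi-destroyed sub-reverse fans) are pairwise vertex-disjoint, so each vertex gains at most one new $\star$-edge per execution; and (iii) every truncated alternating path has length at least $T$, hence at least $T$ candidate positions for its blocking edge. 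For Algorithm~\ref{alg:fast}, (iii) is the definition of truncation, (ii) is precisely why the conflict graph $G_{\mathcal{F}_{\gamma,\alpha,\beta}}$ is built and properly colored before any repair (and why Line~\ref{ln:parallel-agumentation-2b} is safe), and (i) holds with $t=O(\Delta^{3}\log n)$: the while loop runs $O(\log n)$ times by the preceding lemma, and the nested loops over $\gamma,\alpha,\beta$ together with the $O(1)$ conflict colors contribute $O(\Delta^{3})$ executions of Lines~\ref{ln:parallel-agumentation-2a}--\ref{ln:parallel-agumentation-2b}.

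For the first statement I would use the greedy blocking-edge rule with $T=2t\lambda=O(\lambda\Delta^{3}\log n)$. The pessimistic-estimator lemma then applies verbatim and yields $\ell(G)=O(\log n/(\log\lambda+\log\log n))$ after the last repair. Since the $\star$-edges induce a subgraph of maximum degree $\ell(G)$, recoloring them with $2\ell(G)-1$ fresh colors by the algorithm of Panconesi--Rizzi~\cite{panconesi-rizzi} in $O(\ell(G)+\log^{*}n)$ rounds produces a proper coloring with $\Delta+1+(2\ell(G)-1)=\Delta+O(\log n/(\log\lambda+\log\log n))$ colors. For the running time I would substitute $T=O(\lambda\Delta^{3}\log n)$ into the bound $O(\log n\cdot\Delta^{2}(\Delta^{4}+\Delta\cdot T\cdot\log n))$ already established for Algorithm~\ref{alg:fast}: the term $\Delta\cdot T\cdot\log n$ dominates, giving $O(\lambda\Delta^{6}\log^{3}n)$, and the additive recoloring cost and the $O(\Delta^{4}+\log^{*}n)$ per-iteration preprocessing are absorbed.

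For the second statement — the $\Delta+2$ bound, which comes from the $\ell(G)=1$ construction — I would instead use the ``random empty edge'' rule with $T=K\Delta^{7}\cdot t=O(\Delta^{10}\log n)$. The lemma guaranteeing $|E_0(P)|\ge T/15$ for every length-$T$ path transfers unchanged: its proof is the coupling-plus-Chernoff estimate followed by a union bound over the at most $n\Delta^{T-1}$ paths of length $T$, using only facts (i)--(iii) above and $T=\Omega(\log n)$. Hence, with high probability, every truncated alternating path can place its blocking edge on an edge both of whose endpoints have load zero, so $\ell(G)=1$ and the $\star$-edges form a matching; recoloring that matching with one extra color gives a $(\Delta+2)$-edge-coloring. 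Reading the running time off $O(\log n\cdot\Delta^{2}(\Delta^{4}+\Delta\cdot T\cdot\log n))$ with $T=O(\Delta^{10}\log n)$ gives $O(\Delta^{13}\log^{3}n)$.

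The step I expect to be the main obstacle is checking fact (ii) carefully for Algorithm~\ref{alg:fast}, where the presence of two fan types and the extra repair phases makes the disjointness argument more delicate than in Algorithm~\ref{alg:main}. Concretely, one must verify that when a batch of normal and sub-reverse $\alpha\beta$-fans sharing a conflict color is repaired, the paths that are actually augmented — truncated prefixes included — are pairwise vertex-disjoint, so that no vertex accrues two $\star$-edges in a single execution and the per-path greedy/random choices do not interact, and that the semi-destroyed-repair step likewise augments only pairwise-disjoint paths. This should follow from the claims already proved in the repairing-fans subsection (alternating paths of $\alpha\beta$-fans are pairwise non-intersecting; a sub-reverse $\alpha\beta$-fan is met by such a path only at its last vertex; repairing a sub-reverse fan does not destroy its siblings), but it must be confirmed that truncation and the Case~3 handling do not break them. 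Once that is in place, both load bounds and both round counts follow mechanically.
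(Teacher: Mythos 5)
Your proposal is correct and follows the paper's approach exactly: the paper only states, without elaboration, that $t = O(\Delta^3\log n)$ and that the round complexity of Algorithm~\ref{alg:fast} is $O(\log n\cdot\Delta^2(\Delta^4 + \Delta\cdot T\cdot\log n))$, then sets $T = 2t\lambda$ for the greedy rule and $T = O(\Delta^7 t)$ for the random-empty-edge rule; your substitutions reproduce the claimed $O(\lambda\Delta^6\log^3 n)$ and $O(\Delta^{13}\log^3 n)$ bounds. Two small clarifications. The concern you flag about disjointness of paths in the semi-destroyed-repair step is vacuous: that step augments no alternating path and places no blocking edge, since the augmentation that made a sub-reverse fan $F_w$ semi-destroyed already happened in the preceding per-color repair step, so $\alpha$ is already missing at the center $w$ and the semi-destroyed repair simply colors one uncolored edge of $F_w$ with $\alpha$; counting this step toward $t$ is harmless over-counting, and your fact (ii) only needs to be verified for the conflict-colored batches, where it follows from non-intersection of $\alpha\beta$-alternating paths exactly as in Algorithm~\ref{alg:main}. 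Also, the word ``deterministic'' in the theorem's second clause is a typo for ``randomized,'' as your w.h.p.\ argument correctly assumes and as both the surrounding discussion (``a randomized algorithm with $\ell(G)=1$'') and the bipartite analog of this theorem confirm.
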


%
%

In general, we save a factor $\Delta^2$ for both the deterministic and randomized algorithms. One may ask if  we can make further improvement. We next show that it is possible to save another factor $\Delta^2$ if the graph is bipartite.

\subsection{Algorithm for Bipartite Graphs}

\paragraph{Algorithm outline} We now give an even faster and simpler algorithm for edge coloring bipartite graphs. This algorithm  is also based on a similar idea. The pseudo-code is outlined in Algorithm \ref{alg:fast-bipartite}.   

We first fix an iteration of the {\bf while loop}. Then, fix an iteration $\alpha$ of the {\bf for loop} in Line \ref{ln:iteration-alpha-3}. We first find a maximal matching among the uncolored edges $uv$ where $\alpha \in M(u) \cap M(v)$ in  $O(\Delta + \log^{*} n)$ rounds \cite{BEK09}. We color the matching with $\alpha$.  Then, let $V_\alpha$ the independent set of the remaining incomplete vertices that are  still missing $\alpha$.

For bipartite graphs, we however do not need to grow a fan from each vertex in $V_\alpha$. We consider uncolored edges $vu$ for each $v$ in $V_\alpha$.  Some of these edges may share an end point that is not in $V_\alpha$. We can merge these intersecting edges to form reverse $\alpha$-fans. This can easily be implemented in a constant  number of rounds. Now, we are  left with a set of disjoint uncolored \emph{$\alpha$-edges} $Y_\alpha$ that have one end point in $V_\alpha$ and a set of reverse $\alpha$-fans $\mathcal{F_\alpha}$.  All the reverse $\alpha$-fans and $\alpha$-edges are vertex-disjoint. They are also maximal, i.e., all vertices in $V_\alpha$ belong to a reverse $\alpha$-fan or an $\alpha$-edge.

In the normal setting, we can color an edge $vu$ in $Y_\alpha$ as follows. Suppose $\alpha \in M(v)$ and $\beta \in M(u)$. We call $vu$ an \emph{$\alpha \beta$-edge}. First, find the maximal alternating $\alpha \beta$-path $P_u$ from $u$. We observe that because the graph is bipartite, this path cannot end at $u$ or $v$. Hence, we can perform $\augment(P_u)$ and color $vu$ with $\alpha$.

\paragraph{Implementation in the distributed setting} Recall that we currently fix an iteration $\alpha$ in Line \ref{ln:iteration-alpha-3}. We again split reverse fans into sub-reverse  fans. We use \emph{$\alpha \beta$-items} to refer to $\alpha \beta$-edges and sub-reverse $\alpha \beta$-fans.  We aim to repair $\alpha \beta$-items, i.e., color an $\alpha \beta$-edge or repair a sub-reverse $\alpha \beta$-fan in iteration $\beta$ of the for loop in Line \ref{ln:iteration-beta-3}. We want to color $\alpha \beta$-items  simultaneously and efficiently in the \local model. We again truncate an alternating path if its length exceeds $T$ and pick a blocking edge  accordingly  (based on whether the goal is a deterministic algorithm with $\ell(G) = \log n /(\log \lambda + \log \log n)$ or a randomized algorithm with $\ell(G) = 1$). 

A truncated alternating path $P_v(T)$ where $|P_v| > T$ cannot intersect another $\alpha \beta$-item. To this end, we build a conflict graph $G_{\alpha,\beta}$ in which the vertices are $\alpha \beta$-items. There is an edge between two vertices if a truncated alternating path $P$ where $|P| \leq T$ of an $\alpha \beta$-item ends at another $\alpha \beta$-item. With a similar argument as in the previous sections, this graph has bounded arboricity  and therefore can be colored using $O(1)$ colors in $O(T\log n)$ rounds \cite{barenboim2010sublogarithmic}. We then iterate through each color class and repair $\alpha \beta$-edges and  sub-reverse $\alpha \beta$-fans with this color. Since in each color class, the alternating paths neither intersect nor run into another item, we can repair them simultaneously. Similar to the previous algorithm, we also need to repair semi-destroyed sub-reverse $\alpha \beta$-fans.

\begin{algorithm}
\caption{Distributed Item-Repair for Bipartite Graphs}\label{alg:fast-bipartite}
\begin{algorithmic}[1]\small
 \While{there exists uncolored edges}  \label{ln:while-loop-3}
	
	\For{$\alpha =$ color $1,2,\ldots,\Delta$}  \label{ln:iteration-alpha-3}
		\State Let $M_\alpha$ be a maximal matching  of $E_{\alpha} = \{ vu \in E \mid \text{$vu$ is uncolored and }  \alpha \in M(v) \cap M(u) \}$.
		\State Color $M_\alpha$ with $\alpha$. \label{ln:color_maximal-3}
		\State Let $V_{\alpha}$ be the set of incomplete vertices that now miss the color $\alpha$.  
		\State For each $v \in V_\alpha$, pick an incident uncolored edge $vu$. Activate $vu$.
		\State Merge intersecting active edges into reverse $\alpha$-fans.

			 \For{$\beta = $ color $1, 2, \ldots, \Delta + 1$} \label{ln:iteration-beta-3}  \label{ln:maximal-disjoint-3}  
				\State Build a conflict graph $G_{{\alpha \beta}}$ for $\alpha \beta$-edges and sub-reverse $\alpha \beta$-fans.
				\State Color $G_{\alpha \beta}$ using $O(1)$ colors.
				\For{$\eta =$ color $1,2,\ldots O(1)$}
					\State Repair undestroyed $\alpha \beta$-edges and sub-reverse fans in $G_{\alpha \beta}$ with color $\eta$. \label{ln:parallel-agumentation-3a}
				\EndFor
				\State Repair semi-destroyed sub-reverse  fans in $G_{{\alpha \beta}}$. \label{ln:parallel-agumentation-3b}
			\EndFor
	\EndFor

\EndWhile
\end{algorithmic}
\end{algorithm}

Let $V_\alpha(\phi)$ be the set of incomplete vertices  missing $\alpha$ with respect to the partial coloring $\phi$. Let $\phi_0$ be the partial coloring at the beginning of a fixed while loop and $\phi_\alpha$ be the partial coloring right before the iteration $\alpha$ of the loop in Line \ref{ln:iteration-alpha-3}.

\begin{lemma}
Each iteration of the while loop of Algorithm \ref{alg:fast-bipartite} colors a constant fraction  of uncolored edges.  Thus, it takes $O(\log n)$ iterations to color all the edges.
\end{lemma}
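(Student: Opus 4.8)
The plan is to follow the same accounting argument used in the proof of the analogous lemma for Algorithm~\ref{alg:fast}, but specialized to the bipartite case where the structure is simpler. As before, I would orient each uncolored edge $uv \in \phi_0^{-1}(\bot)$ arbitrarily, say from $u$ to $v$, and show that for a fixed iteration of the while loop, the total number of ``progress objects'' created --- namely the matched edges $\sum_{\alpha=1}^{\Delta}|M_\alpha|$ plus the $\alpha\beta$-items (i.e.\ $\alpha\beta$-edges and sub-reverse fans) $\sum_{\alpha}|\mathcal{F}_\alpha| + \sum_\alpha |Y_\alpha|$ --- is at least a constant fraction of $\sum_{uv \in \phi_0^{-1}(\bot)} 1$. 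Since the graph is bipartite and we no longer grow normal fans from vertices in $V_\alpha$, the only way a vertex $v$ that ``should'' miss $\alpha$ (i.e.\ $\alpha$ is among its first $\outdeg(v)$ missing colors w.r.t.\ $\phi_\alpha$) fails to be in $V_\alpha(\phi_\alpha)$ is that some earlier repair of an $\alpha\beta$-edge or a sub-reverse fan (for a smaller color $\alpha'$) recolored an incident edge of $v$ to $\alpha$. I would then bound the number of such ``blames'' per item: repairing an $\alpha'\beta'$-edge changes $M(\cdot)$ only at the two endpoints of the edge and at the endpoint of the augmented alternating path, and repairing a sub-reverse fan affects at most a constant number of vertices (as noted in the Algorithm~\ref{alg:fast} analysis, at most five, or with more care at most twice). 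Hence each item can be blamed $O(1)$ times, giving an inequality of the form
\[
\sum_{u} \outdeg(u) \;\le\; \sum_{\alpha=1}^{\Delta}|M_\alpha| \;+\; O(1)\cdot\Big(\sum_{\alpha}|Y_\alpha| + \sum_\alpha|\mathcal{F}_\alpha|\Big),
\]
where each matched edge and each $\alpha$-edge contains one relevant vertex and each sub-reverse fan contains at most three.

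Next I would argue that a constant fraction of the created items actually make progress. As in the earlier lemma, repairing one item (an $\alpha\beta$-edge or a sub-reverse fan) can destroy at most one other item via an alternating path of length $\le T$ ending at it; semi-destroyed sub-reverse fans are handled separately and still colored. Therefore at least half of the items survive, and each surviving item colors at least one previously uncolored edge (coloring the $\alpha\beta$-edge, or coloring $wx_{2i}$ or $wx_{2i+1}$ in the sub-reverse fan). Every matched edge in $M_\alpha$ colors an uncolored edge directly. Combining this with the counting inequality above, the number of edges colored in one iteration of the while loop is at least $\Omega\big(\sum_u \outdeg(u)\big) = \Omega(|\phi_0^{-1}(\bot)|)$, so $O(\log n)$ iterations suffice. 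The running-time bookkeeping (maximal matching, merging, conflict-graph coloring, and the truncated augmentations) is routine and already spelled out in the surrounding text.

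The main obstacle, and the place where the bipartite structure genuinely helps, is verifying that the alternating paths used to repair $\alpha\beta$-edges behave well: specifically, the maximal $\alpha\beta$-path $P_u$ from the free endpoint $u$ of an $\alpha\beta$-edge $vu$ cannot end at $u$ or at $v$. For $u$ this is because $u$ already touches an $\alpha$-colored edge (the path's first edge) and misses $\beta$, so by bipartiteness parity along the path forces the far endpoint to lie on the opposite side and miss $\alpha$, not $\beta$; a symmetric parity argument rules out $v$. This is exactly what lets us always $\augment(P_u)$ and then color $vu$ with $\alpha$ without creating a new conflict at $v$ or $u$, so there is no ``Case~2/Case~3'' complication for plain $\alpha\beta$-edges --- only for sub-reverse fans, which are handled by the semi-destroyed-fan repair. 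I would also double-check the merge step: the edges activated in Line~7 that share a non-$V_\alpha$ endpoint are merged into reverse $\alpha$-fans, and since $V_\alpha$ is an independent set and each such shared endpoint is distinct, this is a purely local constant-round operation producing vertex-disjoint, maximal $\alpha$-items, which is what the counting argument needs.
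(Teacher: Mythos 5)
Your proposal matches the paper's proof in all essentials: both count, for each uncolored-degree slot of each vertex $v$, whether $v$ actually lands in $V_\alpha(\phi_\alpha)$ (and hence in some $\alpha$-item); both blame the failures on earlier repairs and matching colorings, bound blames by $O(1)$ per item, note that at least half the $\alpha\beta$-items survive (an alternating path of length $\le T$ destroys at most one other item, and semi-destroyed sub-reverse fans are still repaired), and both invoke bipartite parity to rule out the maximal $\alpha\beta$-path returning to $u$ or $v$. Two small slips that do not affect the structure: the set $C(v)$ of colors should be fixed with respect to $\phi_0$ (the start of the while iteration), not $\phi_\alpha$; and the paper does not orient the uncolored edges here (it sums the full uncolored degree $d(v)$), and a matched edge in $M_\alpha$ actually contributes \emph{two} vertices to $V_\alpha(\phi_\alpha)$ rather than one — both points only change the hidden constant.
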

\begin{proof}
Let $d(v)$ be the number of uncolored edges incident to $v$ with respect to $\phi_0$. Let $C(v)$ be the first  $d(v)$ missing color at $v$ with respect to $\phi_0$. Let   $\mathcal{I}_\alpha$ be the set that  includes all $\alpha \beta$-items and the edges in the matching $M_\alpha$.

Consider $\alpha \in C(v)$. If $v \in V_\alpha(\phi_\alpha)$, then $v$ belongs to an element in  $\mathcal{I}_\alpha$. Specifically,  $v$ may belong to an edge in the matching $M_\alpha$, an $\alpha \beta$-edge or a sub-reverse $\alpha \beta$-fan for some $\beta$. 

If $v \notin V_{\alpha}(\phi_{\alpha})$, then we can blame this to an element in $\mathcal{I}_{\alpha'}$ for $\alpha' < \alpha$. In particular, this could be caused by either a) the repair of an $\alpha'$-item or b) the coloring of some edge in $M_{\alpha'}$. 

The coloring step of each edge in a matching $M_{\alpha'}$ can be blamed at most twice since it can only change the missing colors of its endpoints. The repair of an $\alpha \beta$-edge can be blamed at most three times, since this can only change the missing colors of two of its endpoinds and the other endpoint of its alternating path. Finally, the repair of a sub-reverse $\alpha \beta$-fan can be blamed at most five times. Thus,  each element in $\cup_{\alpha=1}^{\Delta+1} \mathcal{I}_{\alpha}$ can be blamed at most five times overall.

We also note that an edge in $M_\alpha$ has two vertices in $V_\alpha(\phi_\alpha)$. An $\alpha$-edge has exactly one vertex in $V_\alpha(\phi_\alpha)$ and a sub-reverse $\alpha$-fan has at most three vertices in $V_\alpha(\phi_\alpha)$.  Therefore, we have the following relationship

\begin{align*}
\sum_{u \in V} d(u) =  \sum_{u \in V} |C(u)| &\leq 3 \sum_{\alpha=1}^{\Delta }|\mathcal{I}_{\alpha}| + \mbox{(\#blames)} \\
&\leq 8 \sum_{\alpha=1}^{\Delta }|\mathcal{I}_{\gamma,\alpha}| && \mbox{(\#blames)} \leq 5\sum_{\alpha=1}^{\Delta + 1}|\mathcal{I}_{\gamma,\alpha}|~.
 \end{align*}
Using the same argument as in previous sections,  at least half of $\alpha$-items are repaired.  Therefore, the number of  edges being colored in this while iteration is at least 
\begin{align*}
\sum_{\alpha = 1}^{\Delta }|\mathcal{I}_\alpha |/2 & \geq \sum_{v \in V} d(v)/16 = |\phi^{-1}_0(\bot)| /8 ~.
\end{align*}
\end{proof}

The number of parallel augmentations in Line  \ref{ln:parallel-agumentation-3a} and Line \ref{ln:parallel-agumentation-3b} is $t=O(\log n \cdot \Delta^2 )$. Similarly, by setting $T=O(\lambda t)$ or $T = O(t \Delta^7)$, we achieve the following respective improvements. 
\begin{theorem}
Consider bipartite graphs. There exists a deterministic distributed $\Delta + O(\log n / (\log \lambda +\log \log n)$-edge-coloring algorithm  that  runs in $O(\Delta^4 \cdot \lambda \log^3 n)$ rounds. Furthermore, there exists a randomized, $(\Delta+1)$-edge-coloring distributed algorithm that runs in $O(\Delta^{11} \log^3 n)$ rounds.
\end{theorem}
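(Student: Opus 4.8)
The plan is to run Algorithm~\ref{alg:fast-bipartite} and re-use, essentially verbatim, the three ingredients behind the analyses of Algorithms~\ref{alg:main} and~\ref{alg:fast}: that each {\bf while}-loop iteration colors a constant fraction of the uncolored edges (the lemma above), so $O(\log n)$ iterations suffice; a count of how often the parallel-augmentation steps (Lines~\ref{ln:parallel-agumentation-3a} and~\ref{ln:parallel-agumentation-3b}) run; and the two load-balancing results for placing blocking edges, which depend only on that count and on the truncation length $T$. For the count, within one {\bf while} iteration these lines fire once for each of the $\le\Delta$ choices of $\alpha$, each of the $\le\Delta+1$ choices of $\beta$, each of the $O(1)$ conflict-graph colors $\eta$, plus one semi-destroyed-repair pass, so $t=O(\Delta^2\log n)$ in total --- a factor $\Delta^2$ below Algorithm~\ref{alg:fast}. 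As before, the alternating paths processed in a single firing are vertex-disjoint, so each vertex's $\star$-degree rises by at most one per firing and is crossed by at most $t$ alternating paths over the whole run.

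For the round complexity I would sum one {\bf while} iteration: for each $\alpha$, the maximal matching costs $O(\Delta+\log^{*}n)$ rounds~\cite{BEK09} and merging the activated edges into reverse $\alpha$-fans is $O(1)$ rounds; then for each $\beta$, building and $O(1)$-coloring the bounded-arboricity conflict graph $G_{\alpha\beta}$ takes $O(T\log n)$ rounds~\cite{barenboim2010sublogarithmic} (each conflict edge is a path of length $\le T$ in $G$), plus $O(1)$ repair rounds of $O(T)$ each and an $O(T)$-round semi-destroyed repair. This is $O(\Delta T\log n)$ per $\alpha$, hence $O(\Delta^2 T\log n)$ per {\bf while} iteration and $O(\Delta^2 T\log^2 n)$ overall. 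Substituting $T=2t\lambda=O(\lambda\Delta^2\log n)$ and invoking the greedy/pessimistic-estimator lemma gives $\ell(G)=O(\log n/(\log\lambda+\log\log n))$ and $O(\lambda\Delta^4\log^3 n)$ rounds; substituting $T=\Theta(\Delta^7 t)=\Theta(\Delta^9\log n)$ and invoking the union-bound-over-all-length-$T$-paths lemma (there are $\le n\Delta^{T-1}$ such paths) gives $\ell(G)=1$ w.h.p.\ and $O(\Delta^{11}\log^3 n)$ rounds. In the first case the $\star$-edges induce a graph of maximum degree $\ell(G)$ and are recolored with $2\ell(G)-1$ fresh colors via Panconesi--Rizzi~\cite{panconesi-rizzi}; in the second they form a matching recolored with a single fresh color.

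The one genuinely new point --- and where bipartiteness is used --- is that the partial coloring here needs only $\Delta$ colors, so that $\ell(G)=1$ yields $\Delta+1$ colors overall (and $\ell(G)=O(\log n/(\log\lambda+\log\log n))$ yields $\Delta+O(\log n/(\log\lambda+\log\log n))$). To see this, observe that when Algorithm~\ref{alg:fast-bipartite} repairs an $\alpha\beta$-item --- an $\alpha\beta$-edge $vu$, or a sub-reverse $\alpha\beta_i$-fan with center $w$ and leaves $x_{2i},x_{2i+1}$ --- the maximal $\alpha\beta$-alternating path it augments starts with a color-$\alpha$ edge at the endpoint missing $\beta$; it cannot return to that endpoint (which has no $\beta$-edge and only one $\alpha$-edge), and had it ended at the item's ``target'' vertex it would, together with the item's uncolored edge, close an odd cycle, which is impossible in a bipartite graph. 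Hence every repair succeeds with the original $\Delta$-color palette, no Vizing-style $(\Delta+1)$-st color is ever needed, and the destruction/blame bookkeeping of the preceding lemma is unchanged. I expect this verification --- together with the routine check that after the split into sub-reverse fans a reverse fan only requires $\lfloor k/2\rfloor$ of its center's missing colors, which a $\Delta$-palette always supplies --- to be the only delicate step; everything else is a rescaling of the earlier analyses.
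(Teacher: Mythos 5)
Your proposal is correct and follows the paper's route: run Algorithm~\ref{alg:fast-bipartite}, reuse the constant-fraction-progress lemma and the two blocking-edge load-balancing lemmas, observe $t=O(\Delta^2\log n)$, and plug in $T=O(\lambda t)$ or $T=O(\Delta^7 t)$ to get $O(\lambda\Delta^4\log^3 n)$ and $O(\Delta^{11}\log^3 n)$ rounds respectively; the round-count bookkeeping checks out. One small difference of emphasis: the paper's own remark justifies the $\Delta$-color palette by noting that every vertex for which $m(\cdot)$ must be nonempty is incident to an uncolored edge (hence has at most $\Delta-1$ colored incident edges), whereas you lead with the odd-cycle argument showing the alternating path of an $\alpha\beta$-edge cannot terminate at the opposite endpoint $v$. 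Both facts are genuinely needed — the first guarantees missing colors exist, the second guarantees the repair succeeds without a Vizing-style fallback — and your write-up implicitly covers the first via the $\lfloor k/2\rfloor$ observation for reverse-fan centers; so this is a complementary phrasing of the same argument, not a different proof.
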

Note that for the randomized case, we use one fewer color in the randomized algorithm since the coloring algorithm requires $m(v)$ to be non-empty only when $v$ is incident to an uncolored edge. Hence, a palette of $\Delta$ colors, excluding the special color $\star$ is sufficient.

\bibliographystyle{alpha}
\bibliography{ref}

\end{document}